\documentclass[aps,prd,twocolumn,superscriptaddress,preprintnumbers,floatfix,nofootinbib,notitlepage,showkeys,longbibliography]{revtex4-1}
\usepackage{multirow}
\usepackage{siunitx}

\usepackage{graphicx,times}
\usepackage{latexsym}
\usepackage{mathtools}
\usepackage{booktabs}

\usepackage{amsmath,amssymb,amsbsy,amsfonts,bbm}
\usepackage{array}
\usepackage{bm}
\usepackage{graphics}
\usepackage{mathrsfs}
\usepackage{xcolor}
\usepackage{cancel}
\usepackage[normalem]{ulem}
\usepackage{makecell}
\usepackage{hyperref}
\usepackage[capitalise]{cleveref}

\newcommand{\<}{\langle}
\renewcommand{\>}{\rangle}
\newcommand{\tr}{\operatorname{tr}}
\newcommand{\Tr}{\operatorname{Tr}}
\renewcommand{\d}{\mathrm{d}}
\newcommand{\e}{\operatorname{e}}
\newcommand{\ddx}{\frac{\d}{\d x}}

\newcommand{\sch}[1]{}

\newcommand\TopRule{\Xhline{0.08em}}

\newcommand\MidRule{\Xhline{0.03em}}
\newcommand\BotRule{\Xhline{0.08em}}


\usepackage{amsthm}
\theoremstyle{plain}
\newtheorem*{thm*}{\sf THEOREM}

\begin{document}

\title{Hamiltonian models of lattice fermions solvable by the meron-cluster algorithm}

\author{Hanqing Liu}
\author{Shailesh Chandrasekharan}
\affiliation{Department of Physics, Box 90305, Duke University, Durham, NC 27708, USA}
\author{Ribhu K. Kaul}
\affiliation{Department of Physics \& Astronomy, University of Kentucky, Lexington, KY 40506, USA}

\begin{abstract}
We introduce a half-filled Hamiltonian of spin-half lattice fermions that can be studied with the efficient meron-cluster algorithm in any dimension. As with the usual bipartite half-filled Hubbard models, the na\"ive $U(2)$ symmetry is enhanced to $SO(4)$. On the other hand our model has a novel spin-charge flip ${\mathbb Z}^C_2$ symmetry which is an important ingredient of free massless fermions. In this work we focus on one spatial dimension, and show that our model can be viewed as a lattice-regularized two-flavor chiral-mass Gross-Neveu model. Our model remains solvable in the presence of the Hubbard coupling $U$, which maps to a combination of Gross-Neveu and Thirring couplings in one dimension. Using the meron-cluster algorithm we find that the ground state of our model is a valence bond solid when $U=0$. From our field theory analysis, we argue that the valence bond solid forms inevitably because of an interesting frustration between spin and charge sectors in the renormalization group flow enforced by the ${\mathbb Z}^C_2$ symmetry. This state spontaneously breaks translation symmetry by one lattice unit, which can be identified with a $\mathbb{Z}_2^\chi$ chiral symmetry in the continuum. 
We show that increasing $U$ induces a quantum phase transition to a critical phase described by the $SU(2)_1$ Wess-Zumino-Witten theory. The quantum critical point between these two phases is known to exhibit a novel symmetry enhancement between spin and dimer. Here we verify the scaling relations of these correlation functions near the critical point numerically. Our study opens up the exciting possibility of numerical access to similar novel phase transitions in higher dimensions in fermionic lattice models using the meron-cluster algorithm.
\end{abstract}

\maketitle

\section{Introduction}


The connections between lattice models of quantum many body physics and continuum field theories remains a forefront topic of research in both high energy and condensed matter physics. In high energy physics one often starts with a continuum field theory, and then introduces a lattice to enable non-perturbative numerical computations. Even though the lattice is viewed as a calculational artifice, it has lead to many profound physical insights. In condensed matter physics on the other hand, one starts with the lattice, which is derived from the physical crystal structure and the connection to continuum field theories emerges in the long distance physics close to a critical point. Despite these contrasting motivations, there are recurring common themes across the two approaches leading to a wonderful synergistic exchange of ideas.    

Since our work here may be of interest to both communities, we explain our motivations from both viewpoints in turn. In particle physics, there is a great interest in mechanisms through which fermions acquire masses. While fermion masses are usually thought to arise from fermion bilinear operators, it is well known that strong four-fermion couplings can also generate masses through spontaneous symmetry breaking \cite{Rosenstein:1990nm,ZinnJustin:1991yn}. The role of four-fermion couplings is still poorly understood since such couplings are irrelevant perturbatively at the free fermion fixed point in two or more spatial dimensions and have to be of finite strength when they generate a mass, requiring a non-perturbative analysis. From a Lorentz-symmetry point of view the interactions constructed with scalar or pseudo-scalar fermion bilinears are referred to as Gross-Neveu (GN) couplings \cite{Gross:1974jv}, while those constructed with vector or pseudo-vector fermion bilinears are called Thirring couplings \cite{Thirring:1958in}. These relativistic four-fermion field theories can be studied through the following Lagrangian
\begin{equation}
\label{eq:contL}
{\cal L}= \overline{\chi}_\alpha\gamma^\mu\partial_\mu\chi_\alpha + {\cal L}_{\rm GN} + {\cal L}_{\rm Thirring},
\end{equation}
where we use the Euclidean signature and we will keep this convention through the paper. The case with two flavors of Dirac fermions that is relevant to our paper is discussed in \cref{app:lagrangian}.
The fixed point structure of renormalization group (RG) flows in the space of these couplings is rich and reveals that the GN and Thirring couplings can have very different physics at strong couplings depending on the number of fermion flavors especially in $2+1$ dimensions \cite{Gies:2010st,Janssen:2012pq,PhysRevD.92.085046}.

In condensed matter physics one usually works with Hamiltonians instead of Lagrangians, which are motivated more naturally by the physics of the underlying material. For many important systems, such as one dimensional metals and two and three dimensional semi-metals, the long distance physics is described by relativistic four-fermion field theories. These Hamiltonians usually have two parts: an electron hopping term that in certain cases can result in massless Dirac fermions and an electron-electron interaction term that is modeled by a four-fermion interaction. These kind of models are typified by the iconic Hubbard model, 
\begin{equation}
\label{eq:hubb}
    H =- \sum_{\langle ij \rangle}\left ( t_{ij} c_{i\alpha}^\dagger c_{j\alpha} +{\rm h.c.} \right ) + U \sum_i n_{i\uparrow}n_{i\downarrow},
\end{equation}
where $c_{i\alpha}$ destroys an electron on lattice site $i$ with spin $\alpha$ and $n_{i\alpha}= c_{i\alpha}^\dagger c_{i\alpha}$. While the choice of the matrix elements $t_{ij}$ depends on the material to be modeled, by now many physically motivated cases give Dirac fermions whose masslessness is protected by lattice symmetries or topology~\cite{vafek2014:arcmp}. The topic of interest is how symmetries are spontaneously broken at strong interactions that covert the semi-metal to an insulator and the nature of the quantum critical point~\cite{herbut2006:grprl}.
These models, although motivated from the physics of electrons in crystals, are natural Hamiltonian discretizations of the four-fermion mass generation problem in \cref{eq:contL}. The complexities of the RG flows of the continuum theory manifest themselves here in rich phase diagrams that are currently poorly understood, making this an intensely studied area of research.

Generally speaking the only available unbiased  method to study such strongly coupled four-fermion lattice models is the Monte Carlo (MC) method \cite{Hands:1992be,Hands:1992ck}. Given the great interest in this topic, there has been a large body of work from both communities that has led to important progress and also a number of unresolved issues. Most work found in the high energy literature has been focused on understanding phase diagrams and fermion mass generation at strong four-fermion couplings \cite{Karkkainen94,DelDebbio:1997dv,Kogut:1998rg,DelDebbio:1999he,Hands:1999id,Christofi:2006zt}. The value of the critical number of fermion flavors below which fermion bilinear condensates in the Thirring model has been an interesting quantity to compute  \cite{Wellegehausen:2017goy,Hands:2016foa,Hands:2018vrd}. Lattice calculations with a local Lagrangian typically suffer from the fermion doubling problem. Due to this problem simple lattice four-fermion couplings get mapped into a linear combination of many continuum four-fermion couplings, which makes it difficult to understand which continuum model is being explored without careful fine tuning. For example recently it was recognized that the lattice Thirring model and the lattice GN models in $2+1$ dimensions seem to flow to the same continuum theory at the critical point \cite{Chandrasekharan:2013aya}.

The field has become more exciting recently due to new interest from condensed matter physics, which was originally inspired in part by the physics of graphene. Again the focus has been on mass terms generated by four-fermion interactions on the honeycomb lattice Hubbard model. Here the motivation is to study the semi-metal to antiferromagnetic transition and its universality class ~\cite{sorella2012:absence}. While the original work focused on $SU(2)$, generalizations to $SU(N)$ have  been carried out~\cite{lang2013:sun}. Interacting spinless lattice Hamiltonian models have also been demonstrated to be free of sign problems, which then allows one to study the simplest fermion mass generation mechanism at a quantum critical point, the chiral Ising transition~\cite{huffman2014:sign,li2015:mmc}. Perhaps the most striking recent result is that strong four-fermion couplings may also create fermion masses without symmetry breaking \cite{Ayyar:2016lxq,PhysRevD.97.094502,Catterall:2015zua,Butt:2018nkn}. In $2+1$ dimensions there is growing evidence that this mechanism of symmetric fermion mass generation may be a feature of continuum quantum field theory since it appears to be connected to an exotic quantum critical point \cite{Slagle:2014vma,PhysRevD.93.081701,PhysRevX.8.011026,PhysRevB.97.125112}. In $3+1$ dimensions it has been suggested that this mechanism may be helpful to construct lattice chiral gauge theories \cite{Kikukawa:2017ngf,Wang:2018cai,Catterall:2020fep}. Another interesting scenario that has been proposed is the existence of second order quantum phase transitions between different massive fermion phases \cite{Liu:2018sww,PhysRevLett.119.197203}. While such transitions cannot easily be understood within the Landau-Ginzburg paradigm, there is speculation that they may arise through a deconfined quantum critical point with emergent gauge fields \cite{Senthil:2004aza}. It is also believed that some of them could even be driven by topological terms in the low energy bosonic theory \cite{PhysRevB.74.064405,Slagle:2014vma}. A four-fermion realization of this transition has also been proposed \cite{Li2017,Li:2019acc}, where the existence of symmetry enhancement has also been studied at multi-critical points \cite{Torres:2019vcw}.  

Despite the power of MC methods, two bottlenecks quickly arise. The first one is the sign problem that limits the type of fermion models that can be studied. Furthermore, even when sign problems are solved, most traditional MC studies of quantum critical points in fermionic systems can only be performed on rather small lattice sizes, due to the poor scaling of the computational time with system size. Given these hurdles it is clearly of great interest to find new four-fermion models that are sign problem free and can be accessed with efficient MC algorithms. Recently, alternate types of fermion MC methods have become available, which are able to reach somewhat larger lattice sizes. For example, the recently proposed fermion-bag approach \cite{PhysRevD.82.025007,Chandrasekharan:2013rpa, Huffman:2017swn} can study system sizes involving up to $10,000$ sites \cite{PhysRevD.101.074501}. The meron-cluster algorithm, a precursor to the fermion-bag approach, is even more efficient while being applicable only to a more restricted class of models \cite{Chandrasekharan:1999cm, Chandrasekharan:2002vk,PhysRevB.66.045113}. The main motivation behind our current work is to design fermionic models that can be studied on large lattices using the meron-cluster algorithm.

Interestingly the new four-fermion model we introduce in this work is not only amenable to efficient simulations, it also has some novel physical features.  In brief we demonstrate below a novel mechanism by which the valence bond solid (VBS) state which breaks translations symmetry is realized in a one-dimensional fermionic system at arbitrarily weak coupling. It is well known that the Hubbard model \cref{eq:hubb} at half filling on bipartite lattices has an explicit $SU(2)_s$ spin symmetry and a hidden $SU(2)_c$ charge symmetry~\cite{zhang1990:su2}. The hopping term of the Hubbard model has an additional a spin-charge flip symmetry $\mathbb{Z}_2^C$ under which the two $SU(2)$ symmetries are interchanged. Indeed, as is well known when $U$ is repulsive (attractive) the spin (charge) sector is favored resulting in anti-ferromagnetic (superconducting) correlations. It is interesting to ask what is the fate of the Hubbard model when the interactions added preserve $\mathbb{Z}_2^C$? Here we show by field theoretic arguments and explicit MC simulations that in one spatial dimension, when the interactions preserve $\mathbb{Z}_2^C$, the system releases the frustration between spin and charge sectors by forming a VBS. This is a novel mechanism for the formation of a VBS in the one dimensional Hubbard model. From a field theory point of view, we argue that our model can be considered as a Hamiltonian lattice regularization of the two-flavor chiral-mass GN model with a spontaneously broken $\mathbb{Z}_2^\chi$ chiral symmetry. Using the meron-cluster representation we observe that our model is also related to the $\theta=\pi$ phase of the $\mathbb{CP}^3$ model where the charge conjugation symmetry is spontaneously broken \cite{Beard:2004jr}. When $U \neq 0$ a combination of GN and Thirring couplings is introduced. These new couplings induce a quantum phase transition to a phase where $\mathbb{Z}_2^\chi$ symmetry is restored, which turns out to be the $SU(2)_1$ Wess-Zumino-Witten (WZW) model perturbed by a marginally irrelevant coupling. The universality class of the transition in our model has been studied earlier by Affleck et al. \cite{Affleck:1988px}, who has argued that at the quantum critical point the marginal coupling vanishes, enhancing the symmetry of the theory. This transition has also been well studied numerically within the context of quantum spin-half chain\cite{OKAMOTO1992433,Eggert:1996er}. 

Our paper is organized as follows. In \cref{sec:Lattice} we introduce our lattice model and explain its symmetries. In \cref{sec:continuum} we argue that our lattice model is naturally mapped into a continuum model with a variety of four-fermion couplings. We discuss the continuum symmetries and relate them to the lattice symmetries. In \cref{sec:bosonization} we apply non-abelian bosonization to the continuum four-fermion theory and rewrite the low energy physics in terms of bosonic excitations. This helps us uncover the phase diagram of our lattice model. In \cref{sec:symenh}, we explain how the symmetries of the lattice model are enhanced at the critical point and derive expressions for the spin and dimer correlation functions near the critical point and in the conformal phase. In \cref{sec:numerical} we verify the theoretical analysis against MC results obtained using the meron-cluster algorithm and determine the critical point. We also confirm our estimate for the critical point using an exact diagonalization method. In \cref{conclusions} we summarize our results and provide an outlook for the future.

\section{Our Lattice Model}\label{sec:Lattice}
In this section we introduce our lattice model and discuss its symmetries. We focus on models with two flavors (or spins) of lattice fermions which can be annihilated and created at each lattice site $j$ by the usual fermionic Fock operators $c_{j\alpha}$ and $c^\dagger_{j\alpha}$, where $\alpha = \uparrow,\downarrow$. The choice of our model is constrained by the ability to use the meron-cluster algorithm as discussed in \cite{Chandrasekharan:2002vk}. While a variety of models fall in this class, here we focus on a particularly simple one whose Hamiltonian is
\begin{align}
  H_J = -J  \sum_{\langle i,j\rangle}H_{\langle i,j\rangle\uparrow}H_{\langle i,j\rangle\downarrow},
  \label{eq:HJ}
\end{align}
where
\thinmuskip=1mu
\medmuskip=2mu 
\thickmuskip=3mu 
\begin{align}
H_{\langle i,j\rangle\alpha} = - & ({c}_{i\alpha }^{\dagger }{c}_{j\alpha } + {c}_{j\alpha }^{\dagger }{c}_{i\alpha }) + 2\left({n}_{i\alpha }-\frac{1}{2}\right)\left({n}_{j\alpha}-\frac{1}{2}\right)-\frac{1}{2}.
\end{align}
\thinmuskip=3mu
\medmuskip=4mu 
\thickmuskip=5mu 
The symbol $\langle i,j\rangle$ refers to a bond between nearest neighbor sites $i$ and $j$. Our model above remains solvable by the meron-cluster algorithm in the presence of some other carefully chosen interactions, for example, the Hubbard interaction,
\begin{align}
  H_U = U\sum _{j}\Bigg\{\left({n}_{j\uparrow }-\frac{1}{2}\right)\left({n}_{j\downarrow }-\frac{1}{2}\right) + \frac{1}{4}\Bigg\}.
  \label{eq:HU}
\end{align}
In this work we will study the Hamiltonian
\begin{align}
  H = H_J + H_U.
\label{eq:ourmodel}
\end{align}
This model and its variants can be defined on a bipartite lattice in any dimension and remain solvable using the meron-cluster approach. The constraints of this solvability typically make them strongly interacting and end up within massive phases. However, as we will show in this work, they can still be useful to study phase transitions to massless phases. Here we will study this interesting quantum phase transition of \cref{eq:ourmodel} in one spatial dimension as a function of $U$.

Our model has a variety of interesting lattice symmetries that become manifest when written in terms of Majorana operators. We define two such operators $\gamma_j^1$ and $\gamma_j^2$ for spin-up fermions on each lattice site through the relations
\begin{align}
  c_{j\uparrow}&=\frac{1}{2}(\gamma_j^{1}-i\gamma_j^{2}),\quad  c^\dagger_{j\uparrow}=\frac{1}{2}(\gamma_j^{1}+i\gamma_j^{2})
\end{align}
for even $j$, and
\begin{align}
  c_{j\uparrow}&=\frac{1}{2}(\gamma_j^2+i\gamma_j^1),\quad c^\dagger_{j\uparrow}=\frac{1}{2}(\gamma_j^2-i\gamma_j^1)
\end{align}
for odd $j$. Note that $n_{j\uparrow}=\frac{1}{2} (-i\gamma_j^{1}\gamma_j^{2}+1)$ in both cases. Similarly we define two more Majorana operators $\gamma_j^3$ and $\gamma_j^4$ using the spin-down fermions. In terms of the four Majorana operators the two parts of the Hamiltonian take the form
\begin{align}
  H_J &= -\frac{J}{4}\sum_{\langle i,j\rangle}\prod_{\mu=1}^4(1+i\gamma_{i}^\mu\gamma_{j}^\mu), \\
  H_U &=  -\frac{U}{96}\sum_j \varepsilon_{\mu\nu\rho\sigma}\gamma_j^\mu \gamma_j^\nu \gamma_j^\rho \gamma_j^\sigma.
\end{align}
Let us now argue that $H_J$ is invariant under $O(4) \times \mathbb{Z}_2^\chi$ transformations, while $H_U$ is invariant under $SO(4) \times \mathbb{Z}_2^\chi$. The $SO(4)$ symmetry becomes obvious when we realize that the following six operators
\begin{align}
  \Gamma^{\mu\nu} = i\sum_{j}\gamma_j^\mu\gamma_j^\nu
\end{align}
satisfy the $\mathfrak{so}(4)$ algebra and commute with $H$. In fact the four operators $\gamma_j^\mu$ transform as an $SO(4)$ vector. We will argue in the next section that this symmetry is the vector subgroup of the full chiral symmetry group in the continuum. In addition, the $H_J$ has a discrete symmetry generated by $C_\uparrow := i\sum_{j}\gamma_j^{1}\gamma_j^{3}\gamma_j^{4}$, which we denote as $\mathbb{Z}_2^C$ for convenience. It is easy to verify that $C_\uparrow$ flips the sign of $\gamma_j^{2}$ but not the other three Majorana operators. Hence along with $C_\uparrow$ the four Majorana operators actually transform under the $O(4)$ group. Note that in the Dirac fermion language, $C_\uparrow c_{j\uparrow}C_\uparrow = (-1)^jc_{j\uparrow}^\dagger$ is the familiar particle-hole transformation on the spin-up fermions. Therefore it can also be viewed as the spin-charge flip transformation that is more familiar in condensed matter physics. On a regular lattice $H$ is also invariant under translations by one lattice site $T^\dagger_a c_{j\alpha} T_a = c_{j+1\alpha}$. As we will argue in the next section, there is a remnant discrete $\mathbb{Z}_2^\chi$ subgroup of the continuum chiral symmetry group buried within $T_a$. 
When $U=0$, the above lattice symmetries actually lead to an interesting degeneracy in the energy spectrum when the lattice size is a multiple of four, as will be shown in \cref{app:degeneracy}. In fact all energy levels are evenly degenerate, and in particular the ground state is doubly degenerate.

In order to compute quantities in our lattice model with the meron-cluster algorithm we use the continuous time formulation of the partition function,
\thinmuskip=1mu
\medmuskip=2mu 
\thickmuskip=3mu 
\begin{align}
Z = \sum_k \int \d t_k\cdots \d t_1 \ \sum_{[b]}\ J^k\  \Tr\Big(H_{b_k}(t_k) \cdots H_{b_1}(t_1)\Big),
\end{align}
\thinmuskip=3mu
\medmuskip=4mu 
\thickmuskip=5mu 
where $H_{b}(t)\ =\ \e^{t H_U} H_{\langle i,j\rangle\uparrow}H_{\langle i,j\rangle\downarrow} \e^{-t H_U}$ is the bond operator associated with the bond $b = \langle i j\rangle$ inserted at time $t$. Notice that here we use $\Tr$ to denote the trace over the full Hilbert space. Later we will use $\tr$ to denote the trace over local Hilbert spaces or local fields. The integrals over Euclidean time are always assumed to be time ordered such that $\beta > t_k > .... > t_2 > t_1 > 0$. The choice of $H_b$ leads to a simple formula for the trace in the fermionic Hilbert space. In particular it does not contain any determinants of large matrices, as is the case in the traditional auxiliary field methods. Instead, it can be shown that
\begin{align}
\Tr\Big(H_{b_k}(t_k) \cdots H_{b_1}(t_1)\Big) \ =\ \prod_{i} W(\ell_i),
\end{align}
where $\{\ell_1,\ell_2,...\}$ is a set of loops and $W(\ell_i)$ is the weight associated with the loop $\ell_i$ \cite{Chandrasekharan:2002vk}. The loops can be identified by introducing two parallel bonds for each $H_b$ at the appropriate imaginary time, as illustrated in \cref{fig:configurations}. As can be seen from the figure, these bonds naturally divide the lattice into disconnected loop clusters. When the trace over the fermionic Hilbert space is performed each cluster gets a weight $W(\ell) = 2(1\pm \e^{- U t_\ell/2})$, where $t_\ell$ is the linear temporal size of the loop. The sign associated with each loop is given by $(-1)^{n_t + n_b/2+1}$, where $n_t$ is the number of temporal winding of the loops and $n_b$ is the number of bonds in the loop. The fermionic nature of the problem is hidden in this sign. Note that when $U=0$ clusters with a negative sign (merons) are naturally forbidden. On the other hand when $U$ is very large all clusters are allowed and from the cluster representation our model becomes identical to the Heisenberg spin-half chain \cite{Evertz:1992rb,Beard:1996wj}.

\begin{figure}[h]
\centering \includegraphics[width=0.3\textwidth]{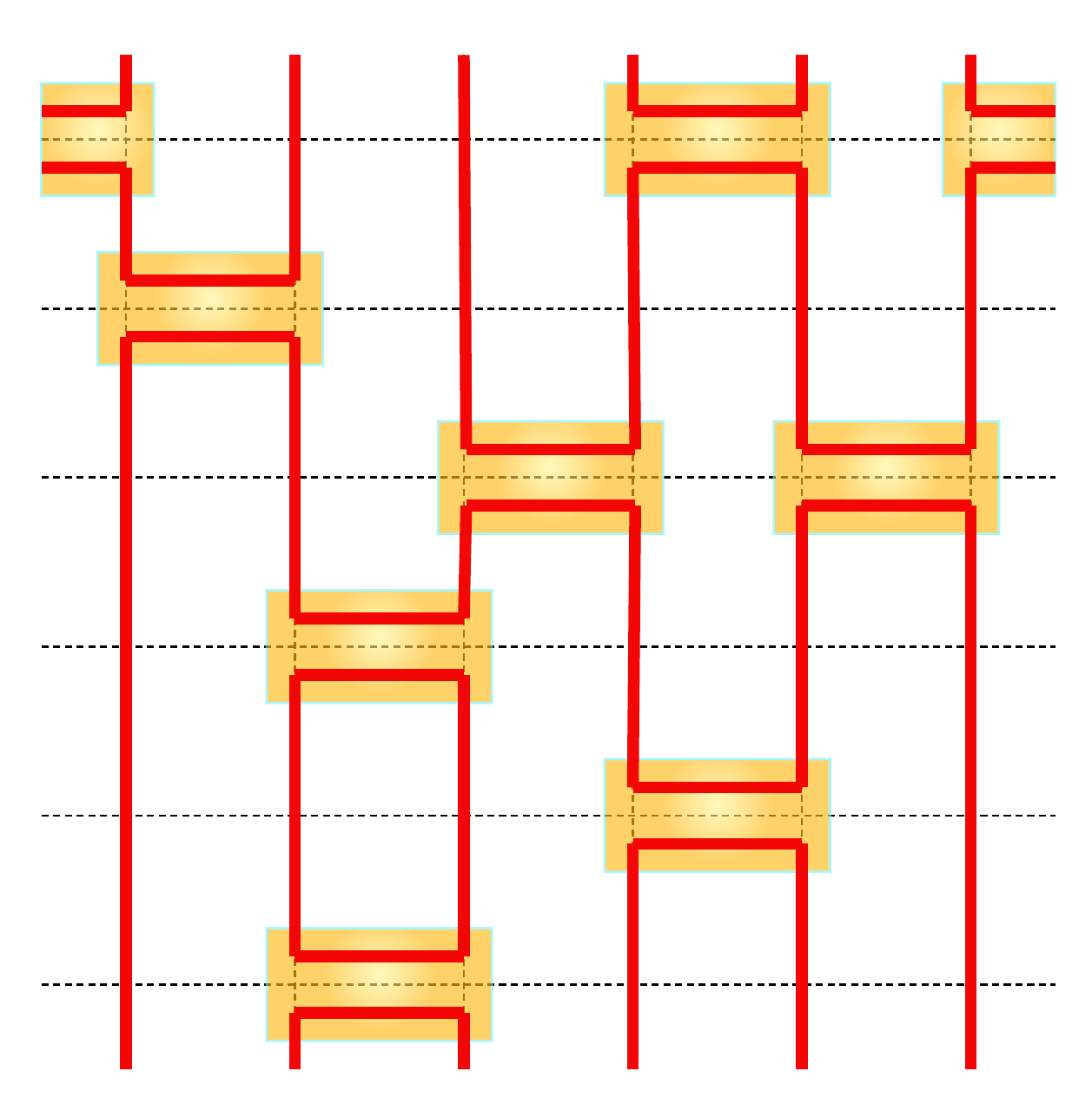}
  \caption{Illustration of a configuration of bonds that naturally divides space-time into loops. The fermionic trace is given as a product of weights associated with each loop as described in the text.}
  \label{fig:configurations}
\end{figure}

\section{Continuum Field Theory Analysis}
\label{sec:ft}

In this section, we review the results for the non-abelian bosonization of the Hubbard model pioneered by Affleck et al. \cite{Affleck:1988px,Affleck1990}, and the field theoretic picture for the phase transition from critical to VBS in one dimension. While this scenario is now standard, a new aspect of our model is the spin-charge flip symmetry $\mathbb{Z}^C_2$ which is present in $H_J$ at a lattice level even beyond the quadratic level. This extra symmetry appears in an interesting way in the continuum description, the RG flow and the resulting phase diagram.

\subsection{Connection to the Gross-Neveu-Thirring Model}
\label{sec:continuum}

In this section we will identify the $1+1$ dimensional continuum QFT which is described by our lattice model in one spatial dimension, given in \cref{eq:ourmodel}. Since the model is strongly interacting, in principle such an identification can be questionable. Still we can try to perform a tree level analysis by expanding the lattice Hamiltonian in terms of modes near the Fermi points of the free Hamiltonian and then include interactions perturbatively. Such an analysis will teach us how the lattice symmetries are embedded within the continuum ones \cite{Liu:2019dvk}. In particular we show below that the $H_J$ term can be identified with the strongly coupled lattice-regularized two-flavor chiral-mass GN model, while the $H_U$ term introduces a combination of the usual GN coupling and a Thirring coupling. More details of these couplings and the mapping are explained in \cref{app:lagrangian}.

In order to perform the tree level analysis described above, we introduce a small parameter $\varepsilon$ and deform $H_J \rightarrow H^\varepsilon_J$ as follows,
\begin{align}
  H^\varepsilon_J &= -\frac{J}{4\varepsilon}\sum_{\langle i,j\rangle}\prod_{\mu=1}^4(1+i\varepsilon\gamma_{i}^\mu\gamma_{j}^\mu).
\label{eq:modHJ}
\end{align}
Clearly all lattice symmetries discussed in the previous section are maintained order by order in $\varepsilon$. Expanding in powers of $\varepsilon$, we get
\begin{align}
  H^\varepsilon_J =: \sum_{n=1}^4\varepsilon^{n-1} H_J^{(2n)},
\end{align}
where the leading term
\begin{align}
  H_J^{(2)} =& -\frac{J}{2}  \sum_{\<i,j\>\alpha}c_{i\alpha}^\dagger c_{j\alpha} + c_{j\alpha}^\dagger c_{i\alpha}
 \end{align}
 describes free fermions, and
 \begin{align}
  H_J^{(4)} = J  \sum_{\<i,j\>}\Big\{ &- (c_{i\uparrow}^\dagger c_{j\uparrow} + c_{j\uparrow}^\dagger c_{i\uparrow})(c_{i\downarrow}^\dagger c_{j\downarrow} + c_{j\downarrow}^\dagger c_{i\downarrow})
  \nonumber\\
  & +\sum_\alpha\big({n}_{i\alpha }-\frac{1}{2}\big)\big({n}_{j \alpha}-\frac{1}{2}\big)\Big\}
\end{align}
is a four-fermion interaction. Higher order terms $H_J^{(6,8)}$ are irrelevant perturbatively near the free fermion fixed point and therefore their exact forms will not affect our discussion. By focusing on $H_J^{(2)}$ and $H_J^{(4)}$ we can identify the continuum model and also map the lattice symmetries to the continuum. Our lattice model with $\varepsilon=1$ will be regarded as merely another lattice discretization of the same continuum model.

In order to analyze the low energy physics near the free fermion fixed point, we expand our lattice fields $c_{j\alpha}$ in terms of smooth fields $\psi_{\alpha,L}(a j)$ and $\psi_{\alpha,R}(a j)$ near the two Fermi momenta $\mp k_F$, where $k_F = \pi/2a$ and $a$ is the lattice spacing. Without loss of generality we choose
\begin{align}\label{eq:linearization}
  \frac{1}{\sqrt{a}} c_{j\alpha} \approx \e^{ ik_Faj}\psi_{\alpha,L}(aj) + \e^{- ik_Faj}\psi_{\alpha,R}(aj)
\end{align}
for both $\alpha$. Inserting \cref{eq:linearization} into $H_J^{(2)}$,
we can expand in powers of lattice spacing $a$ to obtain the leading low energy effective Hamiltonian in the continuum as
\begin{align}
  H_J^{(2)\text{cont}} = - aJ  \int\d x \Big( \sum_{\alpha} &\psi_{\alpha,L}^\dagger i\ddx \psi_{\alpha,L} - \psi_{\alpha,R}^\dagger i\ddx \psi_{\alpha,R}\Big).
  \label{eq:freeHc}
\end{align}
This is the Hamiltonian for two-flavor free massless Dirac fermions. It is easy to verify that the above free Hamiltonian is invariant under $O(4)_L \times O(4)_R$ which we refer to as the full continuum chiral symmetry. The $SO(4)$ transformation in each chiral sector can be decomposed into spin $SU(2)_s$ and charge $SU(2)_c$ transformations, which becomes explicit if we introduce a $2\times 2$ matrix of operators,
\begin{align}
  \Psi_{L(R)} = \begin{pmatrix}
    \psi_{\uparrow,L(R)} & \psi_{\downarrow,L(R)}^{\dagger} \\
    \psi_{\downarrow,L(R)} & -\psi_{\uparrow,L(R)}^{\dagger}
  \end{pmatrix},
\end{align}
and define the spin and charge transformations as $\Psi_{L(R)} \mapsto S_{L(R)} \Psi_{L(R)} Q^\dagger_{L(R)}$, where $S_{L(R)}$ and $Q_{L(R)}$ are $SU(2)_s$ and $SU(2)_c$ matrices in each chiral sector.
When written in terms of $\Psi_{L(R)}$, the Hamiltonian
\begin{align}
H_J^{(2)\text{cont}} = - aJ \int \d x \frac{1}{2}& \tr\Big( \Psi_L^\dag i\ddx \Psi_L -\Psi_R^\dag i\ddx \Psi_R\Big)
\end{align}
is clearly invariant under the above transformation. However, since $(-S_{L(R)},-Q_{L(R)})$ should be identified with $(S_{L(R)},Q_{L(R)})$, the symmetry group of the continuum Hamiltonian $H_J^{(2)\text{cont}}$ is actually $(SU(2)_s\times SU(2)_c)\big/\mathbb{Z}_2 \cong SO(4)$ in each chiral sector. These transformations are generated on the Hilbert space by the spin current operators
\begin{align}
\mathcal{J}^i_{s L(R)}(x) &= \frac{1}{2} (
\psi_{\uparrow,L(R)}^{\dagger}, \psi_{\downarrow,L (R)}^{\dagger})
\sigma^i \begin{pmatrix} \psi_{\uparrow,L(R)} \\
\psi_{\downarrow,L(R)}\end{pmatrix},
\label{eq:sgen}
\end{align}
and the charge current operators 
\begin{align}
\mathcal{J}^{i}_{cL(R)}(x) &= \frac{1}{2} (
\psi_{\uparrow,L(R)}, \psi_{\downarrow,L(R)}^\dagger)
\sigma^{iT} \begin{pmatrix} \psi_{\uparrow,L(R)}^\dagger \\
\psi_{\downarrow,L(R)} \end{pmatrix}.
\label{eq:cgen}
\end{align}
In addition the Hamiltonian is also invariant under two independent spin-charge flip transformations $\mathbb{Z}_{2L(R)}^C$: $\Psi_{L(R)} \mapsto \Psi^\dagger_{L(R)}$, under which $\mathcal{J}^i_{s L(R)}$ and $\mathcal{J}^i_{c L( R)}$ exchange with each other. Including them the free Hamiltonian is indeed invariant under the $O(4)_L\times O(4)_R$ symmetry as stated above.

We know from \cref{eq:linearization} that under lattice translation $T_a$, $\psi_{L}^{\alpha} \mapsto i\psi_{L}^{\alpha}$ and $\psi_{R}^{\alpha} \mapsto - i\psi_{R}^{\alpha}$. This means $T_a$, corresponding to $Q_L = Q_R^\dagger = \exp(i\frac{\pi}{2}\sigma^3)$ in the continuum, generates a $\mathbb{Z}_4$ subgroup of the chiral $SU(2)_c$ group, under which $\mathcal{J}^i_{sL(R)} \mapsto \mathcal{J}^i_{sL(R)}$,
$\mathcal{J}^{1,2}_{cL(R)} \mapsto - \mathcal{J}^{1,2}_{cL(R)}$ and $\mathcal{J}^3_{cL(R)} \mapsto \mathcal{J}^3_{cL(R)}$. Note that here since $Q_L^2=Q_R^2=-\mathbbm{1}$, $T_a^2$ belongs to the vector subgroup of the chiral symmetry group, and therefore $T_a$ is effectively a $\mathbb{Z}_2$ subgroup of the chiral symmetry group, hence denoted by $\mathbb{Z}_2^\chi$.

Using a similar analysis as above we can identify the following continuum operators that corresponds to $H_J^{(4)}$
\begin{align}
  H_J^{(4)\text{cont}} &= 2aJ \int \d x \ M(x)^2,
\end{align}
where 
\begin{align}
M(x) = i\sum_\alpha \big(\psi_{\alpha,L}^\dagger \psi_{\alpha,R} - \psi_{\alpha,R}^\dagger \psi_{\alpha,L}\big).
\label{eq:Mx}
\end{align}
In order to understand the symmetries of this term it is useful to express it in terms of spin and charge currents using the relation $M(x)^2 = \left(\mathcal{J}_{sL}^i\mathcal{J}_{sR}^i + \mathcal{J}_{cL}^i\mathcal{J}_{cR}^i\right) - 1$,
which implies
\begin{align}
  H_J^{(4)\text{cont}} &= 2aJ \int \d x \left(\mathcal{J}_{sL}^i\mathcal{J}_{sR}^i + \mathcal{J}_{cL}^i\mathcal{J}_{cR}^i\right).
\end{align}
In this form it is easy to see that this term breaks the chiral $O(4)$ symmetry of the free theory down to the diagonal $O(4)_{L=R}$ subgroup. However, the $\mathbb{Z}_2^\chi$ group generated by $T_a$ survives. Therefore $H_J^{(4)\text{cont}}$ has $O(4)\times\mathbb{Z}_2^\chi$ symmetry.

Coming to the Hubbard interaction $H_U$ in \cref{eq:HU}, the corresponding continuum Hamiltonian was already derived by Affleck et al. \cite{Affleck:1988zj} and is given by
\begin{align}
 H_U^\text{cont}\approx \frac{1}{2}a U \int \d x &\Big(\mathcal{J}_{cL}^i\mathcal{J}_{cR}^i - \mathcal{J}_{sL}^i\mathcal{J}_{sR}^i 
 \nonumber \\
& - \frac{1}{3}(\mathcal{J}_{sL}^i\mathcal{J}_{sL}^i + \mathcal{J}_{sR}^i\mathcal{J}_{sR}^i)\Big)
\label{eq:contHu}
\end{align}
up to irrelevant pieces. Note that $H_U$ preserves all the symmetries of $H_J$ except the spin-charge flip symmetry which means the $O(4)\times\mathbb{Z}_2^\chi$ symmetry is reduced to $SO(4)\times\mathbb{Z}_2^\chi$. The last two terms in \cref{eq:contHu} only renormalize the speed of light and do not introduce any interactions, which is clearer in the language of conformal field theory (CFT). Throwing away these chiral terms, and normalizing the kinetic term to have a coefficient of one, we obtain the final continuum Hamiltonian of our lattice model as
\begin{align}
  H^\text{cont} = \int \d x& \Big( \sum_{\alpha} \big(-\psi^\dagger_{\alpha,L}i\ddx \psi_{\alpha,L}+\psi^\dagger_{\alpha,R}i\ddx \psi_{\alpha,R}\big) \nonumber\\
  & + \lambda_s\mathcal{J}_{sL}^i\mathcal{J}_{sR}^i + \lambda_c\mathcal{J}_{cL}^i\mathcal{J}_{cR}^i \Big),
  \label{eq:contmodel}
\end{align}
where $\lambda_s = 2(\varepsilon-\frac{U}{4J})$ and $\lambda_c = 2(\varepsilon+\frac{U}{4J})$. Assuming the $H^{(6,8)}$ terms do not modify the above analysis except perhaps to change the couplings $\lambda_s$ and $\lambda_c$, we argue that \cref{eq:ourmodel} is a lattice regularization of \cref{eq:contmodel}. In \cref{app:lagrangian} we construct the Lagrangian of \cref{eq:contmodel} and identify it with a linear combination of GN-Thirring couplings.

In summary the above discussion shows that the $O(4)$ symmetry of our lattice model in \cref{eq:ourmodel} can then be identified with the diagonal $O(4)$ subgroup of the continuum Hamiltonian \cref{eq:contmodel}. The translation by one site symmetry $T_a$ can be identified as the remnant chiral symmetry $\mathbb{Z}_2^\chi$ discussed above. While it is known that lattice translation is part of the continuum chiral symmetry, we have shown explicitly how it is embedded along with the flavor symmetries.

\subsection{Bosonization and Phase Diagram}
\label{sec:bosonization}

In order to understand the physics of our lattice model we begin by discussing the RG flows of the continuum model \cref{eq:contmodel} near the free fermion fixed point. The one-loop $\beta$-functions in the continuum are given by \cite{Affleck:1988zj}
\begin{align}
  \frac{\d\lambda_{s(c)}}{\d\log\mu} = -\frac{\lambda_{s(c)}^2}{2\pi},
\label{eq:betafunc}
\end{align}
which shows that the spin and charge currents are completely decoupled in the low energy theory, i.e., terms involving both spin currents and charge currents and respecting the $SO(4)$ symmetry are irrelevant. The RG flow diagram based on this $\beta$-function is shown in \cref{fig:flow-diagram}. The couplings are relevant when $\lambda_{s(c)} > 0$, and irrelevant when $\lambda_{s(c)} < 0$. 

\begin{figure}[t]
\includegraphics[width=0.35\textwidth]{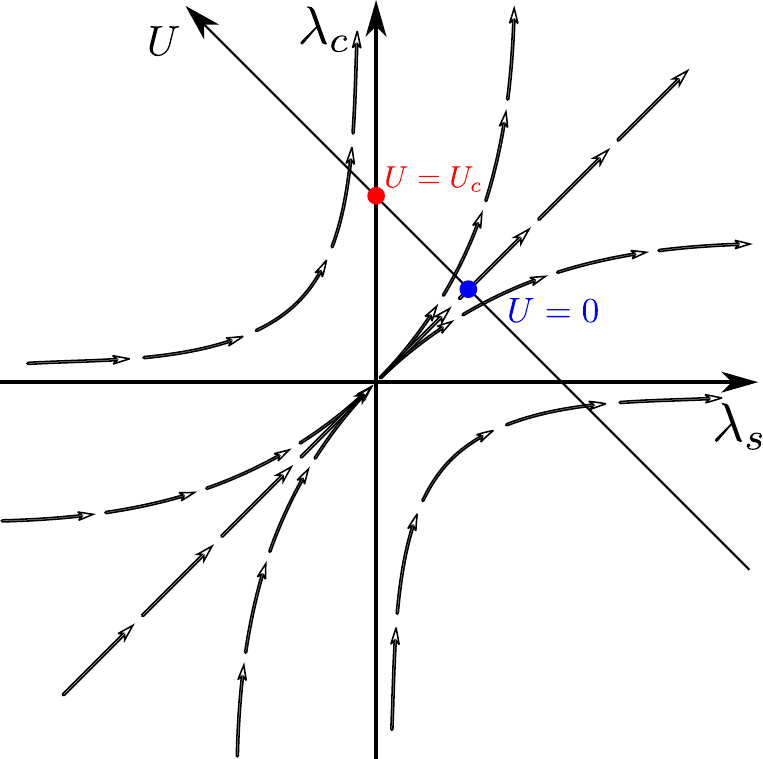}
  \caption{Phase diagram in the $\lambda_s \operatorname{-} \lambda_c$ plane. Both of the couplings are relevant when positive and irrelevant when negative. Our model lives on the $U$ axis. When $U = 0$, Our model is on the $\lambda_s = \lambda_c$ line due to the spin-charge flip symmetry, and is in a massive phase described by a lattice regularized chiral-mass GN model with a spontaneously broken $\mathbb{Z}_2^\chi$ symmetry. When $U$ increases to $U_c$, $\lambda_s = 0$ and our model is at the critical point of a second order phase transition, whose low energy physics is described by the $SU(2)_1$ WZW model. When $U>U_c$, $\lambda_s < 0$ and a marginally irrelevant coupling would modify the low energy WZW theory mildly.}
  \label{fig:flow-diagram}
\end{figure}

Our lattice model falls somewhere on the $(\lambda_s,\lambda_c)$ plane. For example when $U=0$ the model must be on the $\lambda_c=\lambda_s$ line due to the spin-charge flip symmetry, which is identified with the chiral-mass GN model with the well known physics of asymptotic freedom. Here we expect $\langle M(x)\rangle \neq 0$ due to the spontaneous breaking of the $\mathbb{Z}_2^\chi$ chiral symmetry \cite{Gross:1974jv}, which is related to the dimer order parameter and verified by our MC results presented later. When $U > 0$ our model moves away from the spin-charge symmetric axis towards the conformal phase in the second quadrant where $\lambda_s < 0$ and $\lambda_c > 0$. This implies the existence of a quantum phase transition at some critical value $U=U_c$, which we find to be second order. 

The most convenient way to understand the physics of the conformal phase is through the language of non-abelian bosonization \cite{Witten:1983ar}, where \cref{eq:contmodel} is mapped to the following non-linear sigma model,
\begin{align}
    S[g,h] &= S_\text{WZW}[g] + \lambda_s\int \d^2 x \mathcal{J}_{sL}^i(g)\mathcal{J}_{sR}^i(g) \nonumber\\
    &+ S_\text{WZW}[h] + \lambda_c\int \d^2 x \mathcal{J}_{cL}^i(h)\mathcal{J}_{cR}^i(h).
\label{eq:contbosqft}
\end{align}
$g$ and $h$ are $SU(2)$ group-valued fields, describing the spin and charged sectors respectively, and 
\begin{align}\label{WZW}
  &S_\text{WZW}[g] = \frac{1}{16\pi} \int \d^2x \tr (\partial_\mu g^{-1} \partial^\mu g) \nonumber\\
  &\quad + \frac{i}{24\pi}\int \d^3x \varepsilon^{\mu\nu\rho} \tr (g^{-1}\partial_\mu g g^{-1}\partial_\nu g g^{-1}\partial_\rho g )
\end{align}
is the action for the Wess-Zumino-Witten (WZW) theory in the Euclidean signature. While the first term in $S_\text{WZW}[g]$ involves an integration over 2-sphere space-time, the second term involves an integration over a 3-disk whose boundary is the 2-sphere. The currents $\mathcal{J}_{sL(R)}^i$ and $\mathcal{J}_{cL(R)}^i$ are given by
\begin{align}
    \mathcal{J}_{sL}^j &= \frac{i}{4\pi}\tr\left((\partial g)g^{-1}\sigma^j\right), \\
    \mathcal{J}_{sR}^j &= \frac{i}{4\pi}\tr\left(g^{-1}\bar\partial g\sigma^j\right), \\
    \mathcal{J}_{cL}^j &= \frac{i}{4\pi}\tr\left((\partial h)h^{-1}\sigma^j\right), \\
    \mathcal{J}_{cR}^j &= \frac{i}{4\pi}\tr\left(h^{-1}\bar\partial h\sigma^j\right),
\end{align}
where $\partial = \partial_x + i\partial_t$ and $\bar\partial$ is the complex conjugate of $\partial$. Note that although in the Euclidean space those currents are usually called holomorphic and anti-holomorphic and denoted by $\mathcal{J}$ and $\bar{\mathcal{J}}$, here we keep the convention from the Minkowski space and still call the currents as left and right handed for convenience. The bosonized action \cref{eq:contbosqft} enjoys almost the same symmetries as \cref{eq:contmodel}, except for the fact that $(g,h)$ and $(-g,-h)$ should be identified. This constraint comes from the fact that we have written the bosonized theory with an $SU(2)_s\times SU(2)_c$ symmetry rather than $SO(4)$. This means that observables in the fermionic theory will always be mapped into terms that do not violate this identification. For example $\tr g$ itself is not an observable. Besides, the chiral symmetries of $S_\text{WZW}$ in each sector are $SO(4)$ instead of $SU(2)_L\times SU(2)_R$. Also note that under spin-charge flip symmetry $g \leftrightarrow h$.

Armed with these bosonization results we see that when $U > 0$ the charge excitations have higher energies than the spin excitations. Thus, the long distance physics of the conformal phase in the second quadrant must be described by
\begin{align}
S = S_\text{WZW}[g] + \lambda_s\int \d^2 x \mathcal{J}_{sL}^i(g)\mathcal{J}_{sR}^i(g).
\label{eq:k1wzw}
\end{align}
This long distance physics can also be seen directly in the lattice model, where the charge sector becomes energetically less favorable when $U>0$ and the low energy physics is described by a Heisenberg spin-half chain. In a famous theorem, Lieb, Shultz and Mattis showed that spin-half chains can only be in one of two possible ground states \cite{Lieb:1961fr}: a dimerized phase where the translation symmetry is broken or a critical phase. In fermionic viewpoint, the dimerized phase is the chirally broken GN phase while the critical phase is the CFT described by \cref{eq:k1wzw} with $\lambda_s \leq 0$. This implies the universality class of the transition at $U=U_c$ in our model can also be understood from simpler spin-half chain models. One such example is
\begin{align}
H_{J_1\operatorname{-}J_2} = \sum_{i} \Big\{ J_1 \mathbf{S}_i \cdot \mathbf{S}_{i+1} 
 +\  J_2 \mathbf{S}_i \cdot \mathbf{S}_{i+2} \Big\}
\label{eq:mgmod}
\end{align}
with both $J_1, J_2 > 0$. When $J_2=0$, this model is the usual Heisenberg spin chain and is well known to be described the WZW conformal phase. Also, as Majumdar and Ghosh pointed out long ago, when $J_2/J_1 = 0.5$ the ground state is doubly degenerate due to dimerization, suggesting that the theory at those couplings is in a different phase \cite{Majumdar_1969,Majumdar_1970}. Previous studies have shown that the phase transition between the two phases occurs at $J_2/J_1 = 0.241167(5)$ \cite{PhysRevB.25.4925, Eggert:1996er}. Although the phase transition has also been studied without a sign problem in one-dimensional spin-half chain using the J-Q idea \cite{sanyal2011:1djq,Patil:2018wpt}, our model provides a route to the one-dimensional WZW critical phase to VBS quantum phase transition  starting from a lattice fermionic Hubbard model Hilbert space instead of a spin-half chain Hilbert space. A very narrow region of VBS was also reported in an extended Hubbard model~\cite{sandvik2004:exthubb}.

\subsection{Symmetry Enhancement and Correlation Functions}
\label{sec:symenh}

Since the charge sector decouples when $U > 0$, the long distance physics of the lattice model seems to only have the diagonal $SU(2)$ spin symmetry and the remnant $\mathbb{Z}^\chi_2$ chiral symmetry, under which $g\mapsto i\sigma^3 g i\sigma^3$. These symmetries can be observed in the continuum low energy theory described by \cref{eq:k1wzw}. However, if we set $\lambda_s=0$ the continuum theory is invariant under the enhanced chiral transformation $g \mapsto S_L g S_R^\dagger$ where $S_L$ and $S_R$ are two independent $SU(2)$ matrices implementing the left and right spin symmetries of the fermion model, except that $S_L=S_R=-\mathbbm{1}$ needs to be quotiented out. Based on the RG flow diagram shown in  \cref{fig:flow-diagram} we see that by tuning to the critical point $U=U_c$ we can indeed set $\lambda_s=0$, and thus we must see the enhanced chiral symmetry there. In our work we look for this symmetry enhancement, although observing it could be non-trivial since there will always be higher dimensional operators that break the symmetry on the lattice. Also, when $U \neq U_c$ but close to it, since $\lambda_s$ is a marginal coupling, the logarithmic corrections due to it would be visible in the lattice MC data at long distances. Here we derive the form of these corrections and use them in our analysis.

It was explained in \cite{Affleck1990,tsvelik_2003} that the symmetry enhancement is visible in correlation functions of spin and dimer operators on the lattice defined through the relations
\begin{align}
  S_j^z(t) &= (-1)^j\frac{1}{2}(n_{j\uparrow}-n_{j\downarrow})(t), \label{eq:spin}\\
  D_j(t) &= (-1)^j\frac{1}{2}(S^z_jS^z_{j+1}-S^z_{j-1}S^z_j)(t). \label{eq:dimer}
\end{align}
The leading continuum terms of these two operators are the primary fields $\varphi_S = \tr h \tr (gi\sigma^3)$ and $\varphi_D = \tr h \tr g$ of the WZW model. In the fermionic language these operators are given by $\varphi_S \propto \psi_{L}^{\dagger\alpha} \sigma_{\alpha\beta}^3 \psi_R^\beta + \text{h.c.}$ and $\varphi_D(x) \propto M(x)$ defined in \cref{eq:Mx}. We see that $\varphi_D$ can be used as an order parameter on the lattice for $\mathbb{Z}_2^\chi$. Since $\varphi_S$ and $\varphi_D$ transform into each other under the $SO(4)$ chiral transformations, their correlation functions will be related to each other. Using the well known techniques in CFT \cite{Knizhnik:1984nr, Zamolodchikov:1986bd, Gepner:1986wi}, their conformal dimensions have been computed to be $h_S =  h_D = \frac{1}{2}$, and the two point correlation functions of the spin and dimer are
\begin{align}
G_i(r) = \langle \varphi_i(r) \varphi_i(0) \rangle \propto r^{-2h_i},
\label{eq:critcorr}
\end{align}
for large values of $r$ at the critical point $U=U_c$. 

In order to be able to fit our MC data away from the critical point we have to include the logarithmic corrections to \cref{eq:critcorr}, due to the presence of the marginal coupling $\lambda_s$. We begin with the $\beta$-function of the marginal coupling $\lambda_s$
\begin{align}\label{eq:beta_function}
  \beta(\lambda_s) &= \frac{\d\lambda_s}{\d\ln r} = \frac{\lambda_s^2}{2\pi} + O(\lambda_s^3).
\end{align}
This is the same as \cref{eq:betafunc} except that we have replaced the energy scale $\mu$ with a length scale $r$, which reverses the sign of the beta function. Integrating  to first order, we obtain 
\begin{align}\label{eq:int_beta}
  \frac{1}{\lambda_s(r)} - \frac{1}{\lambda_0} = -\frac{1}{2\pi}\ln \frac{r}{r_0},
\end{align}
where $\lambda_0$ is the coupling at some short distance length scale $r_0$. Note that the solution $\lambda_s(r)$ above is meaningful for very large values of $r$ only when $\lambda_0 < 0$ (i.e., when $U > U_c$ and the theory is in the conformal phase), and therefore the coupling $\lambda_s(r)$ is marginally irrelevant. When $\lambda_0 > 0$ the coupling $\lambda_s(r)$ diverges at $r = r_0 \e^{2\pi/\lambda_0}$, which is the remnant of the Landau pole of perturbation theory and signals new physics at long distances due to the formation of a fermion mass. This means in the broken phase the above form of $\lambda_s(r)$ will only be valid close to the critical point and small values of $r$.

With the caveats discussed above, we can use $\lambda_s(r)$ in \cref{eq:int_beta} in both phases to derive the corrections for $G_i(r)$. Noting that it depends on $r$ both explicitly and through $\lambda_s(r)$, we can write $G_i(r) =: G_i(\lambda_s(r),r)$ to be more precise. We note that it satisfies the following RG equation:
\begin{align}\label{eq:RG1}
  \frac{\d \ln G_i(\lambda_s(r),r)}{\d \ln r} = -2(h_i+\gamma_i(\lambda_s(r))),
\end{align}
where $h_i$ is the conformal dimension of $\varphi_i$ at the critical point and $\gamma_i(\lambda_s(r))$ is the anomalous dimension of $\varphi_i$ induced by the marginal operator. Integrating this equation we get
\begin{align}
  \ln \frac{G_i(\lambda_s(r),r)}{G_i(\lambda_0, r_0)} &= \int_{r_0}^r -2(h_i + \gamma_i(\lambda_s(r')))\d \ln r',
  \label{eq:rgsol}
\end{align}
from which we can derive the corrections due the presence of the marginal operator once we know the $\gamma_i(\lambda_s(r))$. 

First let us focus on the conformal phase and near the critical point, i.e. $\lambda_0\lesssim 0$. In 2d CFT the conformal dimension $h_i$ of the operator $\varphi_i$ can be calculated using the finite size energy $E_i = 2\pi h_i/L $ of the state $|\varphi_i\>$ through the state-operator correspondence \cite{Cardy:1984rp}. Then the anomalous dimension $\gamma_i$ of the operator is related to the change in this energy due to the presence of the marginal operator. In our case, to leading order in $\lambda_s$, this leads to the expression \cite{Affleck:1988px},
\begin{align}\label{eq:anomalous_dimension}
  \gamma_i(\lambda_s) &= \frac{b_i}{2\pi}\lambda_s  + O(\lambda_s^2),
\end{align}
where 
\begin{align}
  b_i &= \int \d x~\< \varphi_i|\mathcal{J}_{sL}^j \mathcal{J}_{sR}^j|\varphi_i\> \nonumber\\
     &= \frac{1}{2} \int \d x~\< \varphi_i|(\mathcal{J}_{sL}^j + \mathcal{J}_{sR}^j)^2 - \mathcal{J}_{sL}^j\mathcal{J}_{sL}^j - \mathcal{J}_{sR}^j\mathcal{J}_{sR}^j|\varphi_i\> \nonumber\\
     &= \frac{1}{2}(s_\text{tot}(s_\text{tot}+1) - s_L(s_L+1) - s_R(s_R+1)).
\end{align}
In the above formula $s_L = s_R = 1/2$ for both spin and dimer states, while $s_\text{tot}$ depends on the state $|\varphi_i\rangle$. The values of $b_i$ can now be calculated for both spin and dimer fields and are tabulated in \cref{tab:b_i} below.

\begin{table}[htb]
  \centering
  \renewcommand{\arraystretch}{1.2}
  \setlength{\tabcolsep}{4pt}
  \begin{tabular}{{c} *{6}{c}}
  \TopRule
  Lattice Field & WZW field & $s_\text{tot}$ & $s_L$ & $s_R$ & $h_i$ & $b_i$ \\ \MidRule
  $S^i$  & $\varphi_S = \tr h \tr g\sigma^i$ & 1 & $1/2$ & $1/2$ & $1/2$ & $1/4$ \\
  \MidRule
  $D$  & $\varphi_D = \tr h\tr g$ & 0 & $1/2$ & $1/2$ & $1/2$ & $-3/4$ \\ 
  \BotRule
  \end{tabular}
  \caption{$h_i$ and $b_i$ for the lattice fields $S^i, D$}
  \label{tab:b_i}
\end{table}
Inserting these values of $b_i$ and $\lambda_s(r)$ from \cref{eq:int_beta} into \cref{eq:anomalous_dimension} we can compute $\gamma_i(\lambda_s(r))$. Substituting this into \cref{eq:rgsol} we get
\begin{align}
  \frac{G_S(\lambda(r), r)}{G_S(\lambda_0, r_0)} &= \frac{r_0}{r}\left(1-\frac{\lambda_0}{2\pi}\ln\frac{r}{r_0}\right)^{\frac{1}{2}}, \label{eq:spin_ir}\\
  \frac{G_D(\lambda(r), r)}{G_D(\lambda_0, r_0)} &=  \frac{r_0}{r}\left(1-\frac{\lambda_0}{2\pi}\ln\frac{r}{r_0}\right)^{-\frac{3}{2}}. \label{eq:dimer_ir}
\end{align}
Similar expressions have been derived earlier in the context of spin chains \cite{Eggert:1996er,Affleck_1998} and verified numerically \cite{Eggert:1996er}. Note again that for a fixed value of $\lambda_0$ but large values of $r$, the above expressions make sense only when $\lambda_0\leq 0$, i.e. when we are in the conformal phase as already mentioned above. On the other hand when $r$ is in a fixed range but $\lambda_0 \rightarrow 0$ the above expressions give us the leading corrections to conformal behavior on both sides of the critical point.

We could do a similar analysis in the massive phase. However, in the massive phase the anomalous dimension \cref{eq:anomalous_dimension} obtained in the conformal phase cannot be completely correct. Besides, like the $\beta$-function, it was derived in perturbation theory and so there could be some non-perturbative corrections to it due to the presence of the mass scale $M$. In particular if we assume the correlation function to the leading order is of the form $G_i(r) \sim \e^{- \alpha_i M r}$, this would contribute additional non-perturbative terms to the anomalous dimension $\gamma_i(\lambda_s)$. For example if we assume \begin{align}
\lambda_s(r) = -\frac{2\pi}{\ln Mr}
\label{eq:lambda_massive}
\end{align}
as a possible definition for the RG invariant mass scale $M$, then it is easy to see that
\begin{align}\label{eq:anomalous_dimension_massive}
  \gamma_i(\lambda_s) &= \frac{b_i}{2\pi}\lambda_s + \frac{\alpha_i}{2} \ \e^{-2\pi/\lambda_s} + O(\lambda_s^2).
\end{align}
Since we are only deriving corrections to $G(r)$ due to the marginal coupling perturbatively, in this work we ignore such non-perturbative effects. Then, \cref{eq:spin_ir,eq:dimer_ir} can still be used in the massive phase near the critical point except that we will find $\lambda_0 > 0$ and $r$ is constrained to be such that $\lambda_0/2\pi\ln (r/r_0) < 1$.

\section{Numerical Results}\label{sec:numerical}

\subsection{Monte Carlo Results}

\begin{figure*}[htb]
  \includegraphics[width=0.49\textwidth]{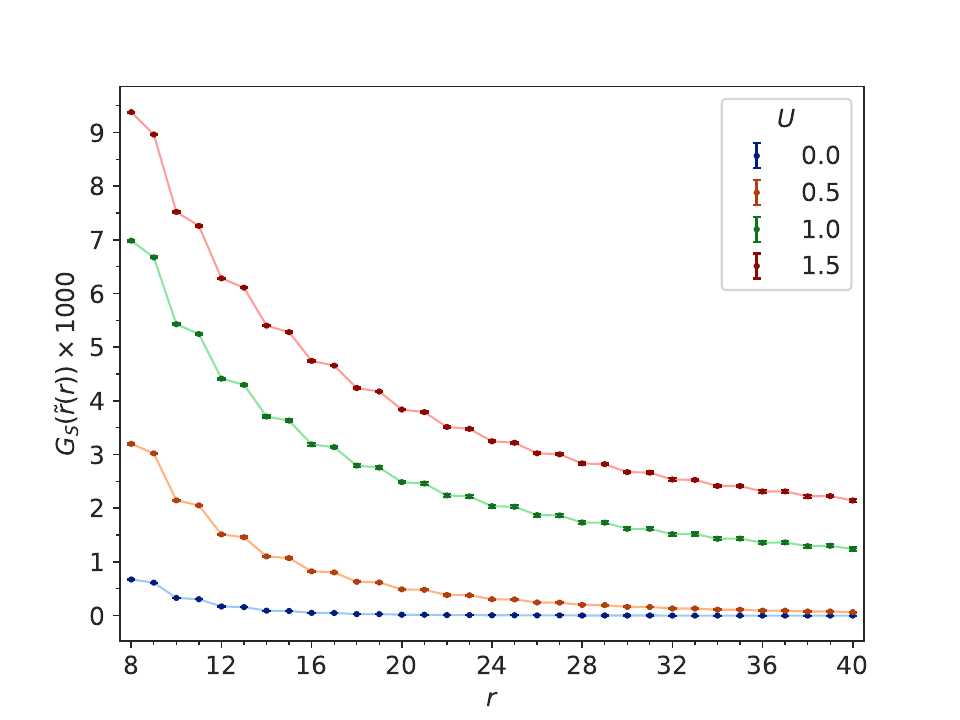}
  \includegraphics[width=0.49\textwidth]{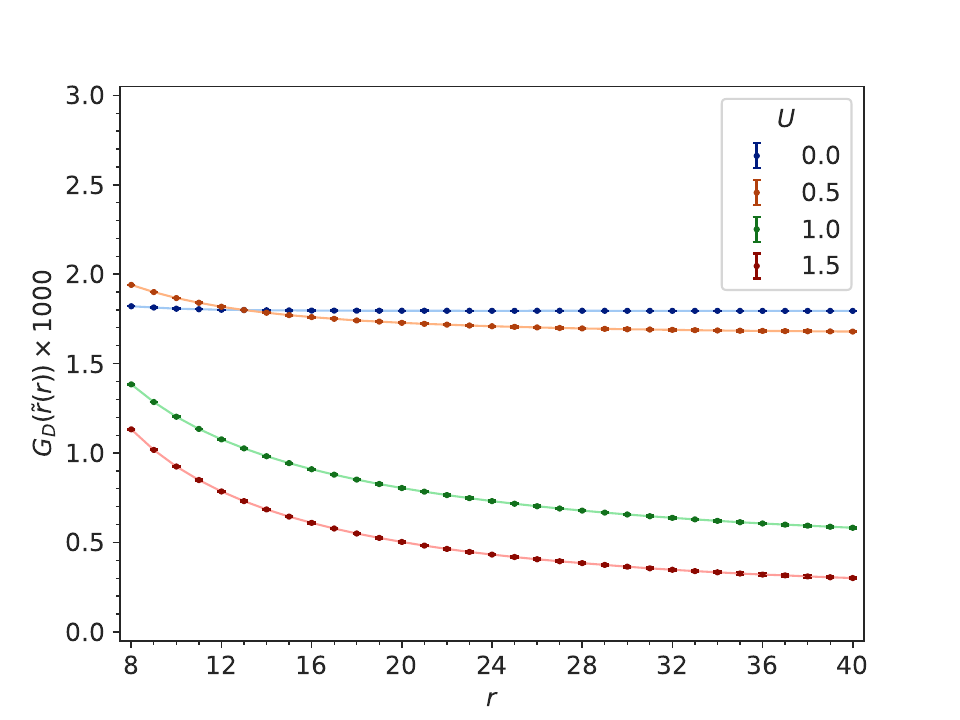}
  \caption{Spin correlation functions $G_S(\tilde{r}(r))$ and dimer correlation functions $G_D(\tilde{r}(r))$ as functions of $r$ at $L=128$ for $U=0$, $0.5$, $1.0$ and $1.5$. Both correlation functions decrease exponentially (with possible multiplicative power corrections) in $r$. In addition, the dimer correlation functions exhibit long range orders.
  }
  \label{fig:spin-dimer-massive}
\end{figure*}

We have used the meron-cluster algorithm to compute the equal time spin and dimer correlation functions defined through the expressions
\begin{align}
\tilde{G}_{i}(r) =
\frac{1}{Z}\Tr\Big(\e^{-\beta H}{\cal O}_i(r) {\cal O}_i(0)\Big),
\end{align}
where ${\cal O}_i(r),i=S,D$ are the spin and dimer operators defined in \cref{eq:spin,eq:dimer}. The symbol $r$ is used for the spatial lattice site $j$ at some fixed time slice. In our work we choose $\beta = L$ where $L$ is the size of our spatial lattice in lattice units. Based on our estimate for the speed of light this leads to a small effective temperature. In particular we have evidence that physical temperatures and physical length scales are related by $T_{\rm phys} \approx 0.25 (La)^{-1}$ in our lattice model.
We compute $\tilde{G}_i(r)$ at various lattice sizes and $U$. In \cref{fig:spin-dimer-massive} we illustrate some of our results at $U=0$, $0.5$, $1$ and $1.5$ on a lattice size of $L=128$. For clarity we focus on the region of $8 \leq r \leq 40$. The figure clearly shows a non-zero expectation value for the dimer order parameter $\langle D_j(t)\rangle$ at $U=0$ while there is no such expectation value for the spin operator. This is consistent with our expectations that for small values of $U$ the lattice model breaks the $\mathbb{Z}_2^\chi$ chiral symmetry as explained in \cref{sec:bosonization}. Another important point to note is that the spin and dimer correlation functions also behave very differently at least for small values of $U$ but slowly begin to become similar by $U \approx 1.5$.

Assuming we are in the vicinity of the critical point around $U \gtrsim 1.5$, we want to fit our MC data to \cref{eq:spin_ir,eq:dimer_ir}. However, since we work on a finite lattice with periodic boundary conditions, the correlation functions receives finite size corrections. Fortunately, in the conformal phase the finite size corrections can be obtained using the map from an infinite plane to a cylinder, which results in replacing $r$ by
\begin{align}
  r \rightarrow \tilde{r} = \frac{L}{\pi}\sin\frac{\pi r}{L},
\end{align}
where $L$ is the spatial size \cite{Cardy:1984rp,Affleck_1998}. Furthermore, the spin correlation functions clearly shows oscillating behavior due to the higher order operators in the observable with ferromagnetic behavior. Taking these corrections into account, we make the following ansatz for the lattice correlation functions at $U \gtrsim U_c$ on a finite lattice,
\begin{align}
  G_S(\tilde{r}) &= \frac{A_1}{\tilde{r}}
  \left(1-\frac{\tilde{\lambda}_0}{2\pi}\ln\tilde{r}\right)^{\frac{1}{2}} -\  \frac{A_2(-1)^r}{\tilde{r}^2}, \label{eq:critfitspin}  \\
  G_D(\tilde{r}) &= \frac{B}{\tilde{r}}\left(1-\frac{\tilde{\lambda}_0}{2\pi}\ln\tilde{r}\right)^{-\frac{3}{2}},
  \label{eq:critfitdimer}
\end{align}
where we have introduced a single fit parameter $\tilde{\lambda}_0 = \lambda_0/(1+\frac{\lambda_0}{2\pi}\ln \tilde{r}_0)$, which is an RG invariant quantity and numerically equals the coupling measured at $\tilde{r}_0 = 1$ in the lattice unit. The critical point is obtained when $\tilde{\lambda}_0 = 0$.

\begin{table*}[htb]
  \centering
  \renewcommand{\arraystretch}{1.4}
  \setlength{\tabcolsep}{0pt}
  \begin{tabular}{S[table-format=2.4]|S[table-format=2.6]S[table-format=2.6]S[table-format=2.7]S[table-format=3.6]S[table-format=2.3]|S[table-format=2.6]S[table-format=2.6]S[table-format=3.6]S[table-format=2.3]|S[table-format=2.8]S[table-format=3.6]S[table-format=2.3]}
  \TopRule
  & \multicolumn{5}{c|}{Combined fit} & \multicolumn{4}{c|}{Spin fit} & \multicolumn{3}{c}{Dimer fit} \\
  \MidRule
  {$U$}~~ & {$A_1$} & {$A_2$} & {$B$} & {$\tilde\lambda_0$} & {$\chi_\nu^2$} & 
  {$A_1$} & {$A_2$} & {$\tilde\lambda_0^s$} & { $\chi_\nu^2$} & {$B$} & {$\tilde\lambda_0^d$} & { $\chi_\nu^2$} \\
  \MidRule
  1.5 & 0.0816(2) & 0.0241(4) & 0.00747(4) & 0.333(7) & 0.55 & 
  0.0832(3) & 0.0245(4) & 0.409(11) & 0.20 & 
  0.00770(5) & 0.294(9) & 0.32 \\
  1.6 & 0.0836(2) & 0.0247(4) & 0.00766(4) & 0.230(7) & 0.69 & 
  0.0845(3) & 0.0250(4) & 0.273(12) & 0.73 & 
  0.00780(5) & 0.204(9) & 0.46 \\
  1.7 & 0.0838(1) & 0.0248(4) & 0.00818(3) & 0.066(6) & 1.10 & 
  0.0854(2) & 0.0252(4) & 0.145(12) & 0.31 & 
  0.00828(4) & 0.047(6) & 1.44 \\
  1.745 & 0.0839(1) & 0.0249(4) & 0.00847(3) & -0.019(6) & 1.23 & 
  0.0858(3) & 0.0253(4) & 0.082(14) & 0.97 & 
  0.00857(4) & -0.037(7) & 1.02 \\
  1.8 & 0.0852(1) & 0.0253(3) & 0.00846(3) & -0.048(6) & 1.10 & 
  0.0851(2) & 0.0252(4) & -0.054(13) & 0.75 & 
  0.00845(4) & -0.046(7) & 0.30 \\
  1.9 & 0.0854(2) & 0.0253(3) & 0.00890(5) & -0.195(9) & 1.32 & 
  0.0867(2) & 0.0255(3) & -0.123(13) & 1.21 & 
  0.00922(7) & -0.254(13) & 0.99 \\
  2.0 & 0.0871(2) & 0.0257(3) & 0.00894(5) & -0.258(10) & 0.78 & 
  0.0875(3) & 0.0258(3) & -0.234(15) & 0.32 & 
  0.00904(7) & -0.278(14) & 1.20 \\
  2.5 & 0.0911(2) & 0.0262(3) & 0.00982(6) & -0.670(11) & 1.85 &
  0.0929(2) & 0.0265(3) & -0.560(14) & 2.03 &
  0.01054(10) & -0.812(18) & 0.56 \\
  4.0 & 0.0997(2) & 0.0268(3) & 0.01129(6) & -1.345(13) & 1.90 &
  0.1014(2) & 0.0271(3) & -1.222(15) & 1.74 &
  0.01264(13) & -1.628(27) & 0.31 \\
  $\infty$~~~ & 0.1110(2) & 0.0269(2) & 0.01322(5) & -1.975(10) & 9.68 &
  0.1243(2) & 0.0276(2) & -1.683(11) & 3.94 &
  0.01601(11) & -2.548(21) & 0.31 \\
  \BotRule
\end{tabular}
\caption{Parameters in \cref{eq:critfitspin,eq:critfitdimer} obtained by fitting the MC data for $U \gtrsim U_c = 1.75(5)$.}
\label{tab:fitting}
\end{table*}

We have performed a combined fit for both $G_S(\tilde{r})$ and $G_D(\tilde{r})$ to the form \cref{eq:critfitspin,eq:critfitdimer} at each fixed value of $U \geq 1.5$, which is tabulated in the combined fit column of \cref{tab:fitting}. Each of these is a four parameter fit involving $A_1,A_2,B, \tilde{\lambda}_0$ and uses all data from $L = 64$, $80$, $96$, $128$ and $12 \leq r \leq 40$. As can be seen from these results, our data fit well to the form \cref{eq:critfitspin,eq:critfitdimer} for all couplings in the range $ 1.5 \leq U \leq 2.0$. Since at the critical point we expect $\tilde{\lambda}_0 \approx 0$, applying conservative systematic errors related to our fitting procedures we estimate $U_c = 1.75(5)$. In order to visualize that the correlation functions indeed obeys the power law near the critical point, we plot them in $\log$ scale in \cref{fig:spin-dimer-critical}. We only plot at even $r$ in order to avoid the distracting oscillation.

\begin{figure*}[htb]
  \includegraphics[width=0.49\textwidth]{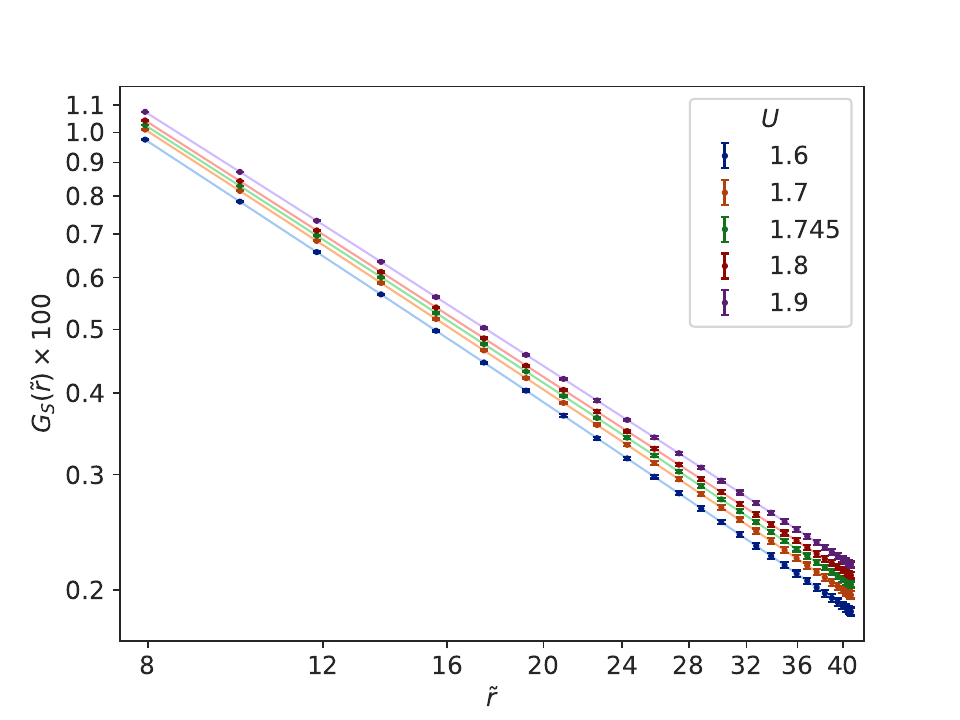}
  \includegraphics[width=0.49\textwidth]{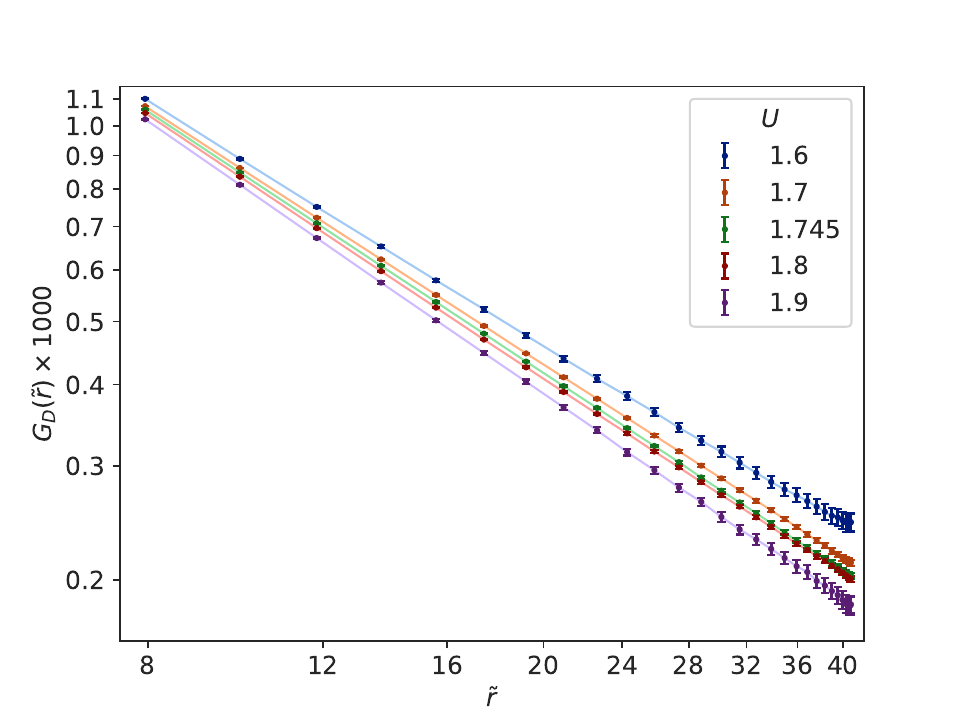}
  \caption{Spin correlation functions $G_S(\tilde{r})$ and dimer correlation functions $G_D(\tilde{r})$ as functions of $\tilde{r}$ at even $r$ and $L=128$ for $U=1.6$, $1.7$, $1.745$, $1.8$ and $1.9$. Both correlation functions decrease algebraically in $\tilde{r}$ with $\log$ corrections.
  }
  \label{fig:spin-dimer-critical}
\end{figure*}

The quality of the combined fit column in \cref{tab:fitting} becomes poor for $U > 2.0$, and changing the fitting range of $r$ does not seem to improve the fits much. In fact the fit seems to have a very bad reduced chi-squared $\chi^2_\nu = 9.68$ when $U\rightarrow\infty$. This seems a bit surprising since the forms described by \cref{eq:critfitspin,eq:critfitdimer} must work well in the entire conformal phase. One reason is that our data is very precise at $U=\infty$, where $H$ is the same as the Heisenberg spin chain, and therefore the results become sensitive to higher order terms in the lattice model which are ignored in the theoretical analysis. In this limit the meron clusters have the same weight as the non-merons, and therefore we do not need to check merons and the algorithm is much more efficient. However, we believe there is more to the story here. Fortunately, there are precise asymptotic results for the spin and dimer correlation functions in the Heisenberg spin chain \cite{Affleck_1998,PhysRevB.96.134429,PhysRevB.94.014417}:
\begin{align}
  G_S(\tilde{r}\rightarrow \infty) &= \frac{\left(\ln \tilde{r}\right)^{\frac{1}{2}}}{(2\pi)^{3/2}\tilde{r}} - \frac{(-1)^r}{(2\pi)^{2} \tilde{r}^2},\label{eq:Heisenberg_spin} \\
  G_D(\tilde{r}\rightarrow \infty) &= \frac{\left(\ln \tilde{r}\right)^{-\frac{3}{2}}}{(2\pi)^{3/2}\tilde{r}} + \frac{(-1)^r\left(\ln \tilde{r}\right)^2}{6\pi^4 \tilde{r}^4},\label{eq:Heisenberg_dimer}
\end{align}
and indeed they are consistent with \cref{eq:critfitspin,eq:critfitdimer} in the $\tilde{r}\rightarrow\infty$ limit. More precisely in the limit $U\rightarrow\infty$, we must observe the following constraints among the fit parameters $2\pi A_1 (-\tilde{\lambda}_0)^{1/2} \stackrel{!}{=} 4\pi^2 A_2 \stackrel{!}{=} 8\pi^3 B(-\tilde{\lambda}_0)^{-3/2} \stackrel{!}{=} 1$. However results from the combined fit give us $2\pi A_1 (-\tilde{\lambda}_0)^{1/2} \approx 1.06$, $4\pi^2 A_2 \approx 1.06$ and $8\pi^3 B(-\tilde{\lambda}_0)^{-3/2} \approx 1.18$. While the first two constraints are not far from expectations, the last one seems quite a bit off. We want to check how the fits change if we allow for $\tilde\lambda_0$ to be different in the spin and dimer correlation functions. Therefore we performed separate fits of our data, which are tabulated in the spin fit and dimer fit columns in \cref{tab:fitting}, and plotted in \cref{fig:spin-dimer-Heisenberg}. Note that in the vicinity of the critical point both values tend to become small as expected. Focusing on the $U\rightarrow\infty$ limit, the separate fits now suggest $2\pi A_1 (-\tilde{\lambda}_0^s)^{1/2} \approx 1.01$, $4\pi^2 A_2 \approx 1.09$ and $8\pi^3 B(-\tilde{\lambda}_0^d)^{-3/2} \approx 0.98$. Although the deviation in $A_2$ still seems to be large, this is understandable because $A_2$ is a higher order correction. Now they are in much better agreement with the above constraints, suggesting that for some reason that we do not yet understand the values of $\tilde{\lambda}_0$ for spin and dimer begin to drift apart as we go away from the critical point.
\begin{figure*}[htp]
  \includegraphics[width=0.49\textwidth]{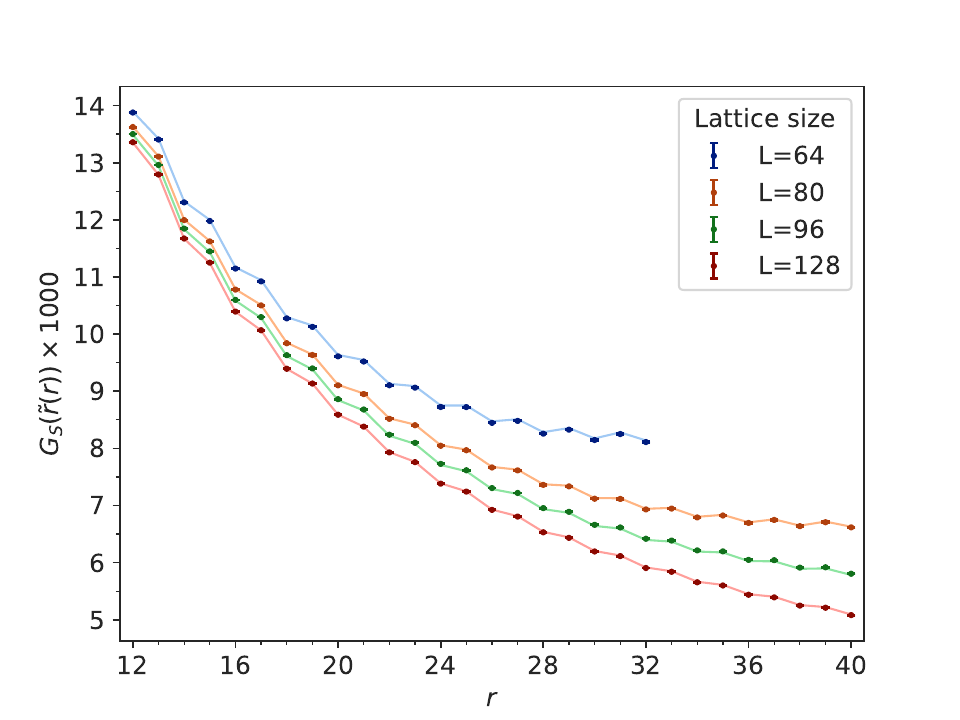}
  \includegraphics[width=0.49\textwidth]{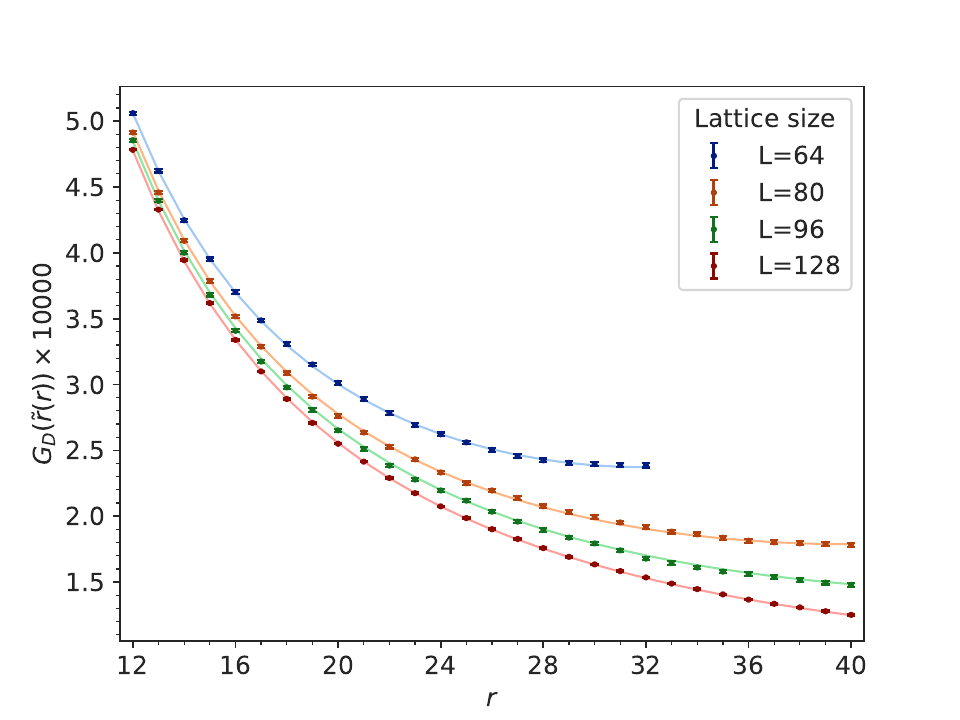}
  \caption{MC results for the spin and dimer correlation functions $G_S(\tilde r(r))$(left) and $G_D(\tilde r(r))$(right) obtained at $U=\infty$ are shown as function of $r$. The solid line that passes through the data are fits to \cref{eq:critfitspin,eq:critfitdimer}. 
  }
  \label{fig:spin-dimer-Heisenberg}
\end{figure*}

\subsection{Exact Diagonalization Results}\label{sec:exactdiag}

In order to confirm that our estimate for the critical point is reasonable, we also use an alternate idea outlined in \cite{OKAMOTO1992433}, based on the low energy physics of the WZW model \cref{eq:k1wzw} that emerges at the critical point. Since the WZW model is invariant under the enhanced chiral transformations with two independent $SU(2)$ (``left'' and ``right'') symmetries, the energy eigenstates can be labeled with spin quantum numbers $(s_L,s_R)$. It is known that the ground state has $(s_L, s_R) = (0,0)$ with momentum $0$, while the first excited state has $(s_L, s_R) = (\frac{1}{2},\frac{1}{2})$ with momentum $\pi$ \cite{Affleck:1988px}. However, since the lattice model is only invariant under the diagonal $SU(2)$ subgroup, the energy eigenstates on the lattice will only be labeled by the total spin $s_\text{tot}$. The four-fold degeneracy requires fine tuning to the critical point where the singlet ($s_\text{tot}=0$) and the triplet ($s_\text{tot} = 1$) state together form an $(s_L, s_R) = (\frac{1}{2},\frac{1}{2})$ state. This suggests an alternative method to determine the critical point: we can compute the lowest five energy eigenvalues as a function of $U$ and $L$ using an exact diagonalization method and locate the coupling where the first excited state becomes four-fold degenerate. When $L$ is finite, the critical coupling where the energies of these two total spin sectors cross can be referred to as a pseudo-critical point $U_c(L)$. This point turns into the true critical point as $L$ becomes large.

In order to implement the above idea we have computed the first five eigen-energies by diagonalizing the Hamiltonian on small lattices with the coordinate descent method \cite{Wang:2019a,Wang:2019b}. The behavior of the lowest five states as a function of $U$ at $L=14$ and $L=16$ is plotted in \cref{fig:exactd}. We observe that in the broken phase (small $U$) the ground state and the first excited state turn out to be spin singlets with $s_\text{tot} =0$. The next three degenerate excited states form a triplet with $s_\text{tot} =1$. In contrast when $U > U_c$, the triplet states have lower energy as compared to the singlet state. We thus can determine $U_c(L)$ as the coupling where the triplet and singlet states cross, which are tabulated in \cref{tab:U_cL} and plotted in \cref{fig:pUcL} as a function of $1/L^2$. Based on \cref{fig:pUcL} we can estimate $U_c(L) \approx 1.746(1)$, which is consistent with our estimate in the previous section and with the estimate using finite size scaling in the conference proceeding published earlier \cite{Liu:2019dvk}.

\begin{table}[htb]
  \centering
  \renewcommand{\arraystretch}{1.2}
  \setlength{\tabcolsep}{4pt}
  \begin{tabular}{S[table-format=3.1]S[table-format=2.9]|S[table-format=3.1]S[table-format=2.9]}
    \TopRule
    {$L$} & {$U_c(L)$} & {$L$} & {$U_c(L)$} \\ \MidRule
    4 & 2.53982388 & 6 & 1.26699259 \\
    8 & 1.86035876 & 10 & 1.68834326 \\
    12 & 1.75726171 & 14 & 1.73529271 \\
    16 & 1.74508577 & 18 & 1.74308979 \\
    20 & 1.74502594 & 22 & 1.745311  \\ 
    24 & 1.745996 & 26 & 1.746393 \\ \BotRule
  \end{tabular}
  \caption{Pseudo-critical couplings $U_c(L)$ as a function of $L$ obtained using the exact diagonalization method.}
  \label{tab:U_cL}
\end{table}

\begin{figure*}[ht]
\centering 
\includegraphics[width=0.49\textwidth]{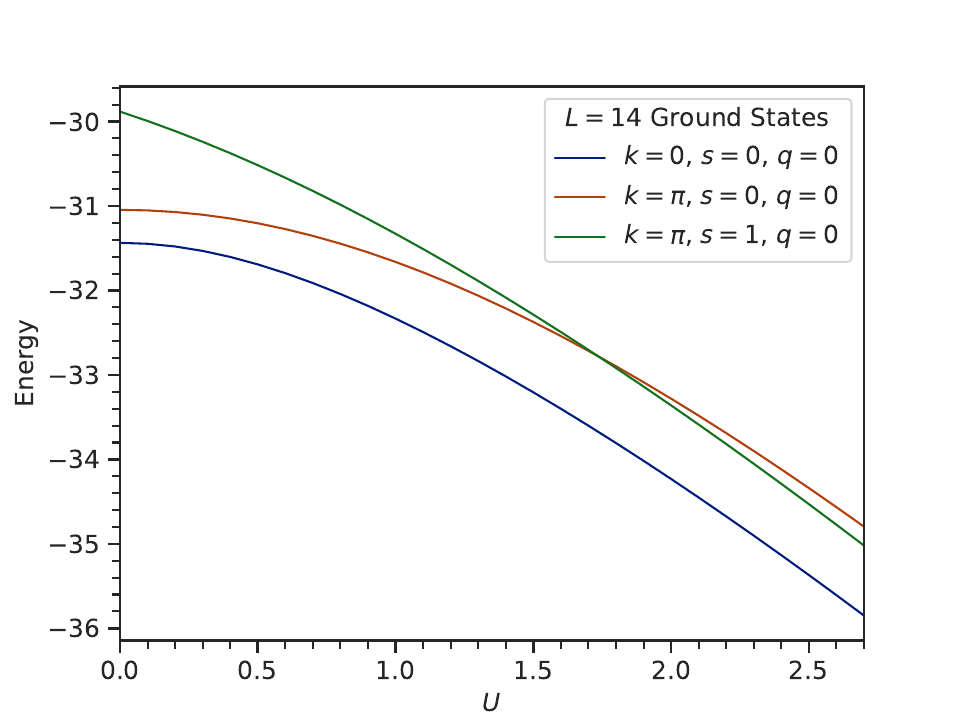}
\includegraphics[width=0.49\textwidth]{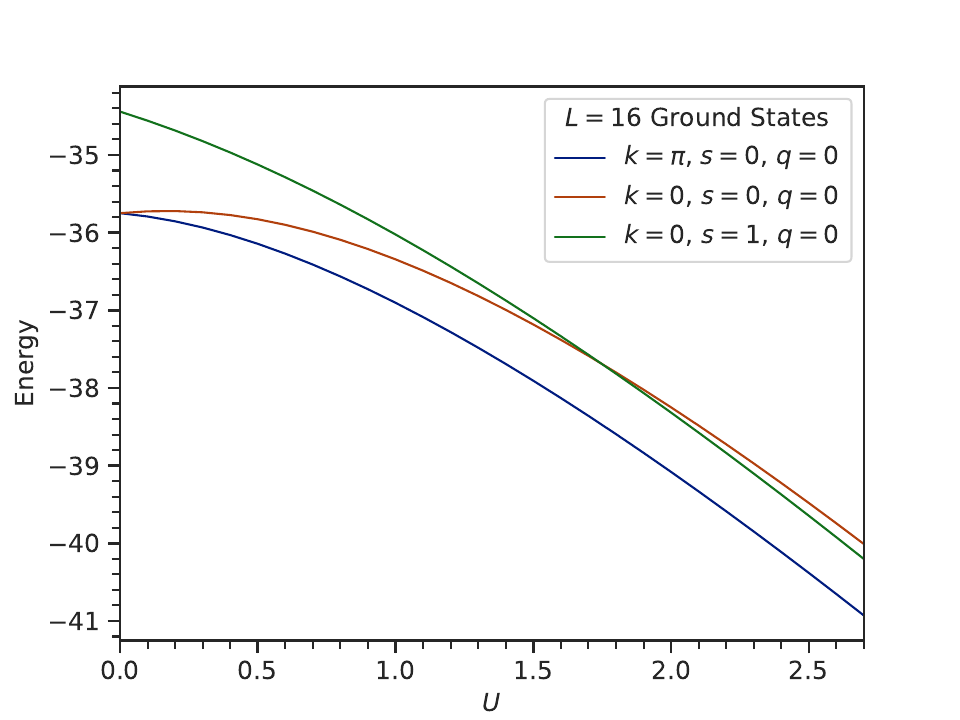}
\caption{The plot of the lowest five energy eigenvalues obtained using an exact diagonalization method as a function of $U$ at $L=14$ and $L=16$. Note that only three eigenvalues are shown because the $s=1$ states are three fold degenerate.}
\label{fig:exactd}
\end{figure*}

\begin{figure}[ht]
\centering 
\includegraphics[width=0.49\textwidth]{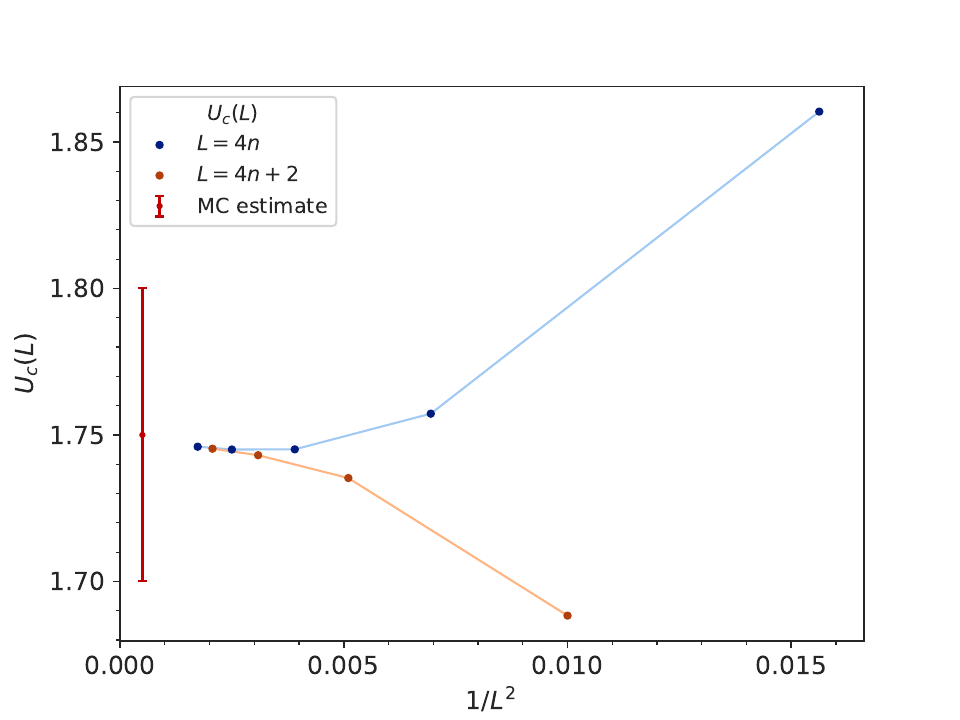}
\caption{Plot of the behavior of the pseudo-critical coupling $U_c(L)$ as a function of $1/L^2$. The values of the couplings for this plot are given in \cref{tab:U_cL}.}
\label{fig:pUcL}
\end{figure}

Looking more closely at \cref{fig:exactd} we observe some peculiarities. For example at $U=0$ we note that the ground state is degenerate at $L=16$ but not at $L=14$. We have explained the reason for this degeneracy in \cref{app:degeneracy}. In fact we show that there is an interesting difference in the energy spectrum when the lattice size is $L=4n$ versus when it is $L=4n-2$, where $n=1,2,\cdots$. Moreover, this difference permeates even away from $U=0$ and is observed in the dramatically different values of $U_c(L)$ when $L$ is small. The difference however decreases rapidly as $L$ increases, and both of them approach to the true critical point as expected. Another peculiarity comes from the momentum quantum number $k$ for the first five energy eigenstates. We note that $k$ flips between $\pi$ and $0$ when comparing $L=4n$ with $4n+2$. This is as expected because similar phenomena are also observed in the Heisenberg spin chain, with $k = 0$ for $L=4n$ and $k = \pi$ for $L=4n+2$ \cite{Karbach:1998}. The extra $\pi$ phase in our model compared to the Heisenberg spin chain is due to the fermionic nature of our model, since there is an intrinsic extra minus sign when the system is translated by one site when $L = 2n$.

While it is not easy to extend our meron-cluster algorithm for the more general Hamiltonian with the parameter $\varepsilon$ introduced in \cref{eq:modHJ}, we can extend the above exact diagonalization method to determine the critical point for arbitrary $\varepsilon$. For example, when $\varepsilon = 0.1$ and $J=1$, $U_c(L)$ is tabulated in \cref{tab:U_cJ0.1}. When $\varepsilon$ is small the perturbative analysis should be a good guide and we obtained at leading order $U_c= 4 J \varepsilon = 0.4$ in \cref{sec:continuum}, which is in good agreement with \cref{tab:U_cJ0.1}.

\begin{table}[htb]
  \centering
  \renewcommand{\arraystretch}{1.2}
  \setlength{\tabcolsep}{4pt}
  \begin{tabular}{S[table-format=3.1]S[table-format=2.9]|S[table-format=3.1]S[table-format=2.9]}
    \TopRule
    {$L$} & {$U_c(L)$} & {$L$} & {$U_c(L)$} \\ \MidRule
    4 & 0.399755 & 6 & 0.195927 \\ 
    8 & 0.397655 & 10 & 0.310813 \\
    12 & 0.396644 & 14 & 0.347239 \\
    16 & 0.396076 & 18 & 0.3636  \\ \BotRule
  \end{tabular}
  \caption{Pseudo-critical couplings $U_c(L)$ as a function of $L$ obtained using the exact diagonalization method for the model defined by \cref{eq:modHJ} at $\varepsilon=0.1$ and $J = 1$.}
  \label{tab:U_cJ0.1}
\end{table}

Using the exact diagonalization method we have also confirmed that our model at $U = 0$ is indeed in a massive phase with spontaneously broken $\mathbb{Z}_2^\chi$ symmetry. In such phase the energy gap to the first excited state is expected to become exponentially small as $L$ becomes large, while the gap to the second excited state remains non-zero at $L\rightarrow\infty$. We plot these gaps in \cref{fig:gaps-L}, where the solid lines are exponential fits. These features are qualitatively visible, along with the peculiar degeneracy in our model. 

\begin{figure}[htb]
\includegraphics[width=0.48\textwidth]{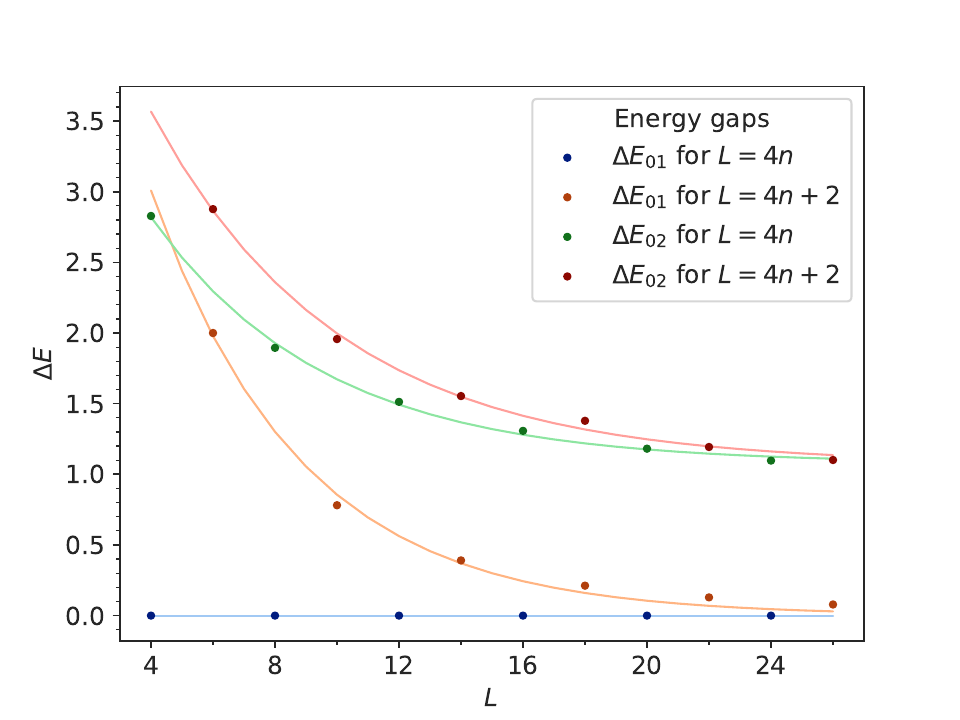}
\caption{Scaling of energy gaps of the first and second excited states with the ground state at $U = 0$. The gaps of the first excited states close at $L\rightarrow\infty$ for both $L = 4n$ and $4n+2$. At finite $L$, the former is identically zero, while the latter closes exponentially. The gaps of the second excited states stay open for both $L = 4n$ and $4n+2$. The curves show exponential fits.}
\label{fig:gaps-L}
\end{figure}

\section{Summary and Conclusions}
\label{conclusions}

In this work we constructed a strongly correlated lattice fermion Hamiltonian that was solvable by the meron-cluster algorithm. We were able to add the Hubbard coupling $U$ to our model without losing this property. This combined lattice model had $SO(4)\times \mathbb{Z}_2^\chi$ symmetry for all values of $U$, and an extra spin-charge flip symmetry $\mathbb{Z}_2^\chi$ at $U = 0$. While the lattice model can be formulated in any dimension, here we studied it in one spatial dimension. In order to study the phase diagram of the model as a function of $U$ we used non-abelian bosonization to relate our model to two decoupled $SU(2)_1$ WZW models with marginal couplings that depend on $U$. In particular we discovered that the model undergoes a quantum phase transition between a massive phase where the $\mathbb{Z}_2^\chi$ symmetry is spontaneously broken and a conformal phase without such symmetry breaking. The massive phase is the well known asymptotically free chiral-mass GN model with massive fermions and spontaneous symmetry breaking, while the conformal phase is the single $SU(2)_1$ WZW model. Using the meron-cluster method and exact diagonalization we provided further evidence of this scenario.

We view the current study as just a first step in demonstrating that meron-cluster algorithms can be useful to study phase diagrams of lattice fermion systems where fermions become massive, while the low energy bosonic physics is still interesting either due to frustration or the presence of topological terms. Extending our model to $2+1$ dimensions would be very interesting. For example in \cite{Li:2019acc} it was argued that in an extended Hubbard model very similar to ours, there is a direct second order phase transition between a VBS phase and an antiferromagnetic phase, and an enhanced $SO(5)$ symmetry is expected to emerge at the transition. The critical exponents at this exotic transition were computed in the earlier studies, but unfortunately they do not match those obtained from a loop gas formulation of the same phase transition \cite{PhysRevLett.115.267203}. On the other hand the critical exponents of the fermionic realization seem consistent with the bounds obtained from conformal bootstrap \cite{Poland:2018epd}. As far as we know, this controversy remains unresolved, partially because the fermionic model was studied on rather small lattices with less than $500$ lattice sites, due to difficulties associated with auxiliary field fermion MC algorithm. We are currently exploring if our model and its extensions allow us to study this transition more efficiently using the meron-cluster approach.

\begin{acknowledgments}
We would like to thank Emilie Huffman and Hersh Singh for helpful discussions and pointing us to the literature on interesting fermionic quantum critical behavior. SC would like to thank Uwe-Jens Wiese for discussions. HL would like to thank Zhe Wang for sharing his computer codes on the coordinate descend method, and Xin Zhang for helpful discussions. The material presented here is supported by the U.S. Department of Energy, Office of Science, Nuclear Physics program under Award Numbers DE-FG02-05ER41368. RKK was supported in part by NSF award DMR-1611161. This work used computational resources provided by the Extreme Science and Engineering Discovery Environment (XSEDE) \cite{xsede}, which is supported by National Science Foundation grant number ACI-1548562.
\end{acknowledgments}

\addcontentsline{toc}{section}{References}
\bibliography{Refs,GN,MC}

\begin{thebibliography}{82}%
\makeatletter
\providecommand \@ifxundefined [1]{%
 \@ifx{#1\undefined}
}%
\providecommand \@ifnum [1]{%
 \ifnum #1\expandafter \@firstoftwo
 \else \expandafter \@secondoftwo
 \fi
}%
\providecommand \@ifx [1]{%
 \ifx #1\expandafter \@firstoftwo
 \else \expandafter \@secondoftwo
 \fi
}%
\providecommand \natexlab [1]{#1}%
\providecommand \enquote  [1]{``#1''}%
\providecommand \bibnamefont  [1]{#1}%
\providecommand \bibfnamefont [1]{#1}%
\providecommand \citenamefont [1]{#1}%
\providecommand \href@noop [0]{\@secondoftwo}%
\providecommand \href [0]{\begingroup \@sanitize@url \@href}%
\providecommand \@href[1]{\@@startlink{#1}\@@href}%
\providecommand \@@href[1]{\endgroup#1\@@endlink}%
\providecommand \@sanitize@url [0]{\catcode `\\12\catcode `\$12\catcode
  `\&12\catcode `\#12\catcode `\^12\catcode `\_12\catcode `\%12\relax}%
\providecommand \@@startlink[1]{}%
\providecommand \@@endlink[0]{}%
\providecommand \url  [0]{\begingroup\@sanitize@url \@url }%
\providecommand \@url [1]{\endgroup\@href {#1}{\urlprefix }}%
\providecommand \urlprefix  [0]{URL }%
\providecommand \Eprint [0]{\href }%
\providecommand \doibase [0]{http://dx.doi.org/}%
\providecommand \selectlanguage [0]{\@gobble}%
\providecommand \bibinfo  [0]{\@secondoftwo}%
\providecommand \bibfield  [0]{\@secondoftwo}%
\providecommand \translation [1]{[#1]}%
\providecommand \BibitemOpen [0]{}%
\providecommand \bibitemStop [0]{}%
\providecommand \bibitemNoStop [0]{.\EOS\space}%
\providecommand \EOS [0]{\spacefactor3000\relax}%
\providecommand \BibitemShut  [1]{\csname bibitem#1\endcsname}%
\let\auto@bib@innerbib\@empty
\bibitem [{\citenamefont {Rosenstein}\ \emph {et~al.}(1991)\citenamefont
  {Rosenstein}, \citenamefont {Warr},\ and\ \citenamefont
  {Park}}]{Rosenstein:1990nm}%
  \BibitemOpen
  \bibfield  {author} {\bibinfo {author} {\bibfnamefont {B.}~\bibnamefont
  {Rosenstein}}, \bibinfo {author} {\bibfnamefont {Brian}\ \bibnamefont
  {Warr}}, \ and\ \bibinfo {author} {\bibfnamefont {S.~H.}\ \bibnamefont
  {Park}},\ }\bibfield  {title} {\enquote {\bibinfo {title} {{Dynamical
  symmetry breaking in four Fermi interaction models}},}\ }\href {\doibase
  10.1016/0370-1573(91)90129-A} {\bibfield  {journal} {\bibinfo  {journal}
  {Phys. Rept.}\ }\textbf {\bibinfo {volume} {205}},\ \bibinfo {pages}
  {59--108} (\bibinfo {year} {1991})}\BibitemShut {NoStop}%
\bibitem [{\citenamefont {Zinn-Justin}(1991)}]{ZinnJustin:1991yn}%
  \BibitemOpen
  \bibfield  {author} {\bibinfo {author} {\bibfnamefont {Jean}\ \bibnamefont
  {Zinn-Justin}},\ }\bibfield  {title} {\enquote {\bibinfo {title} {{Four
  fermion interaction near four-dimensions}},}\ }\href {\doibase
  10.1016/0550-3213(91)90043-W} {\bibfield  {journal} {\bibinfo  {journal}
  {Nucl. Phys. B}\ }\textbf {\bibinfo {volume} {367}},\ \bibinfo {pages}
  {105--122} (\bibinfo {year} {1991})}\BibitemShut {NoStop}%
\bibitem [{\citenamefont {Gross}\ and\ \citenamefont
  {Neveu}(1974)}]{Gross:1974jv}%
  \BibitemOpen
  \bibfield  {author} {\bibinfo {author} {\bibfnamefont {David~J.}\
  \bibnamefont {Gross}}\ and\ \bibinfo {author} {\bibfnamefont {Andre}\
  \bibnamefont {Neveu}},\ }\bibfield  {title} {\enquote {\bibinfo {title}
  {{Dynamical Symmetry Breaking in Asymptotically Free Field Theories}},}\
  }\href {\doibase 10.1103/PhysRevD.10.3235} {\bibfield  {journal} {\bibinfo
  {journal} {Phys. Rev. D}\ }\textbf {\bibinfo {volume} {10}},\ \bibinfo
  {pages} {3235} (\bibinfo {year} {1974})}\BibitemShut {NoStop}%
\bibitem [{\citenamefont {Thirring}(1958)}]{Thirring:1958in}%
  \BibitemOpen
  \bibfield  {author} {\bibinfo {author} {\bibfnamefont {Walter~E.}\
  \bibnamefont {Thirring}},\ }\bibfield  {title} {\enquote {\bibinfo {title}
  {{A Soluble relativistic field theory?}}}\ }\href {\doibase
  10.1016/0003-4916(58)90015-0} {\bibfield  {journal} {\bibinfo  {journal}
  {Annals Phys.}\ }\textbf {\bibinfo {volume} {3}},\ \bibinfo {pages} {91--112}
  (\bibinfo {year} {1958})},\ \bibinfo {note} {[,509(1958)]}\BibitemShut
  {NoStop}%
\bibitem [{\citenamefont {Gies}\ and\ \citenamefont
  {Janssen}(2010)}]{Gies:2010st}%
  \BibitemOpen
  \bibfield  {author} {\bibinfo {author} {\bibfnamefont {Holger}\ \bibnamefont
  {Gies}}\ and\ \bibinfo {author} {\bibfnamefont {Lukas}\ \bibnamefont
  {Janssen}},\ }\bibfield  {title} {\enquote {\bibinfo {title} {{UV fixed-point
  structure of the three-dimensional Thirring model}},}\ }\href {\doibase
  10.1103/PhysRevD.82.085018} {\bibfield  {journal} {\bibinfo  {journal} {Phys.
  Rev. D}\ }\textbf {\bibinfo {volume} {82}},\ \bibinfo {pages} {085018}
  (\bibinfo {year} {2010})},\ \Eprint {http://arxiv.org/abs/1006.3747}
  {arXiv:1006.3747 [hep-th]} \BibitemShut {NoStop}%
\bibitem [{\citenamefont {Janssen}\ and\ \citenamefont
  {Gies}(2012)}]{Janssen:2012pq}%
  \BibitemOpen
  \bibfield  {author} {\bibinfo {author} {\bibfnamefont {Lukas}\ \bibnamefont
  {Janssen}}\ and\ \bibinfo {author} {\bibfnamefont {Holger}\ \bibnamefont
  {Gies}},\ }\bibfield  {title} {\enquote {\bibinfo {title} {{Critical behavior
  of the (2+1)-dimensional Thirring model}},}\ }\href {\doibase
  10.1103/PhysRevD.86.105007} {\bibfield  {journal} {\bibinfo  {journal} {Phys.
  Rev. D}\ }\textbf {\bibinfo {volume} {86}},\ \bibinfo {pages} {105007}
  (\bibinfo {year} {2012})},\ \Eprint {http://arxiv.org/abs/1208.3327}
  {arXiv:1208.3327 [hep-th]} \BibitemShut {NoStop}%
\bibitem [{\citenamefont {Gehring}\ \emph {et~al.}(2015)\citenamefont
  {Gehring}, \citenamefont {Gies},\ and\ \citenamefont
  {Janssen}}]{PhysRevD.92.085046}%
  \BibitemOpen
  \bibfield  {author} {\bibinfo {author} {\bibfnamefont {Friedrich}\
  \bibnamefont {Gehring}}, \bibinfo {author} {\bibfnamefont {Holger}\
  \bibnamefont {Gies}}, \ and\ \bibinfo {author} {\bibfnamefont {Lukas}\
  \bibnamefont {Janssen}},\ }\bibfield  {title} {\enquote {\bibinfo {title}
  {Fixed-point structure of low-dimensional relativistic fermion field
  theories: Universality classes and emergent symmetry},}\ }\href {\doibase
  10.1103/PhysRevD.92.085046} {\bibfield  {journal} {\bibinfo  {journal} {Phys.
  Rev. D}\ }\textbf {\bibinfo {volume} {92}},\ \bibinfo {pages} {085046}
  (\bibinfo {year} {2015})}\BibitemShut {NoStop}%
\bibitem [{\citenamefont {Vafek}\ and\ \citenamefont
  {Vishwanath}(2014)}]{vafek2014:arcmp}%
  \BibitemOpen
  \bibfield  {author} {\bibinfo {author} {\bibfnamefont {Oskar}\ \bibnamefont
  {Vafek}}\ and\ \bibinfo {author} {\bibfnamefont {Ashvin}\ \bibnamefont
  {Vishwanath}},\ }\bibfield  {title} {\enquote {\bibinfo {title} {Dirac
  fermions in solids: From high-tc cuprates and graphene to topological
  insulators and weyl semimetals},}\ }\href {\doibase
  10.1146/annurev-conmatphys-031113-133841} {\bibfield  {journal} {\bibinfo
  {journal} {Annual Review of Condensed Matter Physics}\ }\textbf {\bibinfo
  {volume} {5}},\ \bibinfo {pages} {83--112} (\bibinfo {year} {2014})},\
  \Eprint
  {http://arxiv.org/abs/https://doi.org/10.1146/annurev-conmatphys-031113-133841}
  {https://doi.org/10.1146/annurev-conmatphys-031113-133841} \BibitemShut
  {NoStop}%
\bibitem [{\citenamefont {Herbut}(2006)}]{herbut2006:grprl}%
  \BibitemOpen
  \bibfield  {author} {\bibinfo {author} {\bibfnamefont {Igor~F.}\ \bibnamefont
  {Herbut}},\ }\bibfield  {title} {\enquote {\bibinfo {title} {Interactions and
  phase transitions on graphene’s honeycomb lattice},}\ }\href {\doibase
  10.1103/physrevlett.97.146401} {\bibfield  {journal} {\bibinfo  {journal}
  {Physical Review Letters}\ }\textbf {\bibinfo {volume} {97}} (\bibinfo {year}
  {2006}),\ 10.1103/physrevlett.97.146401}\BibitemShut {NoStop}%
\bibitem [{\citenamefont {Hands}\ \emph
  {et~al.}(1993{\natexlab{a}})\citenamefont {Hands}, \citenamefont {Kocic},\
  and\ \citenamefont {Kogut}}]{Hands:1992be}%
  \BibitemOpen
  \bibfield  {author} {\bibinfo {author} {\bibfnamefont {Simon}\ \bibnamefont
  {Hands}}, \bibinfo {author} {\bibfnamefont {Aleksandar}\ \bibnamefont
  {Kocic}}, \ and\ \bibinfo {author} {\bibfnamefont {John~B.}\ \bibnamefont
  {Kogut}},\ }\bibfield  {title} {\enquote {\bibinfo {title} {{Four Fermi
  theories in fewer than four-dimensions}},}\ }\href {\doibase
  10.1006/aphy.1993.1039} {\bibfield  {journal} {\bibinfo  {journal} {Annals
  Phys.}\ }\textbf {\bibinfo {volume} {224}},\ \bibinfo {pages} {29--89}
  (\bibinfo {year} {1993}{\natexlab{a}})},\ \Eprint
  {http://arxiv.org/abs/hep-lat/9208022} {arXiv:hep-lat/9208022} \BibitemShut
  {NoStop}%
\bibitem [{\citenamefont {Hands}\ \emph
  {et~al.}(1993{\natexlab{b}})\citenamefont {Hands}, \citenamefont {Kocic},\
  and\ \citenamefont {Kogut}}]{Hands:1992ck}%
  \BibitemOpen
  \bibfield  {author} {\bibinfo {author} {\bibfnamefont {Simon}\ \bibnamefont
  {Hands}}, \bibinfo {author} {\bibfnamefont {Aleksandar}\ \bibnamefont
  {Kocic}}, \ and\ \bibinfo {author} {\bibfnamefont {John~B.}\ \bibnamefont
  {Kogut}},\ }\bibfield  {title} {\enquote {\bibinfo {title} {{The Four Fermi
  model in three-dimensions at nonzero density and temperature}},}\ }\href
  {\doibase 10.1016/0550-3213(93)90460-7} {\bibfield  {journal} {\bibinfo
  {journal} {Nucl. Phys. B}\ }\textbf {\bibinfo {volume} {390}},\ \bibinfo
  {pages} {355--378} (\bibinfo {year} {1993}{\natexlab{b}})},\ \Eprint
  {http://arxiv.org/abs/hep-lat/9206024} {arXiv:hep-lat/9206024} \BibitemShut
  {NoStop}%
\bibitem [{\citenamefont {K\"{a}rkk\"{a}inen}\ \emph
  {et~al.}(1994)\citenamefont {K\"{a}rkk\"{a}inen}, \citenamefont {Lacaze},
  \citenamefont {Lacock},\ and\ \citenamefont {Petersson}}]{Karkkainen94}%
  \BibitemOpen
  \bibfield  {author} {\bibinfo {author} {\bibfnamefont {L.}~\bibnamefont
  {K\"{a}rkk\"{a}inen}}, \bibinfo {author} {\bibfnamefont {R.}~\bibnamefont
  {Lacaze}}, \bibinfo {author} {\bibfnamefont {P.}~\bibnamefont {Lacock}}, \
  and\ \bibinfo {author} {\bibfnamefont {B.}~\bibnamefont {Petersson}},\
  }\bibfield  {title} {\enquote {\bibinfo {title} {Critical behaviour of the
  three-dimensional gross-neveu and higgs-yukawa models},}\ }\href {\doibase
  10.1016/0550-3213(94)90309-3} {\bibfield  {journal} {\bibinfo  {journal}
  {Nucl. Phys. B}\ }\textbf {\bibinfo {volume} {415}},\ \bibinfo {pages}
  {781–796} (\bibinfo {year} {1994})}\BibitemShut {NoStop}%
\bibitem [{\citenamefont {Del~Debbio}\ \emph {et~al.}(1997)\citenamefont
  {Del~Debbio}, \citenamefont {Hands},\ and\ \citenamefont
  {Mehegan}}]{DelDebbio:1997dv}%
  \BibitemOpen
  \bibfield  {author} {\bibinfo {author} {\bibfnamefont {L.}~\bibnamefont
  {Del~Debbio}}, \bibinfo {author} {\bibfnamefont {S.J.}\ \bibnamefont
  {Hands}}, \ and\ \bibinfo {author} {\bibfnamefont {J.C.}\ \bibnamefont
  {Mehegan}} (\bibinfo {collaboration} {UKQCD}),\ }\bibfield  {title} {\enquote
  {\bibinfo {title} {{The Three-dimensional Thirring model for small N(f)}},}\
  }\href {\doibase 10.1016/S0550-3213(97)00435-5} {\bibfield  {journal}
  {\bibinfo  {journal} {Nucl. Phys. B}\ }\textbf {\bibinfo {volume} {502}},\
  \bibinfo {pages} {269--308} (\bibinfo {year} {1997})},\ \Eprint
  {http://arxiv.org/abs/hep-lat/9701016} {arXiv:hep-lat/9701016} \BibitemShut
  {NoStop}%
\bibitem [{\citenamefont {Kogut}\ \emph {et~al.}(1998)\citenamefont {Kogut},
  \citenamefont {Lagae},\ and\ \citenamefont {Sinclair}}]{Kogut:1998rg}%
  \BibitemOpen
  \bibfield  {author} {\bibinfo {author} {\bibfnamefont {J.B.}\ \bibnamefont
  {Kogut}}, \bibinfo {author} {\bibfnamefont {J.F.}\ \bibnamefont {Lagae}}, \
  and\ \bibinfo {author} {\bibfnamefont {D.K.}\ \bibnamefont {Sinclair}},\
  }\bibfield  {title} {\enquote {\bibinfo {title} {{Thermodynamics of lattice
  QCD with chiral four fermion interactions}},}\ }\href {\doibase
  10.1103/PhysRevD.58.034504} {\bibfield  {journal} {\bibinfo  {journal} {Phys.
  Rev. D}\ }\textbf {\bibinfo {volume} {58}},\ \bibinfo {pages} {034504}
  (\bibinfo {year} {1998})},\ \Eprint {http://arxiv.org/abs/hep-lat/9801019}
  {arXiv:hep-lat/9801019} \BibitemShut {NoStop}%
\bibitem [{\citenamefont {Del~Debbio}\ and\ \citenamefont
  {Hands}(1999)}]{DelDebbio:1999he}%
  \BibitemOpen
  \bibfield  {author} {\bibinfo {author} {\bibfnamefont {L.}~\bibnamefont
  {Del~Debbio}}\ and\ \bibinfo {author} {\bibfnamefont {S.J.}\ \bibnamefont
  {Hands}},\ }\bibfield  {title} {\enquote {\bibinfo {title} {{The
  Three-dimensional Thirring model for N(f) = 4 and N(f) = 6}},}\ }\href
  {\doibase 10.1016/S0550-3213(99)00258-8} {\bibfield  {journal} {\bibinfo
  {journal} {Nucl. Phys. B}\ }\textbf {\bibinfo {volume} {552}},\ \bibinfo
  {pages} {339--362} (\bibinfo {year} {1999})},\ \Eprint
  {http://arxiv.org/abs/hep-lat/9902014} {arXiv:hep-lat/9902014} \BibitemShut
  {NoStop}%
\bibitem [{\citenamefont {Hands}\ and\ \citenamefont
  {Lucini}(1999)}]{Hands:1999id}%
  \BibitemOpen
  \bibfield  {author} {\bibinfo {author} {\bibfnamefont {Simon}\ \bibnamefont
  {Hands}}\ and\ \bibinfo {author} {\bibfnamefont {Biagio}\ \bibnamefont
  {Lucini}},\ }\bibfield  {title} {\enquote {\bibinfo {title} {{The Phase
  diagram of the three dimensional Thirring model}},}\ }\href {\doibase
  10.1016/S0370-2693(99)00843-6} {\bibfield  {journal} {\bibinfo  {journal}
  {Phys. Lett. B}\ }\textbf {\bibinfo {volume} {461}},\ \bibinfo {pages}
  {263--269} (\bibinfo {year} {1999})},\ \Eprint
  {http://arxiv.org/abs/hep-lat/9906008} {arXiv:hep-lat/9906008} \BibitemShut
  {NoStop}%
\bibitem [{\citenamefont {Christofi}\ and\ \citenamefont
  {Strouthos}(2007)}]{Christofi:2006zt}%
  \BibitemOpen
  \bibfield  {author} {\bibinfo {author} {\bibfnamefont {Stavros}\ \bibnamefont
  {Christofi}}\ and\ \bibinfo {author} {\bibfnamefont {Costas}\ \bibnamefont
  {Strouthos}},\ }\bibfield  {title} {\enquote {\bibinfo {title} {{Three
  dimensional four-fermion models: A Monte Carlo study}},}\ }\href {\doibase
  10.1088/1126-6708/2007/05/088} {\bibfield  {journal} {\bibinfo  {journal}
  {JHEP}\ }\textbf {\bibinfo {volume} {05}},\ \bibinfo {pages} {088} (\bibinfo
  {year} {2007})},\ \Eprint {http://arxiv.org/abs/hep-lat/0612031}
  {arXiv:hep-lat/0612031} \BibitemShut {NoStop}%
\bibitem [{\citenamefont {Wellegehausen}\ \emph {et~al.}(2017)\citenamefont
  {Wellegehausen}, \citenamefont {Schmidt},\ and\ \citenamefont
  {Wipf}}]{Wellegehausen:2017goy}%
  \BibitemOpen
  \bibfield  {author} {\bibinfo {author} {\bibfnamefont {Bj\"orn~H.}\
  \bibnamefont {Wellegehausen}}, \bibinfo {author} {\bibfnamefont {Daniel}\
  \bibnamefont {Schmidt}}, \ and\ \bibinfo {author} {\bibfnamefont {Andreas}\
  \bibnamefont {Wipf}},\ }\bibfield  {title} {\enquote {\bibinfo {title}
  {{Critical flavor number of the Thirring model in three dimensions}},}\
  }\href {\doibase 10.1103/PhysRevD.96.094504} {\bibfield  {journal} {\bibinfo
  {journal} {Phys. Rev. D}\ }\textbf {\bibinfo {volume} {96}},\ \bibinfo
  {pages} {094504} (\bibinfo {year} {2017})},\ \Eprint
  {http://arxiv.org/abs/1708.01160} {arXiv:1708.01160 [hep-lat]} \BibitemShut
  {NoStop}%
\bibitem [{\citenamefont {Hands}(2016)}]{Hands:2016foa}%
  \BibitemOpen
  \bibfield  {author} {\bibinfo {author} {\bibfnamefont {Simon}\ \bibnamefont
  {Hands}},\ }\bibfield  {title} {\enquote {\bibinfo {title} {{Towards Critical
  Physics in 2+1d with U(2N)-Invariant Fermions}},}\ }\href {\doibase
  10.1007/JHEP11(2016)015} {\bibfield  {journal} {\bibinfo  {journal} {JHEP}\
  }\textbf {\bibinfo {volume} {11}},\ \bibinfo {pages} {015} (\bibinfo {year}
  {2016})},\ \Eprint {http://arxiv.org/abs/1610.04394} {arXiv:1610.04394
  [hep-lat]} \BibitemShut {NoStop}%
\bibitem [{\citenamefont {Hands}(2019)}]{Hands:2018vrd}%
  \BibitemOpen
  \bibfield  {author} {\bibinfo {author} {\bibfnamefont {Simon}\ \bibnamefont
  {Hands}},\ }\bibfield  {title} {\enquote {\bibinfo {title} {{Critical flavor
  number in the 2+1D Thirring model}},}\ }\href {\doibase
  10.1103/PhysRevD.99.034504} {\bibfield  {journal} {\bibinfo  {journal} {Phys.
  Rev. D}\ }\textbf {\bibinfo {volume} {99}},\ \bibinfo {pages} {034504}
  (\bibinfo {year} {2019})},\ \Eprint {http://arxiv.org/abs/1811.04818}
  {arXiv:1811.04818 [hep-lat]} \BibitemShut {NoStop}%
\bibitem [{\citenamefont {Chandrasekharan}\ and\ \citenamefont
  {Li}(2013)}]{Chandrasekharan:2013aya}%
  \BibitemOpen
  \bibfield  {author} {\bibinfo {author} {\bibfnamefont {Shailesh}\
  \bibnamefont {Chandrasekharan}}\ and\ \bibinfo {author} {\bibfnamefont
  {Anyi}\ \bibnamefont {Li}},\ }\bibfield  {title} {\enquote {\bibinfo {title}
  {{Quantum critical behavior in three dimensional lattice Gross-Neveu
  models}},}\ }\href {\doibase 10.1103/PhysRevD.88.021701} {\bibfield
  {journal} {\bibinfo  {journal} {Phys. Rev. D}\ }\textbf {\bibinfo {volume}
  {88}},\ \bibinfo {pages} {021701} (\bibinfo {year} {2013})},\ \Eprint
  {http://arxiv.org/abs/1304.7761} {arXiv:1304.7761 [hep-lat]} \BibitemShut
  {NoStop}%
\bibitem [{\citenamefont {Sorella}\ \emph {et~al.}(2012)\citenamefont
  {Sorella}, \citenamefont {Otsuka},\ and\ \citenamefont
  {Yunoki}}]{sorella2012:absence}%
  \BibitemOpen
  \bibfield  {author} {\bibinfo {author} {\bibfnamefont {Sandro}\ \bibnamefont
  {Sorella}}, \bibinfo {author} {\bibfnamefont {Yuichi}\ \bibnamefont
  {Otsuka}}, \ and\ \bibinfo {author} {\bibfnamefont {Seiji}\ \bibnamefont
  {Yunoki}},\ }\bibfield  {title} {\enquote {\bibinfo {title} {Absence of a
  spin liquid phase in the hubbard model on the honeycomb lattice},}\ }\href
  {\doibase 10.1038/srep00992} {\bibfield  {journal} {\bibinfo  {journal}
  {Scientific Reports}\ }\textbf {\bibinfo {volume} {2}} (\bibinfo {year}
  {2012}),\ 10.1038/srep00992}\BibitemShut {NoStop}%
\bibitem [{\citenamefont {Lang}\ \emph {et~al.}(2013)\citenamefont {Lang},
  \citenamefont {Meng}, \citenamefont {Muramatsu}, \citenamefont {Wessel},\
  and\ \citenamefont {Assaad}}]{lang2013:sun}%
  \BibitemOpen
  \bibfield  {author} {\bibinfo {author} {\bibfnamefont {Thomas~C.}\
  \bibnamefont {Lang}}, \bibinfo {author} {\bibfnamefont {Zi~Yang}\
  \bibnamefont {Meng}}, \bibinfo {author} {\bibfnamefont {Alejandro}\
  \bibnamefont {Muramatsu}}, \bibinfo {author} {\bibfnamefont {Stefan}\
  \bibnamefont {Wessel}}, \ and\ \bibinfo {author} {\bibfnamefont {Fakher~F.}\
  \bibnamefont {Assaad}},\ }\bibfield  {title} {\enquote {\bibinfo {title}
  {Dimerized solids and resonating plaquette order insu(n)-dirac fermions},}\
  }\href {\doibase 10.1103/physrevlett.111.066401} {\bibfield  {journal}
  {\bibinfo  {journal} {Phys. Rev. Lett.}\ }\textbf {\bibinfo {volume} {111}}
  (\bibinfo {year} {2013}),\ 10.1103/physrevlett.111.066401}\BibitemShut
  {NoStop}%
\bibitem [{\citenamefont {Huffman}\ and\ \citenamefont
  {Chandrasekharan}(2014)}]{huffman2014:sign}%
  \BibitemOpen
  \bibfield  {author} {\bibinfo {author} {\bibfnamefont {Emilie~Fulton}\
  \bibnamefont {Huffman}}\ and\ \bibinfo {author} {\bibfnamefont {Shailesh}\
  \bibnamefont {Chandrasekharan}},\ }\bibfield  {title} {\enquote {\bibinfo
  {title} {Solution to sign problems in half-filled spin-polarized electronic
  systems},}\ }\href {\doibase 10.1103/PhysRevB.89.111101} {\bibfield
  {journal} {\bibinfo  {journal} {Phys. Rev. B}\ }\textbf {\bibinfo {volume}
  {89}},\ \bibinfo {pages} {111101} (\bibinfo {year} {2014})}\BibitemShut
  {NoStop}%
\bibitem [{\citenamefont {Li}\ \emph {et~al.}(2015)\citenamefont {Li},
  \citenamefont {Jiang},\ and\ \citenamefont {Yao}}]{li2015:mmc}%
  \BibitemOpen
  \bibfield  {author} {\bibinfo {author} {\bibfnamefont {Zi-Xiang}\
  \bibnamefont {Li}}, \bibinfo {author} {\bibfnamefont {Yi-Fan}\ \bibnamefont
  {Jiang}}, \ and\ \bibinfo {author} {\bibfnamefont {Hong}\ \bibnamefont
  {Yao}},\ }\bibfield  {title} {\enquote {\bibinfo {title} {Solving the fermion
  sign problem in quantum monte carlo simulations by majorana
  representation},}\ }\href {\doibase 10.1103/physrevb.91.241117} {\bibfield
  {journal} {\bibinfo  {journal} {Physical Review B}\ }\textbf {\bibinfo
  {volume} {91}} (\bibinfo {year} {2015}),\
  10.1103/physrevb.91.241117}\BibitemShut {NoStop}%
\bibitem [{\citenamefont {Ayyar}\ and\ \citenamefont
  {Chandrasekharan}(2016{\natexlab{a}})}]{Ayyar:2016lxq}%
  \BibitemOpen
  \bibfield  {author} {\bibinfo {author} {\bibfnamefont {Venkitesh}\
  \bibnamefont {Ayyar}}\ and\ \bibinfo {author} {\bibfnamefont {Shailesh}\
  \bibnamefont {Chandrasekharan}},\ }\bibfield  {title} {\enquote {\bibinfo
  {title} {{Fermion masses through four-fermion condensates}},}\ }\href
  {\doibase 10.1007/JHEP10(2016)058} {\bibfield  {journal} {\bibinfo  {journal}
  {JHEP}\ }\textbf {\bibinfo {volume} {10}},\ \bibinfo {pages} {058} (\bibinfo
  {year} {2016}{\natexlab{a}})},\ \Eprint {http://arxiv.org/abs/1606.06312}
  {arXiv:1606.06312 [hep-lat]} \BibitemShut {NoStop}%
\bibitem [{\citenamefont {Catterall}\ and\ \citenamefont
  {Butt}(2018)}]{PhysRevD.97.094502}%
  \BibitemOpen
  \bibfield  {author} {\bibinfo {author} {\bibfnamefont {Simon}\ \bibnamefont
  {Catterall}}\ and\ \bibinfo {author} {\bibfnamefont {Nouman}\ \bibnamefont
  {Butt}},\ }\bibfield  {title} {\enquote {\bibinfo {title} {Topology and
  strong four fermion interactions in four dimensions},}\ }\href {\doibase
  10.1103/PhysRevD.97.094502} {\bibfield  {journal} {\bibinfo  {journal} {Phys.
  Rev. D}\ }\textbf {\bibinfo {volume} {97}},\ \bibinfo {pages} {094502}
  (\bibinfo {year} {2018})}\BibitemShut {NoStop}%
\bibitem [{\citenamefont {Catterall}(2016)}]{Catterall:2015zua}%
  \BibitemOpen
  \bibfield  {author} {\bibinfo {author} {\bibfnamefont {Simon}\ \bibnamefont
  {Catterall}},\ }\bibfield  {title} {\enquote {\bibinfo {title} {{Fermion mass
  without symmetry breaking}},}\ }\href {\doibase 10.1007/JHEP01(2016)121}
  {\bibfield  {journal} {\bibinfo  {journal} {JHEP}\ }\textbf {\bibinfo
  {volume} {01}},\ \bibinfo {pages} {121} (\bibinfo {year} {2016})},\ \Eprint
  {http://arxiv.org/abs/1510.04153} {arXiv:1510.04153 [hep-lat]} \BibitemShut
  {NoStop}%
\bibitem [{\citenamefont {Butt}\ \emph {et~al.}(2018)\citenamefont {Butt},
  \citenamefont {Catterall},\ and\ \citenamefont {Schaich}}]{Butt:2018nkn}%
  \BibitemOpen
  \bibfield  {author} {\bibinfo {author} {\bibfnamefont {Nouman}\ \bibnamefont
  {Butt}}, \bibinfo {author} {\bibfnamefont {Simon}\ \bibnamefont {Catterall}},
  \ and\ \bibinfo {author} {\bibfnamefont {David}\ \bibnamefont {Schaich}},\
  }\bibfield  {title} {\enquote {\bibinfo {title} {{$SO(4)$ invariant
  Higgs-Yukawa model with reduced staggered fermions}},}\ }\href {\doibase
  10.1103/PhysRevD.98.114514} {\bibfield  {journal} {\bibinfo  {journal} {Phys.
  Rev. D}\ }\textbf {\bibinfo {volume} {98}},\ \bibinfo {pages} {114514}
  (\bibinfo {year} {2018})},\ \Eprint {http://arxiv.org/abs/1810.06117}
  {arXiv:1810.06117 [hep-lat]} \BibitemShut {NoStop}%
\bibitem [{\citenamefont {Slagle}\ \emph {et~al.}(2015)\citenamefont {Slagle},
  \citenamefont {You},\ and\ \citenamefont {Xu}}]{Slagle:2014vma}%
  \BibitemOpen
  \bibfield  {author} {\bibinfo {author} {\bibfnamefont {Kevin}\ \bibnamefont
  {Slagle}}, \bibinfo {author} {\bibfnamefont {Yi-Zhuang}\ \bibnamefont {You}},
  \ and\ \bibinfo {author} {\bibfnamefont {Cenke}\ \bibnamefont {Xu}},\
  }\bibfield  {title} {\enquote {\bibinfo {title} {{Exotic quantum phase
  transitions of strongly interacting topological insulators}},}\ }\href
  {\doibase 10.1103/PhysRevB.91.115121} {\bibfield  {journal} {\bibinfo
  {journal} {Phys. Rev. B}\ }\textbf {\bibinfo {volume} {B91}},\ \bibinfo
  {pages} {115121} (\bibinfo {year} {2015})},\ \Eprint
  {http://arxiv.org/abs/1409.7401} {arXiv:1409.7401 [cond-mat.str-el]}
  \BibitemShut {NoStop}%
\bibitem [{\citenamefont {Ayyar}\ and\ \citenamefont
  {Chandrasekharan}(2016{\natexlab{b}})}]{PhysRevD.93.081701}%
  \BibitemOpen
  \bibfield  {author} {\bibinfo {author} {\bibfnamefont {Venkitesh}\
  \bibnamefont {Ayyar}}\ and\ \bibinfo {author} {\bibfnamefont {Shailesh}\
  \bibnamefont {Chandrasekharan}},\ }\bibfield  {title} {\enquote {\bibinfo
  {title} {Origin of fermion masses without spontaneous symmetry breaking},}\
  }\href {\doibase 10.1103/PhysRevD.93.081701} {\bibfield  {journal} {\bibinfo
  {journal} {Phys. Rev. D}\ }\textbf {\bibinfo {volume} {93}},\ \bibinfo
  {pages} {081701} (\bibinfo {year} {2016}{\natexlab{b}})}\BibitemShut
  {NoStop}%
\bibitem [{\citenamefont {You}\ \emph {et~al.}(2018{\natexlab{a}})\citenamefont
  {You}, \citenamefont {He}, \citenamefont {Xu},\ and\ \citenamefont
  {Vishwanath}}]{PhysRevX.8.011026}%
  \BibitemOpen
  \bibfield  {author} {\bibinfo {author} {\bibfnamefont {Yi-Zhuang}\
  \bibnamefont {You}}, \bibinfo {author} {\bibfnamefont {Yin-Chen}\
  \bibnamefont {He}}, \bibinfo {author} {\bibfnamefont {Cenke}\ \bibnamefont
  {Xu}}, \ and\ \bibinfo {author} {\bibfnamefont {Ashvin}\ \bibnamefont
  {Vishwanath}},\ }\bibfield  {title} {\enquote {\bibinfo {title} {Symmetric
  fermion mass generation as deconfined quantum criticality},}\ }\href
  {\doibase 10.1103/PhysRevX.8.011026} {\bibfield  {journal} {\bibinfo
  {journal} {Phys. Rev. X}\ }\textbf {\bibinfo {volume} {8}},\ \bibinfo {pages}
  {011026} (\bibinfo {year} {2018}{\natexlab{a}})}\BibitemShut {NoStop}%
\bibitem [{\citenamefont {You}\ \emph {et~al.}(2018{\natexlab{b}})\citenamefont
  {You}, \citenamefont {He}, \citenamefont {Vishwanath},\ and\ \citenamefont
  {Xu}}]{PhysRevB.97.125112}%
  \BibitemOpen
  \bibfield  {author} {\bibinfo {author} {\bibfnamefont {Yi-Zhuang}\
  \bibnamefont {You}}, \bibinfo {author} {\bibfnamefont {Yin-Chen}\
  \bibnamefont {He}}, \bibinfo {author} {\bibfnamefont {Ashvin}\ \bibnamefont
  {Vishwanath}}, \ and\ \bibinfo {author} {\bibfnamefont {Cenke}\ \bibnamefont
  {Xu}},\ }\bibfield  {title} {\enquote {\bibinfo {title} {From bosonic
  topological transition to symmetric fermion mass generation},}\ }\href
  {\doibase 10.1103/PhysRevB.97.125112} {\bibfield  {journal} {\bibinfo
  {journal} {Phys. Rev. B}\ }\textbf {\bibinfo {volume} {97}},\ \bibinfo
  {pages} {125112} (\bibinfo {year} {2018}{\natexlab{b}})}\BibitemShut
  {NoStop}%
\bibitem [{\citenamefont {Kikukawa}(2019)}]{Kikukawa:2017ngf}%
  \BibitemOpen
  \bibfield  {author} {\bibinfo {author} {\bibfnamefont {Yoshio}\ \bibnamefont
  {Kikukawa}},\ }\bibfield  {title} {\enquote {\bibinfo {title} {{On the gauge
  invariant path-integral measure for the overlap Weyl fermions in
  $\underline{16}$ of SO(10)}},}\ }\href {\doibase 10.1093/ptep/ptz115}
  {\bibfield  {journal} {\bibinfo  {journal} {PTEP}\ }\textbf {\bibinfo
  {volume} {2019}},\ \bibinfo {pages} {113B03} (\bibinfo {year} {2019})},\
  \Eprint {http://arxiv.org/abs/1710.11618} {arXiv:1710.11618 [hep-lat]}
  \BibitemShut {NoStop}%
\bibitem [{\citenamefont {Wang}\ and\ \citenamefont
  {Wen}(2020)}]{Wang:2018cai}%
  \BibitemOpen
  \bibfield  {author} {\bibinfo {author} {\bibfnamefont {Juven}\ \bibnamefont
  {Wang}}\ and\ \bibinfo {author} {\bibfnamefont {Xiao-Gang}\ \bibnamefont
  {Wen}},\ }\bibfield  {title} {\enquote {\bibinfo {title} {{Nonperturbative
  definition of the standard models}},}\ }\href {\doibase
  10.1103/PhysRevResearch.2.023356} {\bibfield  {journal} {\bibinfo  {journal}
  {Phys. Rev. Res.}\ }\textbf {\bibinfo {volume} {2}},\ \bibinfo {pages}
  {023356} (\bibinfo {year} {2020})},\ \Eprint
  {http://arxiv.org/abs/1809.11171} {arXiv:1809.11171 [hep-th]} \BibitemShut
  {NoStop}%
\bibitem [{\citenamefont {Catterall}(2020)}]{Catterall:2020fep}%
  \BibitemOpen
  \bibfield  {author} {\bibinfo {author} {\bibfnamefont {Simon}\ \bibnamefont
  {Catterall}},\ }\bibfield  {title} {\enquote {\bibinfo {title} {{Chiral
  Lattice Theories From Staggered Fermions}},}\ }\href@noop {} {\bibfield
  {journal} {\bibinfo  {journal} {arXiv:2010.02290}\ } (\bibinfo {year}
  {2020})}\BibitemShut {NoStop}%
\bibitem [{\citenamefont {Liu}\ \emph {et~al.}(2019)\citenamefont {Liu},
  \citenamefont {Wang}, \citenamefont {Sato}, \citenamefont {Hohenadler},
  \citenamefont {Wang}, \citenamefont {Guo},\ and\ \citenamefont
  {Assaad}}]{Liu:2018sww}%
  \BibitemOpen
  \bibfield  {author} {\bibinfo {author} {\bibfnamefont {Yuhai}\ \bibnamefont
  {Liu}}, \bibinfo {author} {\bibfnamefont {Zhenjiu}\ \bibnamefont {Wang}},
  \bibinfo {author} {\bibfnamefont {Toshihiro}\ \bibnamefont {Sato}}, \bibinfo
  {author} {\bibfnamefont {Martin}\ \bibnamefont {Hohenadler}}, \bibinfo
  {author} {\bibfnamefont {Chong}\ \bibnamefont {Wang}}, \bibinfo {author}
  {\bibfnamefont {Wenan}\ \bibnamefont {Guo}}, \ and\ \bibinfo {author}
  {\bibfnamefont {Fakher~F.}\ \bibnamefont {Assaad}},\ }\bibfield  {title}
  {\enquote {\bibinfo {title} {{Superconductivity from the Condensation of
  Topological Defects in a Quantum Spin-Hall Insulator}},}\ }\href {\doibase
  10.1038/s41467-019-10372-0} {\bibfield  {journal} {\bibinfo  {journal} {Nat.
  Comm.}\ }\textbf {\bibinfo {volume} {10}},\ \bibinfo {pages} {2658} (\bibinfo
  {year} {2019})},\ \Eprint {http://arxiv.org/abs/1811.02583} {arXiv:1811.02583
  [cond-mat.str-el]} \BibitemShut {NoStop}%
\bibitem [{\citenamefont {Sato}\ \emph {et~al.}(2017)\citenamefont {Sato},
  \citenamefont {Hohenadler},\ and\ \citenamefont
  {Assaad}}]{PhysRevLett.119.197203}%
  \BibitemOpen
  \bibfield  {author} {\bibinfo {author} {\bibfnamefont {Toshihiro}\
  \bibnamefont {Sato}}, \bibinfo {author} {\bibfnamefont {Martin}\ \bibnamefont
  {Hohenadler}}, \ and\ \bibinfo {author} {\bibfnamefont {Fakher~F.}\
  \bibnamefont {Assaad}},\ }\bibfield  {title} {\enquote {\bibinfo {title}
  {Dirac fermions with competing orders: Non-landau transition with emergent
  symmetry},}\ }\href {\doibase 10.1103/PhysRevLett.119.197203} {\bibfield
  {journal} {\bibinfo  {journal} {Phys. Rev. Lett.}\ }\textbf {\bibinfo
  {volume} {119}},\ \bibinfo {pages} {197203} (\bibinfo {year}
  {2017})}\BibitemShut {NoStop}%
\bibitem [{\citenamefont {Senthil}(2004)}]{Senthil:2004aza}%
  \BibitemOpen
  \bibfield  {author} {\bibinfo {author} {\bibfnamefont {T.}~\bibnamefont
  {Senthil}},\ }\bibfield  {title} {\enquote {\bibinfo {title} {{Deconfined
  Quantum Critical Points}},}\ }\href {\doibase 10.1126/science.1091806}
  {\bibfield  {journal} {\bibinfo  {journal} {Science}\ }\textbf {\bibinfo
  {volume} {303}},\ \bibinfo {pages} {1490--1494} (\bibinfo {year}
  {2004})}\BibitemShut {NoStop}%
\bibitem [{\citenamefont {Senthil}\ and\ \citenamefont
  {Fisher}(2006)}]{PhysRevB.74.064405}%
  \BibitemOpen
  \bibfield  {author} {\bibinfo {author} {\bibfnamefont {T.}~\bibnamefont
  {Senthil}}\ and\ \bibinfo {author} {\bibfnamefont {Matthew P.~A.}\
  \bibnamefont {Fisher}},\ }\bibfield  {title} {\enquote {\bibinfo {title}
  {Competing orders, nonlinear sigma models, and topological terms in quantum
  magnets},}\ }\href {\doibase 10.1103/PhysRevB.74.064405} {\bibfield
  {journal} {\bibinfo  {journal} {Phys. Rev. B}\ }\textbf {\bibinfo {volume}
  {74}},\ \bibinfo {pages} {064405} (\bibinfo {year} {2006})}\BibitemShut
  {NoStop}%
\bibitem [{\citenamefont {Li}\ \emph {et~al.}(2017)\citenamefont {Li},
  \citenamefont {Jiang}, \citenamefont {Jian},\ and\ \citenamefont
  {Yao}}]{Li2017}%
  \BibitemOpen
  \bibfield  {author} {\bibinfo {author} {\bibfnamefont {Zi-Xiang}\
  \bibnamefont {Li}}, \bibinfo {author} {\bibfnamefont {Yi-Fan}\ \bibnamefont
  {Jiang}}, \bibinfo {author} {\bibfnamefont {Shao-Kai}\ \bibnamefont {Jian}},
  \ and\ \bibinfo {author} {\bibfnamefont {Hong}\ \bibnamefont {Yao}},\
  }\bibfield  {title} {\enquote {\bibinfo {title} {Fermion-induced quantum
  critical points},}\ }\href {\doibase 10.1038/s41467-017-00167-6} {\bibfield
  {journal} {\bibinfo  {journal} {Nat. Comms.}\ }\textbf {\bibinfo {volume}
  {8}},\ \bibinfo {pages} {314} (\bibinfo {year} {2017})}\BibitemShut {NoStop}%
\bibitem [{\citenamefont {Li}\ \emph {et~al.}(2019{\natexlab{a}})\citenamefont
  {Li}, \citenamefont {Jian},\ and\ \citenamefont {Yao}}]{Li:2019acc}%
  \BibitemOpen
  \bibfield  {author} {\bibinfo {author} {\bibfnamefont {Zi-Xiang}\
  \bibnamefont {Li}}, \bibinfo {author} {\bibfnamefont {Shao-Kai}\ \bibnamefont
  {Jian}}, \ and\ \bibinfo {author} {\bibfnamefont {Hong}\ \bibnamefont
  {Yao}},\ }\bibfield  {title} {\enquote {\bibinfo {title} {{Deconfined quantum
  criticality and emergent SO(5) symmetry in fermionic systems}},}\ }\href@noop
  {} {\bibfield  {journal} {\bibinfo  {journal} {arXiv:1904.10975}\ } (\bibinfo
  {year} {2019}{\natexlab{a}})}\BibitemShut {NoStop}%
\bibitem [{\citenamefont {Torres}\ \emph {et~al.}(2020)\citenamefont {Torres},
  \citenamefont {Weber}, \citenamefont {Janssen}, \citenamefont {Wessel},\ and\
  \citenamefont {Scherer}}]{Torres:2019vcw}%
  \BibitemOpen
  \bibfield  {author} {\bibinfo {author} {\bibfnamefont {Emilio}\ \bibnamefont
  {Torres}}, \bibinfo {author} {\bibfnamefont {Lukas}\ \bibnamefont {Weber}},
  \bibinfo {author} {\bibfnamefont {Lukas}\ \bibnamefont {Janssen}}, \bibinfo
  {author} {\bibfnamefont {Stefan}\ \bibnamefont {Wessel}}, \ and\ \bibinfo
  {author} {\bibfnamefont {Michael~M.}\ \bibnamefont {Scherer}},\ }\bibfield
  {title} {\enquote {\bibinfo {title} {{Emergent symmetries and coexisting
  orders in Dirac fermion systems}},}\ }\href {\doibase
  10.1103/PhysRevResearch.2.022005} {\bibfield  {journal} {\bibinfo  {journal}
  {Phys. Rev. Research.}\ }\textbf {\bibinfo {volume} {2}},\ \bibinfo {pages}
  {022005} (\bibinfo {year} {2020})},\ \Eprint
  {http://arxiv.org/abs/1911.01244} {arXiv:1911.01244 [cond-mat.str-el]}
  \BibitemShut {NoStop}%
\bibitem [{\citenamefont {Chandrasekharan}(2010)}]{PhysRevD.82.025007}%
  \BibitemOpen
  \bibfield  {author} {\bibinfo {author} {\bibfnamefont {Shailesh}\
  \bibnamefont {Chandrasekharan}},\ }\bibfield  {title} {\enquote {\bibinfo
  {title} {Fermion bag approach to lattice field theories},}\ }\href {\doibase
  10.1103/PhysRevD.82.025007} {\bibfield  {journal} {\bibinfo  {journal} {Phys.
  Rev. D}\ }\textbf {\bibinfo {volume} {82}},\ \bibinfo {pages} {025007}
  (\bibinfo {year} {2010})}\BibitemShut {NoStop}%
\bibitem [{\citenamefont {Chandrasekharan}(2013)}]{Chandrasekharan:2013rpa}%
  \BibitemOpen
  \bibfield  {author} {\bibinfo {author} {\bibfnamefont {Shailesh}\
  \bibnamefont {Chandrasekharan}},\ }\bibfield  {title} {\enquote {\bibinfo
  {title} {{Fermion Bag Approach to Fermion Sign Problems}},}\ }\href {\doibase
  10.1140/epja/i2013-13090-y} {\bibfield  {journal} {\bibinfo  {journal} {Eur.
  Phys. J.}\ }\textbf {\bibinfo {volume} {A49}},\ \bibinfo {pages} {90}
  (\bibinfo {year} {2013})},\ \Eprint {http://arxiv.org/abs/1304.4900}
  {arXiv:1304.4900 [hep-lat]} \BibitemShut {NoStop}%
\bibitem [{\citenamefont {Huffman}\ and\ \citenamefont
  {Chandrasekharan}(2017)}]{Huffman:2017swn}%
  \BibitemOpen
  \bibfield  {author} {\bibinfo {author} {\bibfnamefont {Emilie}\ \bibnamefont
  {Huffman}}\ and\ \bibinfo {author} {\bibfnamefont {Shailesh}\ \bibnamefont
  {Chandrasekharan}},\ }\bibfield  {title} {\enquote {\bibinfo {title}
  {{Fermion bag approach to Hamiltonian lattice field theories in continuous
  time}},}\ }\href {\doibase 10.1103/PhysRevD.96.114502} {\bibfield  {journal}
  {\bibinfo  {journal} {Phys. Rev.}\ }\textbf {\bibinfo {volume} {D96}},\
  \bibinfo {pages} {114502} (\bibinfo {year} {2017})},\ \Eprint
  {http://arxiv.org/abs/1709.03578} {arXiv:1709.03578 [hep-lat]} \BibitemShut
  {NoStop}%
\bibitem [{\citenamefont {Huffman}\ and\ \citenamefont
  {Chandrasekharan}(2020)}]{PhysRevD.101.074501}%
  \BibitemOpen
  \bibfield  {author} {\bibinfo {author} {\bibfnamefont {Emilie}\ \bibnamefont
  {Huffman}}\ and\ \bibinfo {author} {\bibfnamefont {Shailesh}\ \bibnamefont
  {Chandrasekharan}},\ }\bibfield  {title} {\enquote {\bibinfo {title}
  {Fermion-bag inspired hamiltonian lattice field theory for fermionic quantum
  criticality},}\ }\href {\doibase 10.1103/PhysRevD.101.074501} {\bibfield
  {journal} {\bibinfo  {journal} {Phys. Rev. D}\ }\textbf {\bibinfo {volume}
  {101}},\ \bibinfo {pages} {074501} (\bibinfo {year} {2020})}\BibitemShut
  {NoStop}%
\bibitem [{\citenamefont {Chandrasekharan}\ and\ \citenamefont
  {Wiese}(1999)}]{Chandrasekharan:1999cm}%
  \BibitemOpen
  \bibfield  {author} {\bibinfo {author} {\bibfnamefont {Shailesh}\
  \bibnamefont {Chandrasekharan}}\ and\ \bibinfo {author} {\bibfnamefont
  {Uwe-Jens}\ \bibnamefont {Wiese}},\ }\bibfield  {title} {\enquote {\bibinfo
  {title} {{Meron cluster solution of a fermion sign problem}},}\ }\href
  {\doibase 10.1103/PhysRevLett.83.3116} {\bibfield  {journal} {\bibinfo
  {journal} {Phys. Rev. Lett.}\ }\textbf {\bibinfo {volume} {83}},\ \bibinfo
  {pages} {3116--3119} (\bibinfo {year} {1999})},\ \Eprint
  {http://arxiv.org/abs/cond-mat/9902128} {arXiv:cond-mat/9902128 [cond-mat]}
  \BibitemShut {NoStop}%
\bibitem [{\citenamefont {Chandrasekharan}\ \emph {et~al.}(2003)\citenamefont
  {Chandrasekharan}, \citenamefont {Cox}, \citenamefont {Osborn},\ and\
  \citenamefont {Wiese}}]{Chandrasekharan:2002vk}%
  \BibitemOpen
  \bibfield  {author} {\bibinfo {author} {\bibfnamefont {S.}~\bibnamefont
  {Chandrasekharan}}, \bibinfo {author} {\bibfnamefont {J.}~\bibnamefont
  {Cox}}, \bibinfo {author} {\bibfnamefont {J.~C.}\ \bibnamefont {Osborn}}, \
  and\ \bibinfo {author} {\bibfnamefont {U.~J.}\ \bibnamefont {Wiese}},\
  }\bibfield  {title} {\enquote {\bibinfo {title} {{Meron cluster approach to
  systems of strongly correlated electrons}},}\ }\href {\doibase
  10.1016/j.nuclphysb.2003.08.041} {\bibfield  {journal} {\bibinfo  {journal}
  {Nucl. Phys. B}\ }\textbf {\bibinfo {volume} {673}},\ \bibinfo {pages}
  {405--436} (\bibinfo {year} {2003})},\ \Eprint
  {http://arxiv.org/abs/cond-mat/0201360} {arXiv:cond-mat/0201360
  [cond-mat.str-el]} \BibitemShut {NoStop}%
\bibitem [{\citenamefont {Chandrasekharan}\ and\ \citenamefont
  {Osborn}(2002)}]{PhysRevB.66.045113}%
  \BibitemOpen
  \bibfield  {author} {\bibinfo {author} {\bibfnamefont {Shailesh}\
  \bibnamefont {Chandrasekharan}}\ and\ \bibinfo {author} {\bibfnamefont
  {James~C.}\ \bibnamefont {Osborn}},\ }\bibfield  {title} {\enquote {\bibinfo
  {title} {Kosterlitz-thouless universality in a fermionic system},}\ }\href
  {\doibase 10.1103/PhysRevB.66.045113} {\bibfield  {journal} {\bibinfo
  {journal} {Phys. Rev. B}\ }\textbf {\bibinfo {volume} {66}},\ \bibinfo
  {pages} {045113} (\bibinfo {year} {2002})}\BibitemShut {NoStop}%
\bibitem [{\citenamefont {Zhang}(1990)}]{zhang1990:su2}%
  \BibitemOpen
  \bibfield  {author} {\bibinfo {author} {\bibfnamefont {Shoucheng}\
  \bibnamefont {Zhang}},\ }\bibfield  {title} {\enquote {\bibinfo {title}
  {Pseudospin symmetry and new collective modes of the hubbard model},}\ }\href
  {\doibase 10.1103/PhysRevLett.65.120} {\bibfield  {journal} {\bibinfo
  {journal} {Phys. Rev. Lett.}\ }\textbf {\bibinfo {volume} {65}},\ \bibinfo
  {pages} {120--122} (\bibinfo {year} {1990})}\BibitemShut {NoStop}%
\bibitem [{\citenamefont {Beard}\ \emph {et~al.}(2005)\citenamefont {Beard},
  \citenamefont {Pepe}, \citenamefont {Riederer},\ and\ \citenamefont
  {Wiese}}]{Beard:2004jr}%
  \BibitemOpen
  \bibfield  {author} {\bibinfo {author} {\bibfnamefont {B.~B.}\ \bibnamefont
  {Beard}}, \bibinfo {author} {\bibfnamefont {M.}~\bibnamefont {Pepe}},
  \bibinfo {author} {\bibfnamefont {S.}~\bibnamefont {Riederer}}, \ and\
  \bibinfo {author} {\bibfnamefont {U.~J.}\ \bibnamefont {Wiese}},\ }\bibfield
  {title} {\enquote {\bibinfo {title} {{Study of CP(N-1) theta-vacua by
  cluster-simulation of SU(N) quantum spin ladders}},}\ }\href {\doibase
  10.1103/PhysRevLett.94.010603} {\bibfield  {journal} {\bibinfo  {journal}
  {Phys. Rev. Lett.}\ }\textbf {\bibinfo {volume} {94}},\ \bibinfo {pages}
  {010603} (\bibinfo {year} {2005})},\ \Eprint
  {http://arxiv.org/abs/hep-lat/0406040} {arXiv:hep-lat/0406040 [hep-lat]}
  \BibitemShut {NoStop}%
\bibitem [{\citenamefont {Affleck}\ \emph {et~al.}(1989)\citenamefont
  {Affleck}, \citenamefont {Gepner}, \citenamefont {Schulz},\ and\
  \citenamefont {Ziman}}]{Affleck:1988px}%
  \BibitemOpen
  \bibfield  {author} {\bibinfo {author} {\bibfnamefont {I.}~\bibnamefont
  {Affleck}}, \bibinfo {author} {\bibfnamefont {D.}~\bibnamefont {Gepner}},
  \bibinfo {author} {\bibfnamefont {H.~J.}\ \bibnamefont {Schulz}}, \ and\
  \bibinfo {author} {\bibfnamefont {T.}~\bibnamefont {Ziman}},\ }\bibfield
  {title} {\enquote {\bibinfo {title} {{Critical Behavior of Spin S Heisenberg
  Antiferromagnetic Chains: Analytic and Numerical Results}},}\ }\href
  {\doibase 10.1088/0305-4470/22/5/015} {\bibfield  {journal} {\bibinfo
  {journal} {J. Phys. A}\ }\textbf {\bibinfo {volume} {22}},\ \bibinfo {pages}
  {511} (\bibinfo {year} {1989})}\BibitemShut {NoStop}%
\bibitem [{\citenamefont {Okamoto}\ and\ \citenamefont
  {Nomura}(1992)}]{OKAMOTO1992433}%
  \BibitemOpen
  \bibfield  {author} {\bibinfo {author} {\bibfnamefont {Kiyomi}\ \bibnamefont
  {Okamoto}}\ and\ \bibinfo {author} {\bibfnamefont {Kiyohide}\ \bibnamefont
  {Nomura}},\ }\bibfield  {title} {\enquote {\bibinfo {title} {Fluid-dimer
  critical point in s = 12 antiferromagnetic heisenberg chain with next nearest
  neighbor interactions},}\ }\href {\doibase
  https://doi.org/10.1016/0375-9601(92)90823-5} {\bibfield  {journal} {\bibinfo
   {journal} {Phys. Letts. A}\ }\textbf {\bibinfo {volume} {169}},\ \bibinfo
  {pages} {433 -- 437} (\bibinfo {year} {1992})}\BibitemShut {NoStop}%
\bibitem [{\citenamefont {Eggert}(1996)}]{Eggert:1996er}%
  \BibitemOpen
  \bibfield  {author} {\bibinfo {author} {\bibfnamefont {Sebastian}\
  \bibnamefont {Eggert}},\ }\bibfield  {title} {\enquote {\bibinfo {title}
  {{Numerical evidence for multiplicative logarithmic corrections from marginal
  operators}},}\ }\href {\doibase 10.1103/PhysRevB.54.R9612} {\bibfield
  {journal} {\bibinfo  {journal} {Phys. Rev. B}\ }\textbf {\bibinfo {volume}
  {54}},\ \bibinfo {pages} {R9612} (\bibinfo {year} {1996})},\ \Eprint
  {http://arxiv.org/abs/cond-mat/9602026} {arXiv:cond-mat/9602026 [cond-mat]}
  \BibitemShut {NoStop}%
\bibitem [{\citenamefont {Evertz}\ \emph {et~al.}(1993)\citenamefont {Evertz},
  \citenamefont {Lana},\ and\ \citenamefont {Marcu}}]{Evertz:1992rb}%
  \BibitemOpen
  \bibfield  {author} {\bibinfo {author} {\bibfnamefont {Hans~Gerd}\
  \bibnamefont {Evertz}}, \bibinfo {author} {\bibfnamefont {Gideon}\
  \bibnamefont {Lana}}, \ and\ \bibinfo {author} {\bibfnamefont {Mihai}\
  \bibnamefont {Marcu}},\ }\bibfield  {title} {\enquote {\bibinfo {title}
  {{Cluster algorithm for vertex models}},}\ }\href {\doibase
  10.1103/PhysRevLett.70.875} {\bibfield  {journal} {\bibinfo  {journal} {Phys.
  Rev. Lett.}\ }\textbf {\bibinfo {volume} {70}},\ \bibinfo {pages} {875--879}
  (\bibinfo {year} {1993})},\ \Eprint {http://arxiv.org/abs/cond-mat/9211006}
  {arXiv:cond-mat/9211006} \BibitemShut {NoStop}%
\bibitem [{\citenamefont {Beard}\ and\ \citenamefont
  {Wiese}(1996)}]{Beard:1996wj}%
  \BibitemOpen
  \bibfield  {author} {\bibinfo {author} {\bibfnamefont {B.B.}\ \bibnamefont
  {Beard}}\ and\ \bibinfo {author} {\bibfnamefont {U.-J.}\ \bibnamefont
  {Wiese}},\ }\bibfield  {title} {\enquote {\bibinfo {title} {{Simulations of
  discrete quantum systems in continuous Euclidean time}},}\ }\href {\doibase
  10.1103/PhysRevLett.77.5130} {\bibfield  {journal} {\bibinfo  {journal}
  {Phys. Rev. Lett.}\ }\textbf {\bibinfo {volume} {77}},\ \bibinfo {pages}
  {5130--5133} (\bibinfo {year} {1996})},\ \Eprint
  {http://arxiv.org/abs/cond-mat/9602164} {arXiv:cond-mat/9602164} \BibitemShut
  {NoStop}%
\bibitem [{\citenamefont {Affleck}(1990)}]{Affleck1990}%
  \BibitemOpen
  \bibfield  {author} {\bibinfo {author} {\bibfnamefont {Ian}\ \bibnamefont
  {Affleck}},\ }\enquote {\bibinfo {title} {Field theory methods and strongly
  correlated electrons},}\ in\ \href {\doibase 10.1007/978-1-4615-3802-8_1}
  {\emph {\bibinfo {booktitle} {Physics, Geometry and Topology}}},\ \bibinfo
  {editor} {edited by\ \bibinfo {editor} {\bibfnamefont {H.~C.}\ \bibnamefont
  {Lee}}}\ (\bibinfo  {publisher} {Springer US},\ \bibinfo {address} {Boston,
  MA},\ \bibinfo {year} {1990})\ pp.\ \bibinfo {pages} {1--13}\BibitemShut
  {NoStop}%
\bibitem [{\citenamefont {Liu}(2019)}]{Liu:2019dvk}%
  \BibitemOpen
  \bibfield  {author} {\bibinfo {author} {\bibfnamefont {Hanqing}\ \bibnamefont
  {Liu}},\ }\bibfield  {title} {\enquote {\bibinfo {title} {{Quantum Critical
  Phenomena in an $O(4)$ Fermion Chain}},}\ }in\ \href@noop {} {\emph {\bibinfo
  {booktitle} {{37th International Symposium on Lattice Field Theory}}}}\
  (\bibinfo {year} {2019})\ \Eprint {http://arxiv.org/abs/1912.11237}
  {arXiv:1912.11237 [hep-lat]} \BibitemShut {NoStop}%
\bibitem [{\citenamefont {Affleck}(1988)}]{Affleck:1988zj}%
  \BibitemOpen
  \bibfield  {author} {\bibinfo {author} {\bibfnamefont {Ian}\ \bibnamefont
  {Affleck}},\ }\bibfield  {title} {\enquote {\bibinfo {title} {{FIELD THEORY
  METHODS AND QUANTUM CRITICAL PHENOMENA}},}\ }in\ \href@noop {} {\emph
  {\bibinfo {booktitle} {{Les Houches Summer School in Theoretical Physics:
  Fields, Strings, Critical Phenomena Les Houches, France, June 28-August 5,
  1988}}}}\ (\bibinfo {year} {1988})\ pp.\ \bibinfo {pages}
  {0563--640}\BibitemShut {NoStop}%
\bibitem [{\citenamefont {Witten}(1984)}]{Witten:1983ar}%
  \BibitemOpen
  \bibfield  {author} {\bibinfo {author} {\bibfnamefont {Edward}\ \bibnamefont
  {Witten}},\ }\bibfield  {title} {\enquote {\bibinfo {title} {{Nonabelian
  Bosonization in Two-Dimensions}},}\ }\href {\doibase 10.1007/BF01215276}
  {\bibfield  {journal} {\bibinfo  {journal} {Commun. Math. Phys.}\ }\textbf
  {\bibinfo {volume} {92}},\ \bibinfo {pages} {455--472} (\bibinfo {year}
  {1984})},\ \bibinfo {note} {[201(1983)]}\BibitemShut {NoStop}%
\bibitem [{\citenamefont {Lieb}\ \emph {et~al.}(1961)\citenamefont {Lieb},
  \citenamefont {Schultz},\ and\ \citenamefont {Mattis}}]{Lieb:1961fr}%
  \BibitemOpen
  \bibfield  {author} {\bibinfo {author} {\bibfnamefont {Elliott~H.}\
  \bibnamefont {Lieb}}, \bibinfo {author} {\bibfnamefont {Theodore}\
  \bibnamefont {Schultz}}, \ and\ \bibinfo {author} {\bibfnamefont {Daniel}\
  \bibnamefont {Mattis}},\ }\bibfield  {title} {\enquote {\bibinfo {title}
  {{Two soluble models of an antiferromagnetic chain}},}\ }\href {\doibase
  10.1016/0003-4916(61)90115-4} {\bibfield  {journal} {\bibinfo  {journal}
  {Annals of Phys.}\ }\textbf {\bibinfo {volume} {16}},\ \bibinfo {pages}
  {407--466} (\bibinfo {year} {1961})}\BibitemShut {NoStop}%
\bibitem [{\citenamefont {Majumdar}\ and\ \citenamefont
  {Ghosh}(1969)}]{Majumdar_1969}%
  \BibitemOpen
  \bibfield  {author} {\bibinfo {author} {\bibfnamefont {Chanchal~K.}\
  \bibnamefont {Majumdar}}\ and\ \bibinfo {author} {\bibfnamefont {Dipan~K.}\
  \bibnamefont {Ghosh}},\ }\bibfield  {title} {\enquote {\bibinfo {title} {On
  next‐nearest‐neighbor interaction in linear chain. i},}\ }\href {\doibase
  10.1063/1.1664978} {\bibfield  {journal} {\bibinfo  {journal} {Journal of
  Mathematical Physics}\ }\textbf {\bibinfo {volume} {10}},\ \bibinfo {pages}
  {1388--1398} (\bibinfo {year} {1969})},\ \Eprint
  {http://arxiv.org/abs/https://doi.org/10.1063/1.1664978}
  {https://doi.org/10.1063/1.1664978} \BibitemShut {NoStop}%
\bibitem [{\citenamefont {Majumdar}(1970)}]{Majumdar_1970}%
  \BibitemOpen
  \bibfield  {author} {\bibinfo {author} {\bibfnamefont {C~K}\ \bibnamefont
  {Majumdar}},\ }\bibfield  {title} {\enquote {\bibinfo {title}
  {Antiferromagnetic model with known ground state},}\ }\href {\doibase
  10.1088/0022-3719/3/4/019} {\bibfield  {journal} {\bibinfo  {journal} {J. of
  Phys. C: Solid State Physics}\ }\textbf {\bibinfo {volume} {3}},\ \bibinfo
  {pages} {911--915} (\bibinfo {year} {1970})}\BibitemShut {NoStop}%
\bibitem [{\citenamefont {Haldane}(1982)}]{PhysRevB.25.4925}%
  \BibitemOpen
  \bibfield  {author} {\bibinfo {author} {\bibfnamefont {F.~D.~M.}\
  \bibnamefont {Haldane}},\ }\bibfield  {title} {\enquote {\bibinfo {title}
  {Spontaneous dimerization in the $s=\frac{1}{2}$ heisenberg antiferromagnetic
  chain with competing interactions},}\ }\href {\doibase
  10.1103/PhysRevB.25.4925} {\bibfield  {journal} {\bibinfo  {journal} {Phys.
  Rev. B}\ }\textbf {\bibinfo {volume} {25}},\ \bibinfo {pages} {4925--4928}
  (\bibinfo {year} {1982})}\BibitemShut {NoStop}%
\bibitem [{\citenamefont {Sanyal}\ \emph {et~al.}(2011)\citenamefont {Sanyal},
  \citenamefont {Banerjee},\ and\ \citenamefont {Damle}}]{sanyal2011:1djq}%
  \BibitemOpen
  \bibfield  {author} {\bibinfo {author} {\bibfnamefont {Sambuddha}\
  \bibnamefont {Sanyal}}, \bibinfo {author} {\bibfnamefont {Argha}\
  \bibnamefont {Banerjee}}, \ and\ \bibinfo {author} {\bibfnamefont {Kedar}\
  \bibnamefont {Damle}},\ }\bibfield  {title} {\enquote {\bibinfo {title}
  {Vacancy-induced spin texture in a one-dimensional $s=\frac{1}{2}$ heisenberg
  antiferromagnet},}\ }\href {\doibase 10.1103/PhysRevB.84.235129} {\bibfield
  {journal} {\bibinfo  {journal} {Phys. Rev. B}\ }\textbf {\bibinfo {volume}
  {84}},\ \bibinfo {pages} {235129} (\bibinfo {year} {2011})}\BibitemShut
  {NoStop}%
\bibitem [{\citenamefont {Patil}\ \emph {et~al.}(2018)\citenamefont {Patil},
  \citenamefont {Katz},\ and\ \citenamefont {Sandvik}}]{Patil:2018wpt}%
  \BibitemOpen
  \bibfield  {author} {\bibinfo {author} {\bibfnamefont {Pranay}\ \bibnamefont
  {Patil}}, \bibinfo {author} {\bibfnamefont {Emanuel}\ \bibnamefont {Katz}}, \
  and\ \bibinfo {author} {\bibfnamefont {Anders~W.}\ \bibnamefont {Sandvik}},\
  }\bibfield  {title} {\enquote {\bibinfo {title} {{Numerical investigations of
  SO(4) emergent extended symmetry in spin- 12 Heisenberg antiferromagnetic
  chains}},}\ }\href {\doibase 10.1103/PhysRevB.98.014414} {\bibfield
  {journal} {\bibinfo  {journal} {Phys. Rev. B}\ }\textbf {\bibinfo {volume}
  {98}},\ \bibinfo {pages} {014414} (\bibinfo {year} {2018})},\ \Eprint
  {http://arxiv.org/abs/1803.02041} {arXiv:1803.02041 [cond-mat.str-el]}
  \BibitemShut {NoStop}%
\bibitem [{\citenamefont {Sandvik}\ \emph {et~al.}(2004)\citenamefont
  {Sandvik}, \citenamefont {Balents},\ and\ \citenamefont
  {Campbell}}]{sandvik2004:exthubb}%
  \BibitemOpen
  \bibfield  {author} {\bibinfo {author} {\bibfnamefont {Anders~W.}\
  \bibnamefont {Sandvik}}, \bibinfo {author} {\bibfnamefont {Leon}\
  \bibnamefont {Balents}}, \ and\ \bibinfo {author} {\bibfnamefont {David~K.}\
  \bibnamefont {Campbell}},\ }\bibfield  {title} {\enquote {\bibinfo {title}
  {Ground state phases of the half-filled one-dimensional extended hubbard
  model},}\ }\href {\doibase 10.1103/PhysRevLett.92.236401} {\bibfield
  {journal} {\bibinfo  {journal} {Phys. Rev. Lett.}\ }\textbf {\bibinfo
  {volume} {92}},\ \bibinfo {pages} {236401} (\bibinfo {year}
  {2004})}\BibitemShut {NoStop}%
\bibitem [{\citenamefont {Tsvelik}(2003)}]{tsvelik_2003}%
  \BibitemOpen
  \bibfield  {author} {\bibinfo {author} {\bibfnamefont {Alexei~M.}\
  \bibnamefont {Tsvelik}},\ }\href {\doibase 10.1017/CBO9780511615832} {\emph
  {\bibinfo {title} {Quantum Field Theory in Condensed Matter Physics}}},\
  \bibinfo {edition} {2nd}\ ed.\ (\bibinfo  {publisher} {Cambridge University
  Press},\ \bibinfo {year} {2003})\BibitemShut {NoStop}%
\bibitem [{\citenamefont {Knizhnik}\ and\ \citenamefont
  {Zamolodchikov}(1984)}]{Knizhnik:1984nr}%
  \BibitemOpen
  \bibfield  {author} {\bibinfo {author} {\bibfnamefont {V.~G.}\ \bibnamefont
  {Knizhnik}}\ and\ \bibinfo {author} {\bibfnamefont {A.~B.}\ \bibnamefont
  {Zamolodchikov}},\ }\bibfield  {title} {\enquote {\bibinfo {title} {{Current
  Algebra and Wess-Zumino Model in Two-Dimensions}},}\ }\href {\doibase
  10.1016/0550-3213(84)90374-2} {\bibfield  {journal} {\bibinfo  {journal}
  {Nucl. Phys. B}\ }\textbf {\bibinfo {volume} {247}},\ \bibinfo {pages}
  {83--103} (\bibinfo {year} {1984})},\ \bibinfo {note}
  {[,690(1984)]}\BibitemShut {NoStop}%
\bibitem [{\citenamefont {Zamolodchikov}\ and\ \citenamefont
  {Fateev}(1986)}]{Zamolodchikov:1986bd}%
  \BibitemOpen
  \bibfield  {author} {\bibinfo {author} {\bibfnamefont {A.~B.}\ \bibnamefont
  {Zamolodchikov}}\ and\ \bibinfo {author} {\bibfnamefont {V.~A.}\ \bibnamefont
  {Fateev}},\ }\bibfield  {title} {\enquote {\bibinfo {title} {{Operator
  Algebra and Correlation Functions in the Two-Dimensional Wess-Zumino SU(2) x
  SU(2) Chiral Model}},}\ }\href@noop {} {\bibfield  {journal} {\bibinfo
  {journal} {Sov. J. Nucl. Phys.}\ }\textbf {\bibinfo {volume} {43}},\ \bibinfo
  {pages} {657--664} (\bibinfo {year} {1986})},\ \bibinfo {note} {[Yad.
  Fiz.43,1031(1986)]}\BibitemShut {NoStop}%
\bibitem [{\citenamefont {Gepner}\ and\ \citenamefont
  {Witten}(1986)}]{Gepner:1986wi}%
  \BibitemOpen
  \bibfield  {author} {\bibinfo {author} {\bibfnamefont {Doron}\ \bibnamefont
  {Gepner}}\ and\ \bibinfo {author} {\bibfnamefont {Edward}\ \bibnamefont
  {Witten}},\ }\bibfield  {title} {\enquote {\bibinfo {title} {{String Theory
  on Group Manifolds}},}\ }\href {\doibase 10.1016/0550-3213(86)90051-9}
  {\bibfield  {journal} {\bibinfo  {journal} {Nucl. Phys. B}\ }\textbf
  {\bibinfo {volume} {278}},\ \bibinfo {pages} {493--549} (\bibinfo {year}
  {1986})}\BibitemShut {NoStop}%
\bibitem [{\citenamefont {Cardy}(1984)}]{Cardy:1984rp}%
  \BibitemOpen
  \bibfield  {author} {\bibinfo {author} {\bibfnamefont {John~L.}\ \bibnamefont
  {Cardy}},\ }\bibfield  {title} {\enquote {\bibinfo {title} {{Conformal
  invariance and universality in finite-size scaling}},}\ }\href@noop {}
  {\bibfield  {journal} {\bibinfo  {journal} {J. Phys. A}\ }\textbf {\bibinfo
  {volume} {17}},\ \bibinfo {pages} {L385--L387} (\bibinfo {year}
  {1984})}\BibitemShut {NoStop}%
\bibitem [{\citenamefont {Affleck}(1998)}]{Affleck_1998}%
  \BibitemOpen
  \bibfield  {author} {\bibinfo {author} {\bibfnamefont {Ian}\ \bibnamefont
  {Affleck}},\ }\bibfield  {title} {\enquote {\bibinfo {title} {Exact
  correlation amplitude for the heisenberg antiferromagnetic chain},}\ }\href
  {\doibase 10.1088/0305-4470/31/20/002} {\bibfield  {journal} {\bibinfo
  {journal} {Journal of Physics A: Mathematical and General}\ }\textbf
  {\bibinfo {volume} {31}},\ \bibinfo {pages} {4573--4581} (\bibinfo {year}
  {1998})}\BibitemShut {NoStop}%
\bibitem [{\citenamefont {Hikihara}\ \emph {et~al.}(2017)\citenamefont
  {Hikihara}, \citenamefont {Furusaki},\ and\ \citenamefont
  {Lukyanov}}]{PhysRevB.96.134429}%
  \BibitemOpen
  \bibfield  {author} {\bibinfo {author} {\bibfnamefont {Toshiya}\ \bibnamefont
  {Hikihara}}, \bibinfo {author} {\bibfnamefont {Akira}\ \bibnamefont
  {Furusaki}}, \ and\ \bibinfo {author} {\bibfnamefont {Sergei}\ \bibnamefont
  {Lukyanov}},\ }\bibfield  {title} {\enquote {\bibinfo {title} {Dimer
  correlation amplitudes and dimer excitation gap in spin-$\frac{1}{2}$ xxz and
  heisenberg chains},}\ }\href {\doibase 10.1103/PhysRevB.96.134429} {\bibfield
   {journal} {\bibinfo  {journal} {Phys. Rev. B}\ }\textbf {\bibinfo {volume}
  {96}},\ \bibinfo {pages} {134429} (\bibinfo {year} {2017})}\BibitemShut
  {NoStop}%
\bibitem [{\citenamefont {Vekua}\ and\ \citenamefont
  {Sun}(2016)}]{PhysRevB.94.014417}%
  \BibitemOpen
  \bibfield  {author} {\bibinfo {author} {\bibfnamefont {T.}~\bibnamefont
  {Vekua}}\ and\ \bibinfo {author} {\bibfnamefont {G.}~\bibnamefont {Sun}},\
  }\bibfield  {title} {\enquote {\bibinfo {title} {Exact asymptotic correlation
  functions of bilinear spin operators of the heisenberg antiferromagnetic
  spin-$\frac{1}{2}$ chain},}\ }\href {\doibase 10.1103/PhysRevB.94.014417}
  {\bibfield  {journal} {\bibinfo  {journal} {Phys. Rev. B}\ }\textbf {\bibinfo
  {volume} {94}},\ \bibinfo {pages} {014417} (\bibinfo {year}
  {2016})}\BibitemShut {NoStop}%
\bibitem [{\citenamefont {Li}\ \emph {et~al.}(2019{\natexlab{b}})\citenamefont
  {Li}, \citenamefont {Lu},\ and\ \citenamefont {Wang}}]{Wang:2019a}%
  \BibitemOpen
  \bibfield  {author} {\bibinfo {author} {\bibfnamefont {Yingzhou}\
  \bibnamefont {Li}}, \bibinfo {author} {\bibfnamefont {Jianfeng}\ \bibnamefont
  {Lu}}, \ and\ \bibinfo {author} {\bibfnamefont {Zhe}\ \bibnamefont {Wang}},\
  }\bibfield  {title} {\enquote {\bibinfo {title} {Coordinatewise descent
  methods for leading eigenvalue problem},}\ }\href {\doibase
  10.1137/18M1202505} {\bibfield  {journal} {\bibinfo  {journal} {SIAM J. on
  Sc. Comp.}\ }\textbf {\bibinfo {volume} {41}},\ \bibinfo {pages}
  {A2681--A2716} (\bibinfo {year} {2019}{\natexlab{b}})},\ \Eprint
  {http://arxiv.org/abs/https://doi.org/10.1137/18M1202505}
  {https://doi.org/10.1137/18M1202505} \BibitemShut {NoStop}%
\bibitem [{\citenamefont {Wang}\ \emph {et~al.}(2019)\citenamefont {Wang},
  \citenamefont {Li},\ and\ \citenamefont {Lu}}]{Wang:2019b}%
  \BibitemOpen
  \bibfield  {author} {\bibinfo {author} {\bibfnamefont {Zhe}\ \bibnamefont
  {Wang}}, \bibinfo {author} {\bibfnamefont {Yingzhou}\ \bibnamefont {Li}}, \
  and\ \bibinfo {author} {\bibfnamefont {Jianfeng}\ \bibnamefont {Lu}},\
  }\bibfield  {title} {\enquote {\bibinfo {title} {Coordinate descent full
  configuration interaction},}\ }\href {\doibase 10.1021/acs.jctc.9b00138}
  {\bibfield  {journal} {\bibinfo  {journal} {J. of Chem. Th. and Comp.}\
  }\textbf {\bibinfo {volume} {15}},\ \bibinfo {pages} {3558--3569} (\bibinfo
  {year} {2019})},\ \bibinfo {note} {pMID: 31042383},\ \Eprint
  {http://arxiv.org/abs/https://doi.org/10.1021/acs.jctc.9b00138}
  {https://doi.org/10.1021/acs.jctc.9b00138} \BibitemShut {NoStop}%
\bibitem [{\citenamefont {Karbach}\ \emph {et~al.}(1998)\citenamefont
  {Karbach}, \citenamefont {Hu},\ and\ \citenamefont
  {M\"{u}ller}}]{Karbach:1998}%
  \BibitemOpen
  \bibfield  {author} {\bibinfo {author} {\bibfnamefont {Michael}\ \bibnamefont
  {Karbach}}, \bibinfo {author} {\bibfnamefont {Kun}\ \bibnamefont {Hu}}, \
  and\ \bibinfo {author} {\bibfnamefont {Gerhard}\ \bibnamefont {M\"{u}ller}},\
  }\bibfield  {title} {\enquote {\bibinfo {title} {Introduction to the bethe
  ansatz ii},}\ }\href {\doibase 10.1063/1.168740} {\bibfield  {journal}
  {\bibinfo  {journal} {Computers in Physics}\ }\textbf {\bibinfo {volume}
  {12}},\ \bibinfo {pages} {565--573} (\bibinfo {year} {1998})},\ \Eprint
  {http://arxiv.org/abs/https://aip.scitation.org/doi/pdf/10.1063/1.168740}
  {https://aip.scitation.org/doi/pdf/10.1063/1.168740} \BibitemShut {NoStop}%
\bibitem [{\citenamefont {Nahum}\ \emph {et~al.}(2015)\citenamefont {Nahum},
  \citenamefont {Serna}, \citenamefont {Chalker}, \citenamefont {Ortu\~no},\
  and\ \citenamefont {Somoza}}]{PhysRevLett.115.267203}%
  \BibitemOpen
  \bibfield  {author} {\bibinfo {author} {\bibfnamefont {Adam}\ \bibnamefont
  {Nahum}}, \bibinfo {author} {\bibfnamefont {P.}~\bibnamefont {Serna}},
  \bibinfo {author} {\bibfnamefont {J.~T.}\ \bibnamefont {Chalker}}, \bibinfo
  {author} {\bibfnamefont {M.}~\bibnamefont {Ortu\~no}}, \ and\ \bibinfo
  {author} {\bibfnamefont {A.~M.}\ \bibnamefont {Somoza}},\ }\bibfield  {title}
  {\enquote {\bibinfo {title} {Emergent so(5) symmetry at the n\'eel to
  valence-bond-solid transition},}\ }\href {\doibase
  10.1103/PhysRevLett.115.267203} {\bibfield  {journal} {\bibinfo  {journal}
  {Phys. Rev. Lett.}\ }\textbf {\bibinfo {volume} {115}},\ \bibinfo {pages}
  {267203} (\bibinfo {year} {2015})}\BibitemShut {NoStop}%
\bibitem [{\citenamefont {Poland}\ \emph {et~al.}(2019)\citenamefont {Poland},
  \citenamefont {Rychkov},\ and\ \citenamefont {Vichi}}]{Poland:2018epd}%
  \BibitemOpen
  \bibfield  {author} {\bibinfo {author} {\bibfnamefont {David}\ \bibnamefont
  {Poland}}, \bibinfo {author} {\bibfnamefont {Slava}\ \bibnamefont {Rychkov}},
  \ and\ \bibinfo {author} {\bibfnamefont {Alessandro}\ \bibnamefont {Vichi}},\
  }\bibfield  {title} {\enquote {\bibinfo {title} {{The Conformal Bootstrap:
  Theory, Numerical Techniques, and Applications}},}\ }\href {\doibase
  10.1103/RevModPhys.91.015002} {\bibfield  {journal} {\bibinfo  {journal}
  {Rev. Mod. Phys.}\ }\textbf {\bibinfo {volume} {91}},\ \bibinfo {pages}
  {015002} (\bibinfo {year} {2019})},\ \Eprint
  {http://arxiv.org/abs/1805.04405} {arXiv:1805.04405 [hep-th]} \BibitemShut
  {NoStop}%
\bibitem [{\citenamefont {Towns}\ \emph {et~al.}(2014)\citenamefont {Towns},
  \citenamefont {Cockerill}, \citenamefont {Dahan}, \citenamefont {Foster},
  \citenamefont {Gaither}, \citenamefont {Grimshaw}, \citenamefont {Hazlewood},
  \citenamefont {Lathrop}, \citenamefont {Lifka}, \citenamefont {Peterson},
  \citenamefont {Roskies}, \citenamefont {Scott},\ and\ \citenamefont
  {Wilkins-Diehr}}]{xsede}%
  \BibitemOpen
  \bibfield  {author} {\bibinfo {author} {\bibfnamefont {J.}~\bibnamefont
  {Towns}}, \bibinfo {author} {\bibfnamefont {T.}~\bibnamefont {Cockerill}},
  \bibinfo {author} {\bibfnamefont {M.}~\bibnamefont {Dahan}}, \bibinfo
  {author} {\bibfnamefont {I.}~\bibnamefont {Foster}}, \bibinfo {author}
  {\bibfnamefont {K.}~\bibnamefont {Gaither}}, \bibinfo {author} {\bibfnamefont
  {A.}~\bibnamefont {Grimshaw}}, \bibinfo {author} {\bibfnamefont
  {V.}~\bibnamefont {Hazlewood}}, \bibinfo {author} {\bibfnamefont
  {S.}~\bibnamefont {Lathrop}}, \bibinfo {author} {\bibfnamefont
  {D.}~\bibnamefont {Lifka}}, \bibinfo {author} {\bibfnamefont {G.~D.}\
  \bibnamefont {Peterson}}, \bibinfo {author} {\bibfnamefont {R.}~\bibnamefont
  {Roskies}}, \bibinfo {author} {\bibfnamefont {J.~R.}\ \bibnamefont {Scott}},
  \ and\ \bibinfo {author} {\bibfnamefont {N.}~\bibnamefont {Wilkins-Diehr}},\
  }\bibfield  {title} {\enquote {\bibinfo {title} {Xsede: Accelerating
  scientific discovery},}\ }\href {\doibase 10.1109/MCSE.2014.80} {\bibfield
  {journal} {\bibinfo  {journal} {Computing in Science \& Engineering}\
  }\textbf {\bibinfo {volume} {16}},\ \bibinfo {pages} {62--74} (\bibinfo
  {year} {2014})}\BibitemShut {NoStop}%
\end{thebibliography}%

\appendix
\section{Degeneracy with lattice fermions}\label{app:degeneracy}

In this appendix we discuss a curious symmetry of a class of lattice fermion Hamiltonians in one dimension that is generalizable to higher dimensions, which makes the degeneracy of all energies, including that of the ground state, to be an even number when the lattice size is a multiple of four. We can formulate this result as the following theorem: 

\begin{thm*}
Degeneracy of all energy eigenvalues of a spin-half fermion system on a periodic lattice with translation symmetry, spin and charge symmetries, spin-charge flip symmetry and parity symmetry, will be an even number when the lattice size is a multiple of four.
\end{thm*}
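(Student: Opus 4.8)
The plan is to exhibit, both commuting with $H$, a unitary involution $A$ (so $A^2=1$) together with a second unitary $B$ that anticommutes with it, $AB=-BA$. Any such pair forces every energy eigenspace to be even-dimensional: since $A,B$ commute with $H$ they act within a fixed eigenspace $V_E$, and on $V_E$ the relation $AB=-BA$ shows that $B$ carries the $+1$ eigenspace of $A$ isomorphically onto the $-1$ eigenspace, so the two have equal dimension and $\dim V_E$ is even. The natural candidate for $A$ is the spin-charge flip $C_\uparrow$, the staggered particle-hole transformation $C_\uparrow c_{j\uparrow}C_\uparrow=(-1)^j c_{j\uparrow}^\dagger$, which is a symmetry by hypothesis and satisfies $C_\uparrow^2=1$; the candidates for $B$ are the lattice translation $T_a$ and the spatial parity (reflection) $P$. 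The spin and charge $SU(2)$ symmetries enter by guaranteeing that the spin-up fermion parity $P_\uparrow=(-1)^{N_\uparrow}$ is a good conserved quantum number that commutes with all the operators above for even $L$.

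First I would record the ``almost-commutation'' relations at the level of automorphisms of the operator algebra. Because $C_\uparrow$ carries the staggering factor $(-1)^j$, which is shifted by a translation and is odd under a bond-centered reflection, one finds
\[
  T_a C_\uparrow = \phi\, P_\uparrow\, C_\uparrow T_a, \qquad P C_\uparrow = \eta\, P_\uparrow\, C_\uparrow P,
\]
where $\phi,\eta$ are global phases and $P_\uparrow$ appears because in each case the two sides differ by a sign on every spin-up operator while agreeing on the spin-down ones. Restricting to a fixed eigenvalue $s=\pm1$ of the conserved $P_\uparrow$, these become honest (anti)commutation relations on that sector: $C_\uparrow$ anticommutes with $T_a$ iff $\phi s=-1$, and $C_\uparrow$ anticommutes with $P$ iff $\eta s=-1$.

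The heart of the argument is then to compute the two cocycle phases $\phi$ and $\eta$ and to show that their ratio changes sign between $4\mid L$ and $L\equiv2\pmod4$, with the constants arranged so that $\eta=-\phi$ precisely when $4\mid L$. Granting this, for \emph{either} value of $s$ at least one of the pairs $(C_\uparrow,T_a)$ or $(C_\uparrow,P)$ genuinely anticommutes on the $P_\uparrow=s$ sector, so the involution lemma above forces even degeneracy of every energy level within each sector, and hence overall, since all of $C_\uparrow$, $T_a$, $P$ preserve $s$. Conversely, when $L\equiv2\pmod4$ one instead gets $\eta=\phi$: both pairs commute in one of the two $P_\uparrow$-sectors and no degeneracy is forced, reproducing the observed contrast between $L=4n$ and $L=4n-2$.

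The main obstacle is exactly this sign computation. The phase $\eta$ is sensitive to the fermionic reordering incurred when the reflection reverses the order of the $\sim L$ Jordan-Wigner strings, contributing a factor $(-1)^{L(L-1)/2}$, which for even $L$ equals $(-1)^{L/2}$ and so is $+1$ for $4\mid L$ but $-1$ for $L\equiv2\pmod4$; meanwhile $\phi$ carries its own sign from translating a fermion past the boundary. Disentangling these Klein factors unambiguously, and fixing conventions so that $C_\uparrow$, $P$ and $T_a$ are genuine unitaries commuting with $H$ (possibly after dressing by conserved operators such as $P_\uparrow$ or $(-1)^{N}$), is the delicate bookkeeping that produces the $4\mid L$ criterion. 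I would carry this out in the Majorana representation, where $C_\uparrow$ simply flips $\gamma_j^2\mapsto-\gamma_j^2$ and $T_a$ exchanges $\gamma^1\leftrightarrow\gamma^2$ with alternating signs, since there the reordering signs are cleanest to track.
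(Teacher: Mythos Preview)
Your approach is correct and takes a genuinely different route from the paper. The paper labels energy eigenstates by the commuting quantum numbers $(k,l_s,s_3,l_q,q_3)$ and computes directly that $C_\uparrow$ swaps $(l_s,s_3)\leftrightarrow(l_q,q_3)$ while shifting $k$ by $(s_3-q_3+L/2+1)\pi/a$; when $4\mid L$ this permutation of labels has no fixed points, so every eigenspace is paired with a distinct one. Your argument instead works sectorwise in $P_\uparrow$ and exploits a genuine operator anticommutation. Both rest on the same relation $C_\uparrow T_a C_\uparrow = -P_\uparrow T_a$ (the paper writes it as $(-1)^{S_3-Q_3+L/2+1}T_a$, and $S_3-Q_3=N_\uparrow-L/2$), so in your notation $\phi=-1$ independently of $L$; the $L\bmod 4$ dependence indeed sits entirely in $\eta=(-1)^{L(L-1)/2}=(-1)^{L/2}$, exactly the reordering sign you anticipated. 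The paper's approach makes the full $SU(2)_s\times SU(2)_c$ visible --- $C_\uparrow$ literally swaps the spin and charge Casimirs --- and in fact shows that parity is dispensable: once $4\mid L$ the momentum shift under $C_\uparrow$ is already $\pi/a$ on the $s_3=q_3$ sector, so $C_\uparrow$ alone is fixed-point free. Your approach uses $P$ essentially in the $P_\uparrow=-1$ sector, but what it buys is a clean projective-representation picture: the even degeneracy is forced by an anticommuting pair of symmetries rather than by bookkeeping of quantum numbers, and the ``delicate'' phase you worried about is in fact a one-line vacuum computation.
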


\begin{proof}
From the assumptions in the theorem, the lattice Hamiltonian $H$ commutes with $T_a$
\begin{align}
  T_a = \exp\Big(-ia\sum_{k}k (c_{k\uparrow}^\dagger c_{k\uparrow} + c_{k\downarrow}^\dagger c_{k\downarrow})\Big),
\end{align}
where
\begin{align}
  c_{k\alpha} = \frac{1}{\sqrt{L}} \sum_j c_{j\alpha} \e^{ikaj},~ k = \frac{2\pi n}{aL},~ n = 1, \cdots, L,
\end{align}
and the spin and charge $SU(2)$ generators $S_i, Q_i, i=1,2,3$,
\begin{align}
  S_1 &= \frac{1}{2}\sum_{j} c_{j\downarrow}^\dagger c_{j\uparrow} + c_{j\uparrow}^\dagger c_{j\downarrow}, \nonumber\\
  S_2 &= \frac{1}{2}\sum_{j} i(c_{j\downarrow}^\dagger c_{j\uparrow} - c_{j\uparrow}^\dagger c_{j\downarrow}), \nonumber\\
  S_3 &= \frac{1}{2}\sum_{j} c_{j\uparrow}^\dagger c_{j\uparrow} - c_{j\downarrow}^\dagger c_{j\downarrow}, \nonumber\\
  Q_1 &= \frac{1}{2}\sum_{j} (-1)^j (c_{j\downarrow}^\dagger c_{j\uparrow}^\dagger + c_{j\uparrow} c_{j\downarrow}), \nonumber\\
  Q_2 &= \frac{1}{2}\sum_{j}(-1)^j i(c_{j\downarrow}^\dagger c_{j\uparrow}^\dagger - c_{j\uparrow} c_{j\downarrow}), \nonumber\\
  Q_3 &= \frac{1}{2}\sum_{j} c_{j\uparrow} c_{j\uparrow}^\dagger - c_{j\downarrow}^\dagger c_{j\downarrow}.
\end{align}
Further, $H$ is also assumed to commute with the spin-charge flip operator $C_\uparrow = C_\uparrow^\dagger$ and parity operator $P = P^\dagger$. These operators can be conveniently represented on fermion creation and annihilation operators in the following way,
\begin{align}
    C_\uparrow c_{i\uparrow}C_\uparrow = (-1)^ic_{i\uparrow}^\dagger, &\qquad
    C_\uparrow c_{k\uparrow} C_\uparrow = c_{\pi/a-k\uparrow}^\dagger, \nonumber \\
    Pc_{i\alpha} P = c_{(L+1-i)\alpha}, &\qquad 
    Pc_{k\alpha} P = c_{-k\alpha}. 
\end{align}
Note that the particle number $N$ is not an independent operator since $N = L - 2Q_3$. An example lattice Hamiltonian with the above symmetries is $H_J$ defined in \cref{eq:HJ}.

Since $H, T_a, S^2, S_3, Q^2$ and $Q_3$ commutes with each other, we can label the energy eigenstates by $|E, k, l_s, s_3, l_q, q_3, \alpha\>$, where $E, \e^{iak}, l_s(l_s+1), s_3, l_q(l_q+1), q_3$ are eigenvalues of $H, T_a, S^2, S_3, Q^2, Q_3$, and $\alpha$ denotes possible additional quantum numbers. Since $P$ satisfies the relationships $PT_a P = T_a^\dagger$, $PS_i P = S_i$, and $PQ_i P = Q_i$, we have
\begin{align}
    P |E, k, l_s, s_z, l_q, q_z, \alpha\> \propto |E, -k, l_s, s_z, l_q, q_z, \alpha\>.
\end{align}
This implies that the pair of states with $k=\pm k_0$ but all other quantum numbers being the same will be degenerate as long as 
$k_0\neq 0,\pi/a$. This is easily understandable since a state with a fixed lattice momentum will have a partner with a negative momentum and both will have the same energy as long as parity is a symmetry of the theory.

Interestingly, we will now show that there are additional pairs of degenerate states due to the $C_\uparrow$ operator, which has the following properties:
\begin{align}
C_\uparrow T_a C_\uparrow &= \e^{-ia\sum_{k}k (C_\uparrow c_{k\uparrow}^\dagger C_\uparrow C_\uparrow c_{k\uparrow}C_\uparrow + c_{k\downarrow}^\dagger c_{k\downarrow})} \nonumber\\
& = \e^{-ia\sum_{k}k (c_{\pi/a-k\uparrow} c_{\pi/a-k\uparrow}^\dagger + c_{k\downarrow}^\dagger c_{k\downarrow})} \nonumber\\
& = \e^{-i\sum_{k}(\pi - ak) c_{k\uparrow} c_{k\uparrow}^\dagger + ak c_{k\downarrow}^\dagger c_{k\downarrow}} \nonumber\\
& = \e^{-i\sum_{k}(ak - \pi) (c_{k\uparrow}^\dagger c_{k\uparrow}-1) + ak c_{k\downarrow}^\dagger c_{k\downarrow}} \nonumber\\
& = (-1)^{S_3 - Q_3 + L/2 +1}T_a,
\end{align}
and
\begin{align}
C_\uparrow S_iC_\uparrow = Q_i, \quad C_\uparrow Q_iC_\uparrow = S_i.
\end{align}
Using these relations it is easy to show that
\begin{align}
& \quad C_\uparrow|E, k, l_s, s_3, l_q, q_3, \alpha\> \nonumber\\
& \propto |E, k + (s_3 - q_3 + L/2 +1)\frac{\pi}{a}, l_q, q_3, l_s, s_3, \alpha\>.
\label{eq:Cdeg}
\end{align}
This relation shows that additional pairs of state can have the same energy. For example when $l_s \neq l_q$ or $s_3 \neq q_3$, the two states $|E, k, l_s, s_3, l_q, q_3, \alpha\>$ and $|E, k + (s_3 - q_3 + L/2 +1)\frac{\pi}{a}, l_q, q_3, l_s, s_3, \alpha\>$ are different but have the same energy irrespective of the value of $k$. Thus, the only situation where an energy eigenstate could remain non-degenerate is when $k=0$ or $\pi/a$, $l_s = l_q = j$ and $s_3 = q_3 = m$. In this case \cref{eq:Cdeg} mixes the states $|E, k, j, m, j, m, \alpha\>$ and 
$|E, (k + L/2 + 1)\pi/a, j, m, j, m, \alpha\>$. Indeed when $L=4n-2$ these two states are identical and $C_\uparrow$ cannot be used to pair degenerate states. However, when $L=4n$, $C_\uparrow$ mixes  $|E, 0, j, m, j, m, \alpha\>$ with $|E, \pi/a, j, m, j, m, \alpha\>$. This means the whole spectrum has an even degeneracy when $L=4n$. 
\end{proof}

\section{Lagrangian of the continuum model}
\label{app:lagrangian}

In this appendix we construct the Euclidean Lagrangian for the continuum Hamiltonian given in \cref{eq:contmodel}. For this purpose we define two flavors of two-component Grassmann variables $\chi_{\alpha}$ and $\bar{\chi}_{\alpha}$, using the following map to Fock space operators,
\begin{align}
  \chi_\alpha:=
  \begin{pmatrix}
    \psi_{\alpha,L} \\
    \psi_{\alpha,R}
  \end{pmatrix} ,\quad \bar{\chi}_\alpha:=
   (\psi^\dagger_{\alpha,L}, \ \psi^\dagger_{\alpha,R}) \gamma^1
\end{align}
and choose $\gamma^1 = \sigma^1$, $\gamma^2 = -\sigma^2$, $\gamma^3 = i\gamma^1\gamma^2 = \sigma^3$, where $\sigma^i$ are the Pauli matrices. The free Hamiltonian theory is then mapped to the theory with the Lagrangian
\begin{align}
{\cal L}_0 \ =\ \bar{\chi}_\alpha \gamma^\mu \partial_\mu \chi_{\alpha}.
\end{align}
In order to construct the interaction terms of the Lagrangian, we normal order the interaction terms in \cref{eq:contmodel}. A little bit of algebra gives
\thinmuskip=0mu
\medmuskip=0mu 
\thickmuskip=0mu 
\begin{align}
:\mathcal{J}_{sL}^i\mathcal{J}_{sR}^i:~ &= \frac{1}{2}(\bar{\chi}_{\alpha} \gamma^3\chi_\alpha)^2- \frac{1}{2}(\bar{\chi}_{\alpha} \chi_\alpha)^2 - \frac{1}{4}(\bar{\chi}_{\alpha} \gamma^\mu\chi_\alpha)^2, 
\label{eq:HLmap1}
\\
:\mathcal{J}_{cL}^i\mathcal{J}_{cR}^i:~ &= \frac{1}{2}(\bar{\chi}_{\alpha} \gamma^3\chi_\alpha)^2 + \frac{1}{2}(\bar{\chi}_{\alpha} \chi_\alpha)^2 + \frac{1}{4}(\bar{\chi}_{\alpha} \gamma^\mu\chi_\alpha)^2. 
\label{eq:HLmap2}
\end{align}
\thinmuskip=3mu
\medmuskip=4mu 
\thickmuskip=5mu 
Therefore the interaction Hamiltonian in \cref{eq:contmodel}
\begin{align}
H_{\rm int} \ =\ \int \d x\  \lambda_s\mathcal{J}_{sL}^i\mathcal{J}_{sR}^i + \lambda_c\mathcal{J}_{cL}^i\mathcal{J}_{cR}^i
\end{align}
is mapped to the interaction Lagrangian 
\thinmuskip=1mu
\medmuskip=2mu 
\thickmuskip=3mu 
\begin{align}
    \mathcal{L}_{\rm int} = \frac{g_1}{2} (\bar{\chi}_{\alpha} \gamma^3\chi_\alpha)^2 + \frac{g_2}{2} (\bar{\chi}_{\alpha} \chi^\alpha)^2 +  \frac{g_3}{2}(\bar{\chi}_{\alpha} \gamma^\mu\chi^\alpha)^2,
\label{eq:intL}
\end{align}
\thinmuskip=3mu
\medmuskip=4mu 
\thickmuskip=5mu 
where $g_1=\lambda_s+\lambda_c = 4\varepsilon$ and $g_2=2g_3=-\lambda_s+\lambda_c=U/J$. Therefore our lattice model contains all the three allowed four-fermion couplings in a relativistic two-flavor fermion model: the chiral-mass GN coupling $g_1$, the normal GN coupling $g_2$ and the Thirring coupling $g_3$. Our model is a lattice regularized chiral-mass GN model, while the Hubbard coupling $U$ introduces a linear combination of GN and Thirring terms.

A natural question now is whether we can find three \emph{independent} current couplings, and three \emph{independent} lattice couplings, which map to the three GN-Thirring couplings? Note that we know the massless Thirring model is conformal and invariant under $\big(SU(2)_{sL}\times SU(2)_{sR}\big) \times \big(U(1)_{cL}\times U(1)_{cR}\big)$. Therefore in order to introduce a Thirring coupling, we must break the charge $SU(2)_c$ symmetry down to $U(1)_c$, while preserving all the other chiral symmetries. Clearly the term $\mathcal{J}_{cL}^3\mathcal{J}_{cR}^3$ is what we need, and indeed maps to the Thirring coupling in the Lagrangian language,
\begin{align}
:\mathcal{J}_{cL}^3\mathcal{J}_{cR}^3: \ &=\  \frac{1}{4}(\bar{\chi}_{\alpha} \gamma^\mu\chi_\alpha)^2.
\label{eq:HLmap3}
\end{align}
When analyzing the RG flow it turns out to be more convenient if we introduce the three independent current couplings as follows,
\begin{align}
\mathcal{L}_{\rm int} &= \lambda_s:\mathcal{J}_{sL}^i\mathcal{J}_{sR}^i: + \lambda_c:\mathcal{J}_{cL}^i\mathcal{J}_{cR}^i: \nonumber\\
&\quad +\tilde{\lambda}_{c}:-\mathcal{J}_{cL}^1\mathcal{J}_{cR}^1 - \mathcal{J}_{cL}^2\mathcal{J}_{cR}^2 + \mathcal{J}_{cL}^3\mathcal{J}_{cR}^3:,
\end{align}
because those are the eigen directions of the RG flow. The beta functions of the three couplings are given by,
\begin{align}
\frac{\d\lambda_s}{\d\log\mu} &= -\frac{\lambda_s^2}{2\pi}, \nonumber \\
\frac{\d\lambda_{c}}{\d\log\mu} &= -\frac{ \lambda_{c} - \tilde{\lambda}_{c}}{2\pi}\lambda_{c},
\nonumber \\
\frac{\d\tilde{\lambda}_{c}}{\d\log\mu} &= \frac{ \lambda_{c} - \tilde{\lambda}_{c}}{2\pi}\tilde{\lambda}_{c}.
\end{align}
The RG flow diagram is shown in \cref{fig:flow-3d}.

\begin{figure}[htb]
\includegraphics[width=0.4\textwidth]{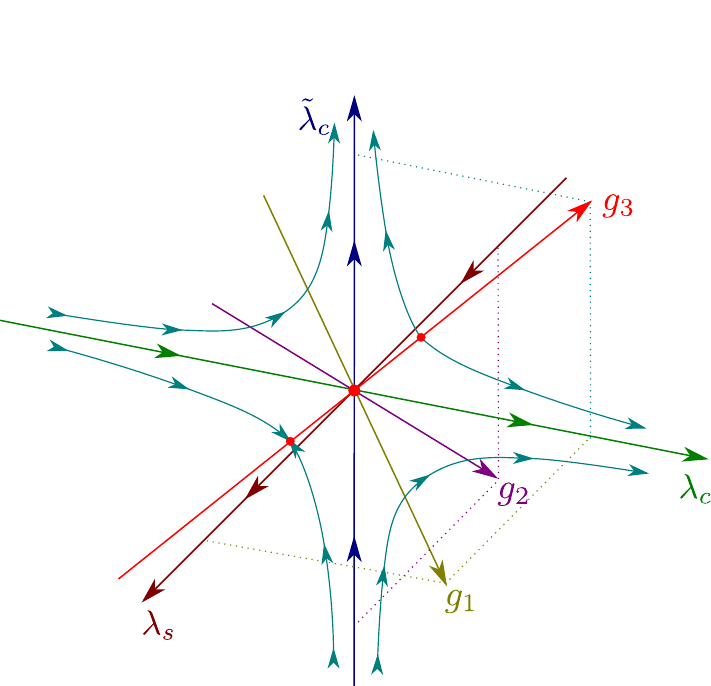}
  \caption{RG flow in the three dimensional four-fermion coupling space of two-flavor massless Dirac fermions in 2d invariant under $U(2) \times \mathbb{Z}_2^\chi$ flavor symmetries. All the axes are eigen directions of the RG flow. In addition, the red line of the Thirring coupling $g_3$ is a line of fixed points.}
  \label{fig:flow-3d}
\end{figure}

Using the relations given by \cref{eq:HLmap1,eq:HLmap2,eq:HLmap3} we see that these couplings are related to $g_1,g_2$ and $g_3$ via the linear transformation
\begin{align}
    \begin{pmatrix}
    g_1 \\
    g_2 \\
    2g_3
    \end{pmatrix}
    =
    \begin{pmatrix}
    1 & 1 & -1 \\
    -1 & 1 & -1 \\
    -1 & 1 & 1
    \end{pmatrix}
    \begin{pmatrix}
    \lambda_s \\
    \lambda_{c} \\
    \tilde{\lambda}_{c} \\
    \end{pmatrix}.
\end{align}
Notice that the pure Thirring coupling $g_3$ is obtained when $\lambda_s = \lambda_c - \tilde{\lambda}_c = 0$. This direction is special because the beta function vanishes for all the couplings and we obtain a line of fixed points. In \cref{tab:coupling-symmetries} we summarize the symmetries that are preserved by the various couplings. 
\begin{table}[htb]
  \centering
  \renewcommand{\arraystretch}{1.2}
  \setlength{\tabcolsep}{4pt}
  \begin{tabular}{c|c}
    \TopRule
     Coupling & Symmetry \\ \MidRule
    $\lambda_s$ & $\big(SU(2)_{s(L=R)}\times SU(2)_{cL}\times SU(2)_{cR}\big)/\mathbb{Z}_2$ \\
    $\lambda_c $ & $\big( SU(2)_{sL}\times SU(2)_{sR}\times SU(2)_{c(L=R)}\big) /\mathbb{Z}_2$ \\
    $\tilde\lambda_c$ & $\big( SU(2)_{sL}\times SU(2)_{sR}\times SU(2)_{c(L=R)}\big) /\mathbb{Z}_2$ \\
    $g_1$ & $\big(SU(2)_{s(L=R)}\times SU(2)_{c(L=R)}\big)/\mathbb{Z}_2 \times \mathbb{Z}_2^\chi$\\
    $g_2$ & $\big(SU(2)_{s(L=R)}\times SU(2)_{c(L=R)}\big)/\mathbb{Z}_2 \times 
    \mathbb{Z}_2^\chi$\\
    $g_3$ & $\big(SU(2)_{sL}\times SU(2)_{sR}\big) \times \big(U(1)_{cL}\times U(1)_{cR}\big)$\\ 
    \BotRule
  \end{tabular}
  \caption{Various interaction couplings and the subgroup of the full chiral symmetry group of the free theory that they preserve. The free theory is invariant under the chiral symmetry group $O_L(4) \times O_R(L)$ while a generic point in the three dimensional coupling constant space is invariant under $\big(SU_{s(L=R)}(2)\times U_{c(L=R)}(1)\big)/\mathbb{Z}_2$.}
  \label{tab:coupling-symmetries}
\end{table}

In our lattice model we introduced two independent couplings $H_J^\varepsilon$ and $H_U$. In order to explore the full three dimensional space discussed above we will need one more independent coupling. Using the same symmetry argument above, we need a lattice interaction which breaks $SU(2)_c$ down to $U(1)_c$ while preserving particle hole symmetry. The most straightforward way to do this is to include the interaction
\begin{align}
H_V =\ V \sum_{\langle i j\rangle} (n_i - 1)(n_j - 1)
\end{align}
in the lattice Hamiltonian, where $n_i = n_{i\uparrow} + n_{j\downarrow}$ is the total fermion number operator at the site $i$. At the tree level in the continuum limit we can show that 
\thinmuskip=1mu
\medmuskip=2mu 
\thickmuskip=3mu 
\begin{align}
    H_V^\text{cont} &= \frac{1}{2}aV \int \d x (\mathcal{J}_{sL}^i\mathcal{J}_{sR}^i - \mathcal{J}_{cL}^i\mathcal{J}_{cR}^i + 4\mathcal{J}_{cL}^3\mathcal{J}_{cR}^3),
\end{align}
\thinmuskip=3mu
\medmuskip=4mu 
\thickmuskip=5mu 
which maps to the Lagrangian
\begin{align}
{\cal L}_V = aV \ \bigg( \frac{1}{2}(\bar{\chi}_{\alpha} \gamma^\mu\chi^\alpha)^2 - (\bar{\chi}_{\alpha}\chi^\alpha)^2\bigg).
\end{align}
Therefore we see that $H_U+H_V$ gives the Thirring coupling, while $H_U-H_V$ gives the usual GN model. It's interesting to observe that when both $U$ and $V$ are positive, $H_U$ favors the spin sector while $H_V$ favors the charge sector, and the frustration between them gives the conformal Thirring model. Furthermore, $H_V$ can also be included in the meron-cluster algorithm for a range couplings.

\end{document}